\def\LongVersion{}
\def\LongVersionEnd{}
\long\def\ShortVersion#1\ShortVersionEnd{}
\def\ShortVersion{}
\def\ShortVersionEnd{}
\long\def\LongVersion#1\LongVersionEnd{}
\newcommand{\Ignore}[1]{\ignorespaces}
\renewcommand*{\ALG@name}{Pseudocode}
\renewcommand{\paragraph}[1]{\par\noindent\textbf{#1}}
\newtheorem{theorem}{Theorem}[section]
\newtheorem{lemma}[theorem]{Lemma}
\newtheorem{corollary}[theorem]{Corollary}
\newtheorem{claim}[theorem]{Claim}
\newtheorem{GlobalTheorem}{Theorem}
\newtheorem*{corollary*}{Corollary}
\theoremstyle{definition}
\newtheorem{definition}[theorem]{Definition}
\newtheorem{remark}[theorem]{Remark}
\newtheorem*{definition*}{Definition}
\theoremstyle{plain}
\newenvironment{DenseItemize}[0]
{\begin{itemize}[nosep, leftmargin=*]}
{\end{itemize}}
\newenvironment{MathMaybe}[0]
{\begin{displaymath}\ignorespaces}
{\end{displaymath}\ignorespacesafterend}
\newenvironment{MathMaybe}[0]
{\begin{math}\ignorespaces}
{\end{math}}
\newcommand{\Left}{\left}
\newcommand{\Right}{\right}
\newcommand{\LLeft}{\left}
\newcommand{\RRight}{\right}
\newcommand{\Left}{\big}
\newcommand{\Right}{\big}
\newcommand{\LLeft}{\Big}
\newcommand{\RRight}{\Big}
\newenvironment{subproof}[1][\proofname]{%
\begin{proof}[#1]%
}{%
\end{proof}%
}
\def\qedlabel#1{\def\theqedlabel{#1}}
\newcommand{\Sect}{Sec.}
\newcommand{\Thm}{Thm.}
\newcommand{\Lem}{Lem.}
\begin{document}

\title{Approximating Generalized Network Design under (Dis)economies of Scale
with Applications to Energy Efficiency\footnote{An extended abstract of this paper is to appear in the 50th Annual ACM Symposium on the Theory of Computing (STOC 2018).}
\LongVersion 
\LongVersionEnd 
\ShortVersion 
(Extended Abstract)
\ShortVersionEnd 
}

\author{Yuval Emek\footnote{The work of Yuval Emek was supported in part by an Israeli Science Foundation grant number 1016/17.} , Shay Kutten\footnote{The work of Shay Kutten was supported in part by a grant from the ministry of science in the program that is joint with JSPS and in part by the BSF.} , Ron Lavi\footnote{The work of Ron Lavi was supported by a grant from the MINERVA foundation (ARCHES 2011) and by the ISF-NSFC joint research program (grant No.~2560/17).} , Yangguang Shi\footnote{The work of Yangguang Shi was partially supported at the Technion by a fellowship of the Israel Council for Higher Education.
}}
\affil{Technion - Israel Institute of Technology.\\
\texttt{\{yemek, kutten, ronlavi, shiyangguang\}@ie.technion.ac.il}}

\date{}


\maketitle

\begin{abstract}
In a \emph{generalized network design (GND)} problem, a set of
\emph{resources} are assigned (non-exclusively) to multiple 
\emph{requests}.
Each request contributes its weight to the resources it uses and the
total \emph{load} on a resource is then translated to the cost it incurs via a
a resource specific cost function.
Motivated by \emph{energy efficiency} applications, recently, there
is a growing interest in GND using cost functions that exhibit
\emph{(dis)economies of scale ((D)oS)}, namely, cost functions that appear
subadditive for small loads and superadditive for larger loads.

The current paper advances the existing literature on approximation algorithms
for GND problems with (D)oS cost functions in various aspects:
(1)
while the existing results are restricted to \emph{routing} requests in
undirected graphs, identifying the resources with the graph's edges,
the current paper presents a generic \emph{approximation framework} that
yields approximation results for a much wider family of requests (including
various types of \emph{Steiner tree} and \emph{Steiner forest} requests) in
both directed and undirected graphs, where the resources can be identified
with either the edges or the vertices;
(2)
while the existing results assume that a request contributes the same
weight to each resource it uses, our approximation framework allows for
\emph{unrelated} weights, thus providing the first non-trivial approximation
for the problem of \emph{scheduling unrelated parallel machines} with
(D)oS cost functions;
(3)
while most of the existing approximation algorithms are based on convex
programming, our approximation framework is fully \emph{combinatorial} and
runs in strongly polynomial time;
(4)
the family of (D)oS cost functions considered in the current paper is more
general than the one considered in the existing literature, providing a more
accurate abstraction for practical energy conservation scenarios;
and
(5)
we obtain the first approximation ratio for GND with (D)oS cost functions that
depends only on the parameters of the resources' technology and does not grow
with the number of resources, the number of requests, or their weights.
The design of our approximation framework relies heavily on Roughgarden's
\emph{smoothness} toolbox (JACM 2015), thus demonstrating the possible
usefulness of this toolbox in the area of approximation algorithms.
\end{abstract}

\noindent
\textbf{Keywords:}
Approximation algorithms,
generalized network design,
(dis)economies of scale,
energy consumption,
real exponent polynomial cost functions,
smoothness,
best response dynamics

\ShortVersion 
\vskip 5mm

\begin{center}
\Large{%
A full version that contains all missing proofs along with some additional
material is attached to this 10 page extended abstract.\footnote{%
For ease of reference, theorem numbers in this extended abstract correspond to
their counterparts in the full version.}
}
\end{center}
\ShortVersionEnd 

\LongVersion
\vskip 5mm
\LongVersionEnd




\section{Introduction}
\label{sec:introduction}

\paragraph{Generalized Network Design.}
An instance $\mathcal{I}$ of a \emph{generalized network design (GND)}
problem is defined over a finite set $E$ of \emph{resources} and $N$ abstract
\emph{requests}.
Each request
$i \in [N]$
is served by choosing some \emph{reply}
$p_{i} \subseteq E$
from request $i$'s \emph{reply collection}
$P_{i} \subseteq 2^{E}$.
Serving request $i$ with reply $p_{i}$ contributes $w_{i}(e)$ units to the
\emph{load} $l_{e}$ on resource $e$ for each
$e \in p_{i}$,
where
$w_{i} \in \mathbb{Z}_{\geq 1}^{E}$
is the \emph{weight vector} associated with request $i$ (specified in
$\mathcal{I}$).
We emphasize that our GND setting supports \emph{unrelated} weights,
that is, request $i$ may contribute different weights to the load on different
resources in $p_{i}$.

One should serve all the requests of the instance $\mathcal{I}$ with replies
$p = \{ p_{i} \}_{i \in [N]}$,
satisfying
$p_{i} \in P_{i}$
for every
$i \in [N]$,
under the objective of minimizing the \emph{total cost} $C(p)$.
This is defined as
$C(p) = \sum_{e \in E} F_{e}(l_{e})$,
where
$F_{e} : \mathbb{Z}_{\geq 0} \rightarrow \mathbb{R}_{\geq 0}$
is a resource \emph{cost function} that maps the load
$l_{e} = l_{e}^{p} = \sum_{i \in [N] : e \in p_{i}} w_{i}(e)$
induced by $p$ on resource $e$ to the cost incurred by that resource.

We restrict our attention to GND problems with \emph{succinctly represented}
requests, namely, requests whose reply collections $P_{i}$ can be specified
using $\operatorname{poly}(|E|)$ bits.
These requests are often defined by identifying the resource set $E$ with the
edge set of a (directed or undirected) graph
$G = (V, E)$, giving rise to, e.g., the following request types:
\begin{DenseItemize}

\item
\emph{routing} requests in directed or undirected graphs, where given a pair
$(s_{i}, t_{i}) \in V \times V$
of \emph{terminals}, the reply collection $P_{i}$ consists of all
$(s_{i}, t_{i})$-paths
in $G$;

\item
\emph{multi-routing} requests in directed or undirected graphs, where
given a collection
$D_{i} \subseteq V \times V$
of \emph{terminal} pairs, the reply collection $P_{i}$ consists of all
edge subsets
$F \subseteq E$
such that the subgraph $(V, F)$ admits an $(s, t)$-path for every
$(s, t) \in D_{i}$
(useful for designing a multicast scheme);\footnote{%
Notice that the multi-routing request given by $D_{i}$ cannot be
(trivially) reduced to $|D_{i}|$ (single-)routing requests since a reply $F$
for the former contributes $w_{i}(e)$ units to the load on edge
$e \in F$
``only once'', even if this edge is used to connect multiple terminal pairs in
$D_{i}$.
}
and

\item
\emph{set connectivity} (resp., \emph{set strong connectivity}) requests in
undirected (resp., directed) graphs, where
given a set
$T_{i} \subseteq V$
of \emph{terminals}, the reply collection $P_{i}$ consists of all edge
subsets that induce on $G$ a connected (resp., strongly connected) subgraph
that spans $T_{i}$
(useful for designing an \emph{overlay network}).

\end{DenseItemize}
Alternatively, one can identify the resource set $E$ with the vertex set of a
graph, obtaining the vertex variants of the aforementioned request types, or
with any other combinatorial structure as long as it fits into the
aforementioned setting.

\paragraph{(Dis)economies of Scale.}
The classic network design literature addresses scenarios where the higher
the load on a resource is, the lower is the cost per unit load, thus making it
advisable to share network resources among requests, commonly known as
\emph{buy-at-bulk} network design
\cite{Awerbuch1997BAB, Andrews2004, Charikar2005NUM, Chekuri2010AAN}.
More formally, the cost functions $F_{e}(\cdot)$ in buy-at-bulk network
design are assumed to be  \emph{subadditive}, i.e., they exhibit
\emph{economies of scale}.
Recently, there is a growing interest in investigating network design problems
with \emph{superadditive} cost functions (i.e., cost functions exhibiting
\emph{diseconomies of scale})
\cite{Andrews2012RPM, Makarychev2014SOP}
or even cost functions that may appear subadditive for small loads and
superadditive for larger loads
\cite{Andrews2012RPM, Antoniadis2014HHE, Andrews2016MCN}, referred to as cost
functions exhibiting \emph{(dis)economies of scale ((D)oS)}
\cite{Andrews2016MCN}.

The (D)oS cost functions studied so far in the context of network design
capture the \emph{energy consumption} of network devices employing the popular
\emph{speed scaling} technique
\cite{YaoDS1995, IraniP2005, BansalKP2007, Nedevschi2008RNE,
Christensen2010IEEE, Albers2010, Andrews2013RSE, Andrews2016MCN}
that allows the device to adapt its power level to its actual load.
Given a global constant parameter
$\alpha \in \mathbb{R}_{> 1}$
(a.k.a.\ the \emph{load exponent}),
an energy consumption cost function for resource
$e \in E$
is defined by setting
\begin{equation} \label{equation:energy-consumption-cost-function}
F_{e}(l_{e})
\, = \,
\begin{cases}
0 \, , & l_{e} = 0 \\
\sigma_{e} + \xi_{e} \cdot l_{e}^{\alpha} \, , & l_{e} > 0
\end{cases} \, ,
\end{equation}
where
$\sigma_{e} \in \mathbb{R}_{\geq 0}$
(the \emph{startup cost})
and
$\xi_{e} \in \mathbb{R}_{> 0}$
(the \emph{speed scaling factor})
are parameters of $e$.

\LongVersion 
This paper improves
\LongVersionEnd 
\ShortVersion 
We improve
\ShortVersionEnd 
the existing results on approximation algorithms for GND
with energy consumption cost functions in various aspects (see
\Sect{}~\ref{sec:comp-prev-results}).
In fact, our results apply to a more general class of resource cost functions
exhibiting (D)oS, referred to as \emph{real exponent polynomial (REP)} cost
functions.
Given global constant parameters
$q \in \mathbb{Z}_{\geq 1}$
and
$\alpha_{1}, \dots, \alpha_{q} \in \mathbb{R}_{> 1}$,
a REP cost function for resource
$e \in E$
is defined by setting
\begin{equation} \label{equation:REP-cost-function}
F_{e}(l_{e})
\, = \,
\begin{cases}
0 \, , & l_{e} = 0 \\
\sigma_{e} + \sum_{j \in [q]} \xi_{e, j} \cdot l_{e}^{\alpha_{j}} \, , & l_{e} > 0
\end{cases} \, ,
\end{equation}
where
$\sigma_{e} \in \mathbb{R}_{\geq 0}$
and
$\xi_{e, 1}, \dots, \xi_{e, q} \in \mathbb{R}_{\geq 0}$
are parameters of $e$, constrained by requiring that
$\xi_{e, j} > 0$
for at least one
$j \in [q]$.

On top of the theoretical interest in studying more general cost functions,
there is also a practical motivation behind their investigation.
While some of the theoretical literature on energy efficient network design
considers the special case of
(\ref{equation:energy-consumption-cost-function}) where
$\sigma_e = 0$
(see \Sect{}~\ref{sec:comp-prev-results}), it has been claimed
\cite{Andrews2016MCN, Antoniadis2014HHE} that the startup cost component is
crucial for better capturing practical energy consumption structures.
In fact, in realistic communication networks, even the energy consumption cost
functions of (\ref{equation:energy-consumption-cost-function}) may not be
general enough since a link often consists of several different devices (e.g.,
transmitter/receiver, amplifier, adapter), all of which are operating when the
link is in use.
As their energy consumption may vary in terms of the load exponents and speed
scaling factors, the functions presented in
(\ref{equation:energy-consumption-cost-function}) do not provide a suitable
abstraction for the link's energy consumption and the more general REP cost
functions (\ref{equation:REP-cost-function}) should be employed.

\paragraph{Approximation Framework.}
Our main contribution is a novel \emph{approximation framework} for GND
problems with REP resource cost functions.
This framework yields an approximation algorithm when provided access to an
appropriate oracle that we now turn to define.
A \emph{reply $\varrho$-oracle},
$\varrho \geq 1$,
for a family $\mathcal{Q}$ of succinctly represented requests is an efficient
procedure that given
a resource set $E$,
the reply collection
$R \subseteq 2^{E}$
(specified succinctly) of a request in $\mathcal{Q}$,
and a \emph{toll function}
$\tau : E \rightarrow \mathbb{R}_{> 0}$,
returns some reply
$r \in R$
that minimizes the total toll
$\tau(r) = \sum_{e \in r} \tau(e)$
up to factor $\varrho$,
i.e., it satisfies
$\tau(r) \leq \varrho \cdot \tau(r')$
for every
$r' \in R$.
An \emph{exact reply oracle} is a reply $\varrho$-oracle with
$\varrho = 1$.

Notice that the optimization problem behind the reply oracle is \emph{not} a
GND problem:
it deals with a \emph{single} request (rather than multiple requests) and the
role of the resource cost functions (combined with the weight vectors) is now
taken by the (single) toll function.
In particular, while all the (specific) GND problems mentioned in this paper
are intractable (to various extents of inapproximability
\cite{AzarERW2004, Andrews2012RPM, Roughgarden2014}),
the request classes corresponding to some of them admit exact reply oracles.

For example, routing requests (in directed and undirected graphs) admit an
exact reply oracle implemented using, e.g., Dijkstra's shortest path algorithm
\cite{Dijkstra1959, FredmanT1987}.
In contrast,
set connectivity requests in undirected graphs,
set strong connectivity requests in directed graphs,
and multi-routing requests in undirected and directed graphs
do not admit exact reply oracles unless
$\mathrm{P} = \mathrm{NP}$
as these would imply exact (efficient) algorithms for the
\emph{Steiner tree},
\emph{strongly connected Steiner subgraph},
\emph{Steiner forest},
and
\emph{directed Steiner forest}
problems, respectively.
However, employing known approximation algorithms for the latter (Steiner)
problems, one concludes that%
\LongVersion 
:
set connectivity requests in undirected graphs admit a reply $\varrho$-oracle
for 
$\varrho \leq 1.39$
\cite{Byrka2013STA};
set strong connectivity requests in directed graphs admit a reply
$t^{\epsilon}$-oracle,
where
$t = |T|$
is the number of terminals \cite{CharikarCCDGGL1998};
multi-routing requests in undirected graphs admit a reply $2$-oracle
\cite{AgrawalKR1995};
and
multi-routing requests in directed graphs admit a reply
$k^{1 / 2 + \epsilon}$-oracle,
where
$k = |D|$
is the number of terminal pairs \cite{ChekuriEGS2011}.
This means, in particular, that
\LongVersionEnd 
\ShortVersion \ \ShortVersionEnd
set connectivity replies and multi-routing replies in undirected graphs always
admit a reply $\varrho$-oracle with a constant approximation ratio $\varrho$,
whereas
set strong connectivity replies and multi-routing replies in directed graphs
admit such an oracle whenever $|T|$ and $|D|$ are fixed%
\ShortVersion 
\ \cite{AgrawalKR1995, CharikarCCDGGL1998, ChekuriEGS2011, Byrka2013STA}%
\ShortVersionEnd 
.
The guarantees of our approximation framework are cast in the following
theorem.

\begin{GlobalTheorem} \label{theorem:approximation-framework}
Consider some GND problem $\mathcal{P}$ with succinctly represented requests
using REP resource cost functions as defined in
(\ref{equation:REP-cost-function}).
Suppose that the requests of $\mathcal{P}$ admit a reply $\varrho$-oracle
$\mathcal{O}_{\mathcal{P}}$.
When provided with free access to $\mathcal{O}_{\mathcal{P}}$, our
approximation framework yields a randomized efficient approximation algorithm
$\mathcal{A}_{\mathcal{P}}$ for $\mathcal{P}$ whose approximation ratio is
\begin{MathMaybe}
O \left(
\varrho^{\max_{j} \alpha_{j} + 1}
+
\varrho \cdot \max\nolimits_{e} \min\nolimits_{j}
\left( \frac{\sigma_{e}}{\xi_{e, j}} \right)^{1 / \alpha_{j}}
\right)
\end{MathMaybe}
with high probability.
Moreover, our approximation framework is fully combinatorial and it runs in
strongly polynomial time, so if $\mathcal{O}_{\mathcal{P}}$ is implemented to
run in strongly polynomial time, then $\mathcal{A}_{\mathcal{P}}$ also runs in
strongly polynomial time.
\end{GlobalTheorem}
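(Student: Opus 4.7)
The plan is to cast the GND instance as a weighted congestion-style game in which each request $i$ is a player whose strategy space is its reply collection $P_{i}$, and then apply Roughgarden's smoothness toolbox in its approximate form. Specifically, I would split each per-resource load into \emph{local shares} $\pi_{i}(e, p)$ of $F_{e}(l_{e}^{p})$ (proportional sharing $w_{i}(e) / l_{e}^{p}$ is the natural candidate for REP functions), and define player $i$'s private cost as $c_{i}(p) = \sum_{e \in p_{i}} \pi_{i}(e, p)$. The algorithm $\mathcal{A}_{\mathcal{P}}$ would then execute best-response dynamics in a random order: in each step the selected player $i$ receives from the oracle $\mathcal{O}_{\mathcal{P}}$ a reply that approximately minimizes $c_{i}$ against the currently fixed $p_{-i}$, where the oracle is invoked with toll function $\tau(e) \approx \pi_{i}(e, \cdot)$. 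Existence of a Rosenthal-type potential for proportional shares of REP functions would give termination; the more delicate point is that the oracle's $\varrho$-approximate answers produce an \emph{approximate} Nash equilibrium, for which one needs the version of smoothness that tolerates slack in the deviation inequality.

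Next I would separate the cost into $F_{e}(l_{e}) = \mathrm{SU}_{e}(l_{e}) + \mathrm{POLY}_{e}(l_{e})$ with $\mathrm{SU}_{e}(l_{e}) = \sigma_{e} \cdot \mathbb{1}[l_{e} > 0]$ and $\mathrm{POLY}_{e}(l_{e}) = \sum_{j} \xi_{e,j} l_{e}^{\alpha_{j}}$, and analyze each piece separately. For $\mathrm{POLY}_{e}$, a $(\lambda, \mu)$-smoothness calculation in the spirit of the standard PoA bound for weighted congestion games with polynomial latencies of degree $\alpha_{j}$ yields, after accounting for the factor $\varrho$ that $\mathcal{O}_{\mathcal{P}}$ introduces both on the deviation side of the smoothness inequality and on the toll-versus-true-cost discrepancy, a multiplicative overhead of $O(\varrho^{\max_{j} \alpha_{j} + 1})$ over the polynomial part of the optimum. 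This accounts for the first term in the claimed ratio.

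The main obstacle is the startup term $\mathrm{SU}_{e}$: it is discontinuous at $l_{e} = 0$, so smoothness does not apply to it directly, and it is precisely what forces an additive rather than purely multiplicative bound. The key observation would be that $\min_{j}(\sigma_{e}/\xi_{e,j})^{1/\alpha_{j}}$ is exactly the load threshold at which $\mathrm{POLY}_{e}$ catches up with $\sigma_{e}$. I would therefore distinguish resources opened by the algorithm that are also opened by some optimal $p^{*}$, whose startup costs can be charged to $\mathrm{SU}_{e}$ in the optimum up to the smoothness blow-up, from resources opened only by the algorithm, where the approximate-equilibrium condition forces $l_{e}$ to be on the order of the above threshold (otherwise the deviating player would have preferred a reply that avoided $e$, contradicting $\varrho$-optimality of its response). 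Summing the resulting charges over $e$ produces the additive $O(\varrho \cdot \max_{e} \min_{j}(\sigma_{e}/\xi_{e,j})^{1/\alpha_{j}})$ term.

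Finally, I would verify strong polynomiality: the potential admits a polynomial number of distinct values (up to tie-breaking), and random order over players together with the high-probability PoA-realization lemma for best-response dynamics (again from Roughgarden's toolbox) ensures that the play reaches a state matching the smoothness bound within polynomially many oracle calls with high probability, yielding the claimed efficiency. The hardest part will be the startup-cost bookkeeping, since smoothness is intrinsically ill-suited to fixed setup fees; executing that charging argument while keeping the entire analysis combinatorial — rather than relying on convex-programming rounding as in prior work — is where most of the care must go.
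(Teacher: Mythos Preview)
Your high-level architecture---turn the instance into a congestion-style game, use smoothness plus approximate best responses---matches the paper. But three concrete steps would not go through as written.

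\textbf{Potential function for proportional sharing.} You posit a Rosenthal-type potential for proportional shares of REP costs. The paper explicitly could not establish a bounded potential for the proportional fair CSM and for exactly this reason switches to the \emph{Shapley} CSM, whose potential
\[
\Phi(p)=\sum_{e}\sum_{i\in S_e} f_{i,e}\bigl(S_e^i(\psi_e)\cup\{i\}\bigr)
\]
they prove is $(\mathcal{H}_N,\lceil\max_j\alpha_j\rceil)$-bounded. With proportional shares you are not guaranteed even the existence of a potential, let alone one with polynomial spread, so your termination argument collapses. (Shapley shares then bring their own complication---exact computation is \#P-hard---which the paper handles via an FPRAS adapted from Liben-Nowell et al.; this is why the algorithm is randomized and the guarantee is w.h.p.)

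\textbf{Startup-cost handling.} Your charging argument for $\sigma_e$ is the wrong mechanism. The approximate-equilibrium condition constrains a player's \emph{total} cost relative to her best response; it says nothing about the load on any single resource she uses, so the inference ``$l_e$ must be near $\min_j(\sigma_e/\xi_{e,j})^{1/\alpha_j}$ on algorithm-only resources'' does not follow. The paper avoids any such case analysis: it observes that the REP-expanded bound on cost shares gives $\sum_i C_i(p'_i,p_{-i})\le \sum_{e\in p'}\sigma_e\,l_e^{p'}+\text{(polynomial part)}$, and then proves the purely analytic inequality
\[
\sigma_e\,l_e^{p'}\;\le\;\Bigl(\tfrac{1}{\alpha_j-1}\cdot\tfrac{\sigma_e}{\xi_{e,j}}\Bigr)^{1/\alpha_j}\cdot F_e(l_e^{p'})
\]
by maximizing $g(x)=\sigma x/(\sigma+\xi x^\alpha)$. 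This folds the startup term directly into the smoothness parameter $\lambda$, so the final ratio is purely \emph{multiplicative} in $C^*$, not additive as you state.

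\textbf{Termination / run-time.} The potential does \emph{not} take only polynomially many values (weights are arbitrary integers), so your strong-polynomiality sketch is unfounded. The paper instead bounds the \emph{initial} potential via $C(p^0)\le\varrho N^{\max_j\alpha_j}C^*$, shows that every step whose profile is ``bad'' shrinks $\Phi$ by a fixed factor $1-1/Q$, and then outputs the cheapest profile ever visited. This is what makes the step count strongly polynomial and independent of the weight magnitudes.
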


We emphasize that when
$\varrho = O (1)$,
the approximation ratio promised in
\Thm{}~\ref{theorem:approximation-framework} becomes
\[
O \left(
1 +
\max\nolimits_{e} \min\nolimits_{j}
\left( \frac{\sigma_{e}}{\xi_{e, j}} \right)^{1 / \alpha_{j}}
\right)
\]
which is free of any dependence on the number $|E|$ of resources, the number
$N$ of requests, and the weight vectors
$\{ w_{i} \}_{i \in [N]}$;
rather, it depends only on the parameters ($\sigma_{e}$, $\xi_{e, j}$) of
the network resources' technology (speed scaling in case
$q = 1$).
Notice that the hidden expressions in our $O$ notations may depend on the
parameters $q$ and
$\alpha_{1}, \dots, \alpha_{q}$
assumed to be constants throughout this paper.

\subsection{Comparison to Existing Results}
\label{sec:comp-prev-results}

\paragraph{GND with Routing Requests.}
The existing literature on (generalized) network design beyond subadditive
resource cost functions
\cite{Andrews2012RPM, Antoniadis2014HHE, Makarychev2014SOP, Andrews2016MCN}
focuses on routing requests, identifying the resources with the edges of a
graph, and with the exception of \cite{Makarychev2014SOP}, it is restricted to
undirected graphs and \emph{related} weights, i.e.,
$w_{i}(e) = w_{i}$
for every
$e \in E$.
In contrast, the current paper handles a wider class of request types over
much more general combinatorial structures (including both directed and
undirected graphs) and our approximation framework supports unrelated
weights.
Moreover, the current paper addresses the general REP cost functions
(\ref{equation:REP-cost-function}), whereas as stated beforehand, the existing
literature addresses only the energy consumption cost functions
(\ref{equation:energy-consumption-cost-function}) and special cases thereof
(Table~\ref{tab:comparison-routing-energy} summarizes the relevant
approximation upper bounds).

Specifically, Makarychev and Sviridenko \cite{Makarychev2014SOP} consider
purely superadditive cost functions by restricting
(\ref{equation:energy-consumption-cost-function}) to
$\sigma_{e} = 0$
for all
$e \in E$,
obtaining an approximation ratio of
$(1 + \epsilon) \mathcal{B}_{\alpha}$,
where $\mathcal{B}_{\alpha}$ is the fractional Bell number with parameter
$\alpha$.
This improves the prior
$O \Left( \log^{\alpha - 1} w_{\max} \Right)$
upper bound of Andrews et al.~\cite{Andrews2012RPM},
where
$w_{\max} = \max_{i \in [N]} w_{i}$.
The case where the startup cost $\sigma_{e}$ may be positive is addressed by
Antoniadis et al.~\cite{Antoniadis2014HHE}, obtaining an approximation ratio of
$O \Left( \log^{\alpha} N \Right)$,
but this result is limited to the uniform case where
$w_{i} = 1$
for all
$i \in [N]$.

As stated in \cite{Andrews2016MCN, Antoniadis2014HHE}, for a more accurate
abstraction of practical energy conservation scenarios, the cost function
definition of (\ref{equation:energy-consumption-cost-function}) with positive
startup costs and arbitrary (related) weights is unavoidable.
In this setting, three different approximation ratios have been devised by
Andrews et al.:
$O \Left(\Left( 1 + \max_{e} \frac{\sigma_{e}}{\xi_{e}} \Right)^{1 / \alpha}
\log^{\alpha - 1} w_{\max} \Right)$
and
$O \Left( N + \log^{\alpha - 1} w_{\max} \Right)$
in \cite{Andrews2012RPM};
and
$\operatorname{polylog}(N) \cdot \log^{\alpha - 1} w_{\max}$
in \cite{Andrews2016MCN}.\footnote{%
Actually, in \cite{Andrews2016MCN}, the startup cost term in the cost function
is somewhat restricted.}

We emphasize that these three approximation ratios grow with
the number $N$ of traffic requests and/or the maximum weight $w_{\max}$,
whereas the approximation ratio established in the current paper depends only
on the parameters of the network resources' technology.
Furthermore, the algorithms behind these approximation ratios are based on
linear/convex programming and their (currently known) implementations do not
run in strongly polynomial time (this is true also for the algorithm of
\cite{Makarychev2014SOP}).
In contrast, the approximation framework developed in the present paper is
purely combinatorial with a strongly polynomial run-time.

\begin{table}
\center
\renewcommand{\arraystretch}{1.5}
\begin{tabular}{|c|c|c|c|c|c|}
\hline
paper & graphs & weights & algorithm & restrictions & approx.\ ratio \\
\hline
\multirow{2}{*}{\cite{Andrews2012RPM}} &
\multirow{2}{*}{undir.} &
\multirow{2}{*}{related} &
\multirow{2}{*}{math.\ prog.} &
\multirow{2}{*}{none} &
$O \left( 1 +
\left( \max_{e} \frac{\sigma_{e}}{\xi_{e}} \right)^{\frac{1}{\alpha}}
\log^{\alpha - 1} w_{\max}
\right)$ \\
& & & & & $O \left( N + \log^{\alpha - 1} w_{\max} \right)$ \\
\hline
\cite{Andrews2016MCN} &
undir. &
related &
math.\ prog. &
$\frac{\sigma_{e}}{\xi_{e}} = \frac{\sigma_{e'}}{\xi_{e'}}$ &
$\operatorname{polylog}(N) \cdot O \left( \log^{\alpha - 1} w_{\max} \right)$
\\
\hline
\cite{Antoniadis2014HHE} &
undir. &
$w_{i} = 1$ &
combin. &
$\xi_{e} = 1$ &
$O \left( \log^{\alpha} N \right)$ \\
\hline
\cite{Andrews2012RPM} &
undir. &
related &
math.\ prog. &
$\sigma_{e} = 0$ &
$O \left( \log^{\alpha - 1} w_{\max} \right)$ \\
\hline
\cite{Makarychev2014SOP} &
un/dir. &
unrel. &
math.\ prog. &
$\sigma_{e} = 0$ &
$(1 + \epsilon) \mathcal{B}_{\alpha}$ \\
\hline
current &
un/dir. &
unrel. &
combin. &
none &
$O \left(
1 +
\left( \max_{e} \frac{\sigma_{e}}{\xi_{e}} \right)^{\frac{1}{\alpha}}
\right)$ \\
\hline
\end{tabular}
\caption{\label{tab:comparison-routing-energy}%
Comparison of the approximation algorithms for GND with routing requests
(identifying the resources with the graph edges) under the energy consumption
cost functions
(\ref{equation:energy-consumption-cost-function}) and restrictions thereof.}
\end{table}

\paragraph{Scheduling Unrelated Parallel Machines.}
While GND with routing requests and related weights is a classic problem by its
own right, generalizing it to unrelated weights not also makes this
abstraction suitable for a wider class of GND scenarios, but also captures the
extensively studied problem of \emph{scheduling unrelated parallel machines}.
This problem can be represented as GND with routing requests over a graph
consisting of two vertices and multiple parallel edges (referred to as
\emph{machines}) between them.
The earlier algorithmic treatment of this problem considers the objective of
minimizing the $\ell_{\infty}$ norm (a.k.a.\ \emph{makespan}) of the machines'
load \cite{LenstraST1990, ShmoysT1993}.\footnote{%
This objective does not fit into the formulation of minimizing the sum of the
resource cost functions considered in our paper.
}
Later on, the focus has shifted to minimizing the $\ell_{p}$ norm of the
machines' load for
$p \in (1, \infty)$
\cite{AwerbuchAGKKV1995, AzarERW2004, AzarE2005, KumarMPS2009,
Makarychev2014SOP}.
The state of the art approximation algorithm in this regard is the one
developed by Kumar et al.~\cite{KumarMPS2009} with a $< 2$ approximation ratio
for all
$p \in (1, \infty)$.
Makarychev and Sviridenko \cite{Makarychev2014SOP} studied this problem for
small values of $p$ and designed a
$\sqrt[p]{\mathcal{B}_{p}}$-approximation,
improving upon the upper bound of \cite{KumarMPS2009} for the
$p \in (1, 2]$
regime.

The $\ell_{p}$ norm optimization criterion corresponds to the energy
consumption cost function (\ref{equation:energy-consumption-cost-function})
restricted to zero startup costs
$\sigma_{e} = 0$
(energy efficiency is also the main motivation of \cite{Makarychev2014SOP}).
In practice, however, machines' energy consumption typically incurs a positive
startup cost \cite{Andrews2016MCN, Antoniadis2014HHE}.
This motivated Khuller et al.~\cite{KhullerLS2010, LiK2011} to study a variant
of unrelated parallel machine scheduling in which the (sub)set of activated
machines should satisfy some budget constraint on the startup costs.
To the best of our knowledge, the current paper presents the first non-trivial
approximation algorithm for scheduling unrelated parallel machines that takes
into account the (positive) machines' startup costs
$\sigma_{e} > 0$
as part of the objective function.

\subsection{Paper Organization.}
The rest of the paper is organized as follows.
\Sect{}~\ref{sec:preliminaries} introduces the concepts and notations used
in the design and analysis of the proposed approximation framework.
Following that, a technical overview of the approximation framework's design
and analysis is provided in \Sect{}~\ref{sec:technical-overview}.
The actual approximation framework is presented in \Sect{}~\ref{sec:algorithm}
and analyzed in \Sect{}
\LongVersion 
\ref{sec:analyzing-abrd-apx}--~\ref{sec:polyn-time-epsil}.
\LongVersionEnd 
\ShortVersion 
\ref{sec:analyzing-abrd-apx}--\ref{sec:potent-funct-shapl}.
\ShortVersionEnd 
Two variants of the proposed approximation framework, which are more feasible
for a decentralized environment, are presented in
\LongVersion 
\Sect{}~\ref{sec:impl-decentr-envir}.
\LongVersionEnd 
\ShortVersion 
the attached full version.
\ShortVersionEnd 
\LongVersion 
In \Sect{}~\ref{sec:poa-gnd-game}, we establish additional bounds that
demonstrate the tightness of certain components of the analysis.
Finally, alternative approaches for designing GND approximation algorithms are
discussed in \Sect{}~\ref{sec:altern-appr}. 
In particular, \Sect{}~\ref{subsection_convex_programming_and_rounding} discusses an alternative algorithm for the GND problem with routing requests using convex optimization and randomized rounding.
\LongVersionEnd 
\ShortVersion 
In that full version, we also establish additional bounds that demonstrate
the tightness of certain components of the analysis and discuss alternative
approaches for designing GND approximation algorithms.
\ShortVersionEnd 

\section{Preliminaries}
\label{sec:preliminaries}
Throughout, we consider some GND problem $\mathcal{P}$ with succinctly
represented requests using REP resource cost functions
(\ref{equation:REP-cost-function}).
Let
\begin{MathMaybe}
\mathcal{I} =
\Left\langle
E,
\Left\{ P_{i}, \{ w_{i}(e) \}_{e \in E} \Right\}_{i \in [N]},
\Left\{ \alpha_{j} \Right\}_{j \in [q]},
\Left\{ \sigma_{e}, \Left\{ \xi_{e, j} \Right\}_{j \in [q]} \Right\}_{e \in E}
\Right\rangle
\end{MathMaybe}
be some $\mathcal{P}$ instance.
Let $p^{*}$ be an optimal solution for $\mathcal{I}$ and
$C^{*} = C(p^{*})$
be its total cost.

\paragraph{GND Games and Cost Sharing Mechanisms.}
A key ingredient of the approximation framework designed in this paper is
a \emph{GND game} derived from instance $\mathcal{I}$.
In this game, each request
$i \in [N]$
is associated with a strategic \emph{player $i$} that decides on the reply
$p_{i} \in P_{i}$
serving the request.
In the scope of this GND game, the reply
$p_{i} \in P_{i}$
is referred to as the \emph{strategy} of player $i$ and the reply collection
$P_{i}$ is referred to as her \emph{strategy space}.
We let
$P = (P_{1}, \dots, P_{N})$
and refer to
$p = (p_{1}, \dots, p_{N}) \in P$
as the (players') \emph{strategy profile}.
Although the strategy profile $p$ is a vector of replies, we may slightly
abuse the notation and write
$e \in p$
when we mean that
$e \in \bigcup_{i \in [N]} p_{i}$.

The cost $F_{e}(l_{e})$ of each resource
$e \in E$
is divided among the players based on a \emph{cost sharing mechanism (CSM)}
$M = \Left\{ f_{i, e}(\cdot) \Right\}_{i \in [N], e \in E}$,
where
$f_{i, e} : P \rightarrow \mathbb{R}_{\geq 0}$
is a \emph{cost sharing function} that determines the
\emph{cost share} $f_{i, e}(p)$ incurred by player
$i \in [N]$
for resource $e$ under strategy profile $p$, subject to the constraint that
$\sum_{i \in [N]} f_{i, e}(p) = F_{e}(l_{e}^{p})$.
Player $i$ chooses her strategy $p_{i}$ with the objective of minimizing her
\emph{individual cost}
$C_{i}(p) = \sum_{e \in E} f_{i, e}(p)$,
irrespective of the total cost
$C(p) = \sum_{i \in [N]} C_{i}(p)$
(a.k.a.\ the \emph{social cost}).

CSM
$M = \Left\{ f_{i, e}(\cdot) \Right\}_{i \in [N], e \in E}$
is said to be \emph{separable} and \emph{uniform} (cf.\
\cite{Chen2010DNP, Von2013OCS})
if the cost share of player
$i \in [N]$
in resource
$e \in E$
satisfies
(1)
if
$e \notin p_{i}$,
then
$f_{i, e}(p) = 0$;
and
(2)
$f_{i, e}(p)$ is fully determined by $w_{i}(e)$ and by the multiset of
weights of the (other) players using resource $e$.
Notice that $f_{i, e}(p)$ is independent of the identities and weights of the
players using any resource
$e' \neq e$.
It may be convenient to write
$f_{i, e}(S_{e})$
instead of
$f_{i, e}(p)$,
where
$S_{e} = \{ j \in [N] \mid e \in p_{j} \}$,
although, strictly speaking, $f_{i, e}(p)$ is also independent of the
identities (rather than weights) of the players in $S_{e} - \{ i \}$.
Unless stated otherwise, all CSMs considered in this paper
are \emph{separable} and \emph{uniform}.

\paragraph{Best Response.}
Following the convention in the game theoretic literature, given some
$i \in [N]$
and a strategy profile
$p = (p_{1}, \dots, p_{N})$,
let
$p_{-i} = (p_{1}, \dots, p_{i - 1}, p_{i + 1}, \dots, p_{N})$;
likewise, let
$P_{-i} = P_{1} \times \cdots \times P_{i - 1} \times
P_{i + 1} \times \cdots \times P_{N}$.
Given some approximation parameter
$\chi \geq 1$,
strategy
$p_{i} \in P_{i}$
is an \emph{approximate best response (ABR)} of player $i$ to
$p_{-i} \in P_{-i}$
if
$C_{i}(p_{i}, p_{-i}) \leq \chi \cdot C_{i}(p'_{i}, p_{-i})$
for every
$p'_{i} \in P_{i}$.
A \emph{best response (BR)} is an ABR with approximation parameter
$\chi = 1$.

A \emph{best response dynamic (BRD)} (resp., \emph{approximate best response
dynamic (ABRD)}) is an iterative procedure that given an initial strategy
profile
$p^{0} \in P$,
generates a sequence
$p^{1}, p^{2}, \dots$
of strategy profiles adhering to the rule that
for every
$t = 1, 2, \dots$,
there exists some
$i \in [N]$
such that
(1)
$p^{t}_{-i} = p^{t - 1}_{-i}$;
and
(2)
$p^{t}_{i}$ is a BR (resp., ABR) of player $i$ to $p^{t - 1}_{-i}$.

Strategy profile
$p \in P$
is a \emph{(pure) Nash equilibrium (NE)} of the GND game if $p_{i}$ is a BR to
$p_{-i}$ for every
$i \in [N]$.
The \emph{(pure) price of anarchy (PoA)} of the GND game is defined to be the
ratio 
$C(p) / C^{*}$,
where
$p \in P$
is a NE strategy profile that maximizes the social cost $C(p)$.

\paragraph{Smoothness.}
The following definition of Roughgarden \cite{Roughgarden2015IRP} plays a key
role in our analysis:
Given parameters
$\lambda > 0$
and
$0 < \mu < 1$,
we say that the GND game is \emph{$(\lambda, \mu)$-smooth} if
\begin{equation} \label{equation:smoothness-definition}
\sum_{i} C_{i}(p'_{i}, p_{-i})
\, \leq \,
\lambda C(p') + \mu C(p)
\end{equation}
for any two strategy profiles
$p, p' \in P$.\footnote{%
The original definition of Roughgarden~\cite{Roughgarden2015IRP}, that applies
for all cost minimization games, also requires that
$C(p) = \sum_{i \in [N]} C_{i}(p)$,
but this property is assumed to hold for all CSMs
considered in the current paper, so we do not mention it explicitly.
}
The game is said to be \emph{smooth} if it is $(\lambda, \mu)$-smooth for some
$\lambda > 0$
and
$0 < \mu < 1$.

\paragraph{Potential Functions.}
Function
$\Phi: P \rightarrow \mathbb{R}^{+}$
is said to be a \emph{potential function} if for every
$i \in [N]$
and for any two strategy profiles $p$ and $p'$ with
$p_{-i} = p'_{-i}$,
it holds that
\begin{MathMaybe}
\Phi(p') - \Phi(p)
\, = \,
C_{i}(p') - C_{i}(p) \, .
\end{MathMaybe}
A game admitting a potential function is said to be a \emph{potential game}.
The potential function $\Phi(p)$ is said to be \emph{$(A, B)$-bounded} for some
parameters
$A \geq 1$
and
$B \geq 1$
if
\begin{MathMaybe}
\Phi(p) / A
\, \leq \,
C(p)
\, \leq \,
B \cdot \Phi(p)
\end{MathMaybe}
for any strategy profile
$p \in P$.

\paragraph{Additional Notation and Terminology.}
Throughout, we think of $\epsilon > 0$ as a sufficiently small (positive)
constant and fix
$\epsilon_{1} = \frac{1 + \epsilon}{1 - \epsilon}$.
A probabilistic event $A$ is said to occur \emph{with high probability
(w.h.p.)} if
$\mathbb{P}(A) \geq 1 - 1 / (|E| + N)^{b}$,
where $b$ is an arbitrarily large constant.

\section{Technical Overview}
\label{sec:technical-overview}
The key concept in the design of our generic approximation framework is to
\emph{decouple} between the combinatorial structure of the specific GND
problem $\mathcal{P}$, captured by the request types (and encoded in the reply
collections), and the (D)oS cost functions of the individual resources.
Informally, the former is handled by the reply oracle
$\mathcal{O}_{\mathcal{P}}$ (specifically tailored for $\mathcal{P}$), whereas
for the latter, we harness the power of Roughgarden's \emph{smoothness} toolbox
\cite{Roughgarden2015IRP}.
Since this toolbox was originally introduced in the context of game theory
rather than algorithm design, we first transform the given $\mathcal{P}$
instance into a GND game by carefully choosing the CSM (more on that soon).
The algorithm then progresses via a sequence of player individual improvements
in the form of a BRD, where each BRD step is implemented by invoking
$\mathcal{O}_{\mathcal{P}}$ with a toll function constructed based on the
current strategy profile
$p \in P$,
the choice of player
$i \in [N]$,
and her cost sharing functions
$f_{i, e}(\cdot)$,
$e \in E$
(\Sect{}~\ref{sec:algorithm}).\footnote{%
In this section (only), we assume for simplicity that
$\mathcal{O}_{\mathcal{P}}$ is an exact reply oracle.
}

In order to establish the promised upper bound on the approximation ratio, we
first analyze the smoothness parameters of the aforementioned GND game
(\Sect{}~\ref{sec:smoothness})
which allows us to bound its PoA, thus ensuring that the total cost $C(p)$ of
any NE strategy profile
$p \in P$
provides the desired approximation for the (global) optimum $C^{*}$.
This part of the proof relies on introducing and analyzing a new class of
\emph{REP-expanded} CSMs
(\Sect{}~\ref{sec:smoothness}),
interesting in its own right.

One may hope that a BRD of the GND game converges to a NE strategy profile
$p \in P$,
but unfortunately, the BRD need not necessarily converge, and even if it does
converge, it need not necessarily be in polynomially many steps.
Inspired by another component of the smoothness toolbox
\cite{Roughgarden2015IRP} (which is in turn inspired by
\cite{Awerbuch2008FCN}), we show
(in \Sect{}~\ref{sec:analyzing-abrd-apx})
that if the game admits a bounded potential function, then after simulating
the BRD for polynomially many steps, one necessarily encounters a strategy
profile
$p \in P$
that yields the promised approximation
guarantee (although it is not necessarily a NE).

Does our GND game admit the desired bounded potential function?
The answer to this question depends, once again, on the choice of a CSM.
We therefore look for a CSM with three (possibly
conflicting) considerations in mind:
the game that it induces must admit a bounded potential function;
it must be REP-expanded;
and
it must be efficiently computable.
We prove that the \emph{Shapley CSM} satisfies the first
two conditions
(\Sect{} \ref{sec:potent-funct-shapl} and \ref{sec:smoothness}, respectively)
and although its exact computation is \#P-hard, we manage to adapt
\ShortVersion 
(in the attached full version)
\ShortVersionEnd 
the
approximation scheme of \cite{Liben2012CSV}, originally designed for
superadditive cost functions, to accommodate the REP cost functions
(\ref{equation:REP-cost-function}) with positive startup costs
$\sigma_{e} > 0$%
\LongVersion 
\ (\Sect{}~\ref{sec:polyn-time-epsil})%
\LongVersionEnd 
.
This presents another obstacle though since the original technique of
\cite{Roughgarden2015IRP} assumes (implicitly) that each step in the BRD is
(as the definition implies) an exact BR.
To overcome this obstacle, we show that an ABRD is still good enough for our
needs
(\Sect{}~\ref{sec:analyzing-abrd-apx}).

We believe that the construction described here demonstrates the usefulness of
algorithmic game theory tools for algorithm design even for optimization
problems that on the face of it, are not at all concerned with game theory.
A similar concept is demonstrated by Cole et al.~\cite{Cole2011IPS} who
obtained an improved combinatorial algorithm for job scheduling on unrelated
machines, with the objective of minimizing the weighted sum of completion
times, based on the game theoretic tools developed in \cite{Awerbuch2008FCN}.
In comparison, we employ the smoothness toolbox \cite{Roughgarden2015IRP} for
the design and analysis of our approximation framework.
It is the robustness of this toolbox that plays the key role in the generality
of our framework that can be applied to a wide family of GND problems.
This is in contrast to most of the existing approximation algorithms for such
problems that rely on linear/convex programming and are therefore heavily
tailored to one specific GND problem and much less generic.

\section{Algorithm Description}
\label{sec:algorithm}
Let $\mathcal{O}_{\mathcal{P}}$ be a reply $\varrho$-oracle for the requests
of the GND problem $\mathcal{P}$.
Our goal is to design an approximation algorithm with free access to
$\mathcal{O}_{\mathcal{P}}$ as promised in
\Thm{}~\ref{theorem:approximation-framework}.
We shall refer to this approximation algorithm as \texttt{Alg-ABRD}.

\LongVersion \sloppy \LongVersionEnd
Given an instance
$\mathcal{I} =
\Left\langle
E,
\Left\{ P_{i}, \{ w_{i}(e) \}_{e \in E} \Right\}_{i \in [N]},
\Left\{ \alpha_{j} \Right\}_{j \in [q]},
\Left\{ \sigma_{e}, \Left\{ \xi_{e, j} \Right\}_{j \in [q]} \Right\}_{e \in E}
\Right\rangle$
of $\mathcal{P}$, we first construct (conceptually) the GND game induced by
$\mathcal{I}$ and a carefully chosen CSM
$M = \Left\{ f_{i, e}(\cdot) \Right\}_{i \in [N], e \in E}$.
On top of the other properties of $M$ that will be discussed in the next
sections, we require that $M$ is \emph{poly-time $\epsilon$-computable},
namely, that given
$\mathcal{I}$,
$p \in P$,
and
$i \in [N]$,
it is possible to compute in time $\operatorname{poly}(|E|, N)$ some
\emph{$\epsilon$-cost shares} $\widetilde{f}_{i, e}(p)$,
$e \in E$,
that satisfy
\begin{MathMaybe}
(1 - \epsilon) f_{i, e}(p)
\leq
\widetilde{f}_{i, e}(p)
\leq
(1 + \epsilon) f_{i, e}(p)
\end{MathMaybe}
w.h.p.
Define the \emph{$\epsilon$-individual cost} $\widetilde{C}_{i}(p)$ to be
the sum
$\widetilde{C}_{i}(p) = \sum_{e \in E} \widetilde{f}_{i, e}(p)$,
which means that
\begin{MathMaybe}
(1 - \epsilon) C_{i}(p)
\leq
\widetilde{C}_{i}(p)
\leq
(1 + \epsilon)C_{i}(p)
\end{MathMaybe}
w.h.p.
As we shall perform the computations of the $\epsilon$-cost shares (and the
$\epsilon$-individual costs) $\operatorname{poly}(|E|, N)$ times, all of them
succeed w.h.p.;
condition hereafter on this event.
\LongVersion \par\fussy \LongVersionEnd

To simplify the presentation, we assume that the values of the $\epsilon$-cost
shares $\widetilde{f}_{i, e}(p)$,
$e \in E$,
and the $\epsilon$-individual costs $\widetilde{C}_{i}(p)$ have already been
fixed before the algorithm's execution for all
$i \in [N]$
and
$p \in P$
in an (arbitrary) manner that satisfies the aforementioned
$\epsilon$-approximation inequalities;
the algorithm then merely ``exposes'' some ($\operatorname{poly}(|E|, N)$
many) of these values.
The following lemma plays a key role in the design of \texttt{Alg-ABRD}.

\begin{lemma} \label{lemma:efficient-approximation-bar}
If $M$ is a poly-time $\epsilon$-computable CSM, then
there exists a randomized procedure that given
$i \in [N]$
and
$p_{-i} \in P_{-i}$,
runs in time $\operatorname{poly}(|E|, N)$ and computes a strategy
$p_{i} \in P_{i}$
and the corresponding $\epsilon$-individual cost
$\widetilde{C}_{i}(p_{i}, p_{-i})$
such that
$\widetilde{C}_{i}(p_{i}, p_{-i})
\leq
\varrho \cdot \widetilde{C}_{i}(p_{i}', p_{-i})$
for any
$p_{i}' \in P_{i}$.
This means in particular that $p_{i}$ is an ABR
of player $i$ to $p_{-i}$ with approximation parameter $\varrho
\epsilon_{1}$.\footnote{%
All subsequent occurrences of the term ABR (and
ABRD) share the same approximation parameter $\varrho \epsilon_{1}$, hence we
may refrain from mentioning this parameter explicitly.
}
\end{lemma}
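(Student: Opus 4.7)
The plan is to reduce the ABR computation at player $i$ to a single invocation of the reply $\varrho$-oracle $\mathcal{O}_{\mathcal{P}}$ by converting the cost share structure into a toll function on the resources. The crucial observation, based on the separability and uniformity of $M$, is that $f_{i,e}(p) = 0$ whenever $e \notin p_{i}$, while if $e \in p_{i}$ the value $f_{i,e}(p)$ is determined by $w_{i}(e)$ together with the multiset $\{w_{j}(e) : j \neq i, e \in p_{j}\}$, which is fixed as soon as $p_{-i}$ is fixed. Consequently, the cost share that player $i$ would incur at $e$ for joining is independent of which other resources she picks, so we may define a toll $\tau(e)$ to equal the $\epsilon$-cost share $\widetilde{f}_{i,e}(\hat{p}_{i}, p_{-i})$ computed for any reference strategy $\hat{p}_{i} \in P_{i}$ containing $e$ (and treated as $+\infty$, i.e., removed from the oracle's ground set, if no such $\hat{p}_{i}$ exists). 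By the poly-time $\epsilon$-computability of $M$, all of these toll values can be produced in $\operatorname{poly}(|E|, N)$ time, and by construction $\widetilde{f}_{i,e}(p_{i}, p_{-i}) = \tau(e)$ for every $p_{i} \ni e$.

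Next I would invoke $\mathcal{O}_{\mathcal{P}}$ on the reply collection $P_{i}$ with the toll function $\tau$, adding an infinitesimal perturbation to zero-valued tolls so as to meet the strict positivity requirement of the oracle without affecting the bound. The oracle returns some $p_{i} \in P_{i}$ with $\tau(p_{i}) \leq \varrho \cdot \tau(p_{i}')$ for every $p_{i}' \in P_{i}$. Because $\widetilde{f}_{i,e}(p_{i}', p_{-i})$ vanishes off $p_{i}'$ and equals $\tau(e)$ on $p_{i}'$, summing over $e$ yields $\widetilde{C}_{i}(p_{i}', p_{-i}) = \tau(p_{i}')$ for every $p_{i}' \in P_{i}$ (and similarly for the returned $p_{i}$), so the oracle guarantee translates to
\begin{equation*}
\widetilde{C}_{i}(p_{i}, p_{-i}) \, \leq \, \varrho \cdot \widetilde{C}_{i}(p_{i}', p_{-i}) \qquad \forall p_{i}' \in P_{i}.
\end{equation*}
Computing $\widetilde{C}_{i}(p_{i}, p_{-i})$ explicitly from the returned $p_{i}$ is immediate from the poly-time $\epsilon$-computability of $M$ (and in fact equals $\tau(p_{i})$).

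Finally, I would translate this bound on $\widetilde{C}_{i}$ into an ABR guarantee on the true individual cost $C_{i}$. Conditioning on the high-probability event that $(1-\epsilon) C_{i} \leq \widetilde{C}_{i} \leq (1+\epsilon) C_{i}$ holds throughout (justified exactly as in the paragraph preceding the lemma, since only polynomially many approximate cost shares are ever computed), chaining the two-sided inequality with the $\varrho$ bound gives
\begin{equation*}
C_{i}(p_{i}, p_{-i}) \, \leq \, \frac{\widetilde{C}_{i}(p_{i}, p_{-i})}{1-\epsilon} \, \leq \, \frac{\varrho \cdot \widetilde{C}_{i}(p_{i}', p_{-i})}{1-\epsilon} \, \leq \, \frac{(1+\epsilon)\,\varrho}{1-\epsilon} \, C_{i}(p_{i}', p_{-i}) \, = \, \varrho \epsilon_{1} \cdot C_{i}(p_{i}', p_{-i}),
\end{equation*}
certifying that $p_{i}$ is an ABR of player $i$ to $p_{-i}$ with approximation parameter $\varrho \epsilon_{1}$. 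The subtle point of this argument, and what I expect to be the main thing to justify carefully, is the legitimacy of condensing player $i$'s per-resource cost share into a single scalar toll; this hinges precisely on the separability and uniformity of $M$, without which $f_{i,e}(p)$ could depend on the rest of $p_{i}$ and no edge-additive toll function on $P_{i}$ would capture $C_{i}(\cdot, p_{-i})$.
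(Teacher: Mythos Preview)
Your argument is correct and follows essentially the same approach as the paper: use separability and uniformity of $M$ to define a per-resource toll equal to the $\epsilon$-cost share $\widetilde{f}_{i,e}(S_e\cup\{i\})$, invoke the reply $\varrho$-oracle on that toll function, and read off the guarantee on $\widetilde{C}_i$. Your write-up is in fact a bit more careful than the paper's --- you explicitly address the oracle's strict-positivity requirement and spell out the chain of inequalities yielding the $\varrho\epsilon_1$ ABR parameter --- but the underlying idea is identical.
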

\LongVersion 
\begin{proof}
Construct the toll function
$\tau_{i, p_{-i}} : E \rightarrow \mathbb{R}_{\geq 0}$
by setting $\tau_{i, p_{-i}}(e)$ to be the $\epsilon$-cost share
$\widetilde{f}_{i, e}(S_{e} \cup \{ i \})$,
where
$S_{e} = \{ j \in [N] - \{ i \} \mid e \in p_{j} \}$.
This can be done in time $\operatorname{poly}(|E|, N)$ since $M$ is
poly-time $\epsilon$-computable.
The assumption that $M$ is separable and uniform guarantees that a reply
$p_{i} \in P_{i}$
that minimizes the total toll
$\sum_{e \in p_{i}} \tau_{i, p_{-i}}(e)$
up to factor $\varrho$ satisfies
$\widetilde{C}_{i}(p_{i}, p_{-i})
\leq
\varrho \cdot \widetilde{C}_{i}(p_{i}', p_{-i})$
for any
$p_{i}' \in P_{i}$
and that the sum
$\sum_{e \in p_{i}} \tau_{i, p_{-i}}(e)$
is the desired $\epsilon$-individual cost.
Such a reply $p_{i}$ can be computed using the reply $\varrho$-oracle
$\mathcal{O}_{\mathcal{P}}$.
\end{proof}
\LongVersionEnd 

Employing the procedure promised by
\Lem{}~\ref{lemma:efficient-approximation-bar}, \texttt{Alg-ABRD} simulates an
ABRD
$p^{0}, p^{1}, \dots$
of the GND game induced by $\mathcal{I}$ and $M$ that includes at most $T$
iterations for some
$T = \operatorname{poly}(|E|, N)$
whose exact value will be determined later.
This is done as follows (see also
Pseudocode~\ref{algorithm_ABRD_for_EER_game}).

\begin{algorithm}
\caption{\texttt{Alg-ABRD}}\label{algorithm_ABRD_for_EER_game}
\begin{algorithmic}[1]
\State
\textbf{Input:} A GND instance
$\mathcal{I} =
\Left\langle
E,
\Left\{ P_{i}, \{ w_{i}(e) \}_{e \in E} \Right\}_{i \in [N]},
\Left\{ \alpha_{j} \Right\}_{j \in [q]},
\Left\{ \sigma_{e}, \Left\{ \xi_{e, j} \Right\}_{j \in [q]} \Right\}_{e \in E}
\Right\rangle$
\State
\textbf{Output:} A profile $p \in P$ that is feasible for the given instance $\mathcal{I}$
\ForAll{$i \in [N]$}
  \State{set $\tau_{i}^{0}$ to be the toll function defined by setting
$\tau_{i}^{0}(e) = F_{e}(w_{i}(e))$
for every
$e \in E$}
  \State{set $p_{i}^{0}$ to be the output of oracle
$\mathcal{O}_{\mathcal{P}}$ on $E$, $P_{i}$, and $\tau_{i}^{0}$}
\EndFor
\State{$t \leftarrow 0$}
\While{$t < T$}
  \State{$t \leftarrow t + 1$}
  \ForAll{$i \in [N]$}
    \State{set $p'_{i}$ to be an ABR of player $i$ to $p_{-i}^{t - 1}$ with
approximation parameter $\varrho \epsilon_{1}$}
    \State{$\delta_{i}^{t} \leftarrow
\widetilde{C}_{i}(p^{t - 1}) -
\epsilon_{1} \cdot \widetilde{C}_{i}(p'_{i}, p_{-i}^{t - 1})$}
  \EndFor
  \If{$\delta_{i}^{t} \leq 0$ for all $i \in [N]$}
    \State{$p^{t} \leftarrow p^{t - 1}$}
    \State{\texttt{break}}
  \EndIf
  \State{$\Delta^{t} \leftarrow \sum_{i \in [N]} \delta_{i}^{t}$}
  \State{pick some $j \in [N]$ such that
$\delta_{i}^{t} > 0$
and
$\delta_{i}^{t} \geq \Delta^{t} / N$}
  \State{$p^{t} \leftarrow (p'_{j}, p_{-j}^{t - 1})$}
\EndWhile
\State{$t^{*} = \operatorname{argmin}_{t} C(p^{t})$}
\State{return $p^{t^{*}}$}
\end{algorithmic}
\end{algorithm}

Set $p^{0}$ by taking $p^{0}_{i}$,
$i \in [N]$,
to be the strategy generated by $\mathcal{O}_{\mathcal{P}}$ for the toll
function $\tau_{i}^{0}$ defined by setting
$\tau_{i}^{0}(e) = F_{e}(w_{i}(e))$.
Assuming that $p^{t - 1}$,
$1 \leq t \leq T$,
was already constructed, we construct $p^{t}$ as follows.
For
$i \in [N]$,
employ the procedure promised by
\Lem{}~\ref{lemma:efficient-approximation-bar} to compute an ABR $p'_{i}$ of
player $i$ to
$p_{-i}^{t - 1}$
and let
$\delta_{i}^{t}
=
\widetilde{C}_{i}(p^{t - 1}) -
\epsilon_{1} \cdot \widetilde{C}_{i}(p'_{i}, p_{-i}^{t - 1})$.
If
$\delta_{i}^{t} \leq 0$
for all
$i \in [N]$,
then the ABRD stops, and we set
$p^{t} = p^{t - 1}$;
in this case, we say that the ABRD \emph{converges}.
Otherwise, fix
$\Delta^{t} = \sum_{i \in [N]} \delta_{i}^{t}$
and choose some player
$i' \in [N]$
so that
\begin{equation}
\label{formula_condition_of_being_selected_for_updating_the_strategy}
\delta_{i'}^{t} > 0
\quad \text{and} \quad
\delta_{i'}^{t} \geq \frac{1}{N}\Delta^{t}
\end{equation}
to update her strategy, setting
$p^{t} = (p'_{i'}, p_{-i'}^{t - 1})$
(the existence of such a player is guaranteed by the pigeonhole principle).

When the ABRD terminates (either because it has reached iteration
$t = T$
or because it converged), \texttt{Alg-ABRD} chooses an iteration $t^{*}$ such
that the corresponding strategy profile $p^{t^{*}}$ has a minimum total cost
$C(p^{t^{*}})$ and outputs $p^{t^{*}}$.
(Recall that in contrast to the player individual costs, the social cost can
always be computed efficiently.)

\section{Analyzing \texttt{Alg-ABRD}}
\label{sec:analyzing-abrd-apx}
In this section, we begin our journey towards bounding the approximation ratio
and run-time of \texttt{Alg-ABRD} as promised by
\Thm{}~\ref{theorem:approximation-framework}.
The analysis relies on a careful choice of the CSM
$M = \Left\{ f_{i, e}(\cdot) \Right\}_{i \in [N], e \in E}$.
In particular, we are looking for a CSM whose induced GND
game is smooth and admits a bounded potential function with the right choice
of parameters.
\LongVersion 
The reason for that will be made clear in
\Thm{}~\ref{theorem:approx-ratio-Alg-ABRD} whose proof
relies on \Lem{}
\ref{lemma:ABRD-converge-approx-NE} and
\ref{lemma:optimal-reply-guarantees-approx-ratio};
the former provides an upper bound on the approximation ratio when the ABRD
converges, whereas the latter is used to bound the number $T$ of steps in the
ABRD (and is the key to ensuring strongly polynomial run-time).
\LongVersionEnd 

\ShortVersion 
\addtocounter{theorem}{1}
\ShortVersionEnd 
\LongVersion 
\begin{lemma}
\label{lemma:ABRD-converge-approx-NE}
Suppose that the CSM $M$ is chosen so that the induced GND
game is
$(\lambda, \mu)$-smooth with
$\mu < 1 / (\varrho \epsilon_{1}^{2})$.
If the ABRD simulated in \texttt{Alg-ABRD} converges at step $t$ for any
$t \in [T]$,
then the last strategy profile $p^{t}$ satisfies
\begin{MathMaybe}
C(p^{t})
\leq
\frac{\varrho \epsilon_{1}^{2}\lambda}{1 - \varrho \epsilon_{1}^2 \mu}
\cdot C^{*} \, .
\end{MathMaybe}
\end{lemma}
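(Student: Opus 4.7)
The plan is to execute the standard smoothness price-of-anarchy argument, adapted carefully to deal with the three compounding approximation factors: the factor $\varrho$ from the reply oracle, one factor of $\epsilon_1$ from translating between $\widetilde{C}_i$ and $C_i$ via the two-sided $\epsilon$-sandwich, and a second factor of $\epsilon_1$ from the explicit slack built into the convergence test $\delta_i^t \leq 0$.

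First I would unpack the convergence condition. At the converging step, we have $\delta_i^t \leq 0$ for every $i \in [N]$, equivalently $\widetilde{C}_i(p^{t-1}) \leq \epsilon_1 \cdot \widetilde{C}_i(p'_i, p_{-i}^{t-1})$, where $p'_i$ is the strategy returned by the procedure of \Lem{}~\ref{lemma:efficient-approximation-bar}. That procedure further guarantees $\widetilde{C}_i(p'_i, p_{-i}^{t-1}) \leq \varrho \cdot \widetilde{C}_i(p''_i, p_{-i}^{t-1})$ for \emph{every} $p''_i \in P_i$, so in particular for $p''_i = p^*_i$. Conditioning (as the algorithm already does) on the event that every $\widetilde{f}_{i,e}$ computed during the run lies in $[(1-\epsilon), (1+\epsilon)]$ times the true $f_{i,e}$, the same sandwich propagates to individual costs, yielding
\begin{MathMaybe}
C_i(p^{t-1}) \;\leq\; \tfrac{1}{1-\epsilon}\, \widetilde{C}_i(p^{t-1}) \;\leq\; \tfrac{\varrho\epsilon_1}{1-\epsilon}\, \widetilde{C}_i(p^*_i, p_{-i}^{t-1}) \;\leq\; \varrho \epsilon_1^2\, C_i(p^*_i, p_{-i}^{t-1}).
\end{MathMaybe}
Because convergence sets $p^t = p^{t-1}$, this reads $C_i(p^t) \leq \varrho \epsilon_1^2 \cdot C_i(p^*_i, p_{-i}^t)$.

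Next I would sum over $i \in [N]$ and apply the $(\lambda,\mu)$-smoothness inequality (\ref{equation:smoothness-definition}) with the pair $(p, p') = (p^t, p^*)$:
\begin{MathMaybe}
C(p^t) \;=\; \sum_{i} C_i(p^t) \;\leq\; \varrho \epsilon_1^2 \sum_i C_i(p^*_i, p_{-i}^t) \;\leq\; \varrho \epsilon_1^2 \bigl(\lambda C^* + \mu\, C(p^t)\bigr).
\end{MathMaybe}
The hypothesis $\mu < 1/(\varrho \epsilon_1^2)$ makes the coefficient $1 - \varrho \epsilon_1^2 \mu$ positive, so rearranging terms yields
$C(p^t) \leq \frac{\varrho \epsilon_1^2 \lambda}{1 - \varrho \epsilon_1^2 \mu} \cdot C^*$, as desired.

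The only real subtlety is the bookkeeping of the three approximation factors, and in particular remembering that applying Lemma~\ref{lemma:efficient-approximation-bar} to $p^*_i$ itself (rather than to a strategy the algorithm actually evaluates) is legitimate: the oracle's guarantee $\widetilde{C}_i(p'_i, p_{-i}^{t-1}) \leq \varrho \widetilde{C}_i(p''_i, p_{-i}^{t-1})$ is a purely mathematical statement about the returned strategy $p'_i$, so it holds against every $p''_i \in P_i$, regardless of whether the algorithm ever computes $\widetilde{C}_i(p''_i, p_{-i}^{t-1})$ explicitly. Once this point is granted, the rest of the proof is just the textbook smoothness manipulation and the hypothesis $\mu < 1/(\varrho \epsilon_1^2)$ provides exactly the room needed to isolate $C(p^t)$.
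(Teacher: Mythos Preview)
Your proposal is correct and follows essentially the same argument as the paper: pass from $C_i(p^t)$ to $\widetilde{C}_i(p^t)$ via the $(1-\epsilon)$ bound, use the convergence condition $\delta_i^t\le 0$ to pick up an $\epsilon_1$, invoke Lemma~\ref{lemma:efficient-approximation-bar} against $p^*_i$ to pick up the $\varrho$, pass back from $\widetilde{C}_i$ to $C_i$ via the $(1+\epsilon)$ bound for the second $\epsilon_1$, then sum and apply smoothness. Your handling of the $p^{t}=p^{t-1}$ identification at convergence is slightly more explicit than the paper's, and your closing remark about why the oracle guarantee applies to $p^*_i$ correctly reflects the paper's convention that all $\widetilde{C}_i(\cdot)$ values are treated as pre-fixed quantities satisfying the $\epsilon$-sandwich for every profile, not just those the algorithm evaluates.
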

\LongVersionEnd 
\LongVersion 
\begin{proof}
Recalling that we use $p'_{i}$ to represent the ABR of player $i$ to
$p^{t}$, we observe that
\begin{align*}
C(p^{t})
\; = & \;
\sum_{i}C_{i}(p^{t}) \\
\leq & \;
\frac{1}{1 - \epsilon} \sum_{i}\widetilde{C}_{i}(p^{t}) \\
\leq & \;
\frac{1}{1 - \epsilon} \epsilon_{1} \sum_{i} \widetilde{C}_{i}(p'_{i},
p_{-i}^{t}) \\
\leq & \;
\frac{1}{1 - \epsilon} \epsilon_{1} \cdot \varrho \sum_{i} \widetilde{C}_{i}(p_{i}^{*},
p_{-i}^{t}) \\
\leq & \;
\varrho \epsilon_{1}^2 \cdot \sum_{i} C_{i}(p_{i}^{*}, p_{-i}^{t}) \\
\leq & \;
\varrho \epsilon_{1}^2 (\lambda \cdot C^{*} + \mu \cdot C(p^{t})) \, ,
\end{align*}
where
the second and fifth transitions follow from the definition of
$\epsilon$-individual cost,
the third transition holds since the algorithm converges at step $t$,
the fourth transition holds following
\Lem{}~\ref{lemma:efficient-approximation-bar},
and
the sixth transition follows from the definition of
$(\lambda, \mu)$-smoothness.
\end{proof}
\LongVersionEnd 

\ShortVersion 
\addtocounter{theorem}{1}
\ShortVersionEnd 
\LongVersion 
\begin{lemma}
\label{lemma:optimal-reply-guarantees-approx-ratio}
The initial strategy profile $p^{0}$ of \texttt{Alg-ABRD} satisfies
$C(p^{0}) \leq \varrho \cdot N^{\max_{j} \alpha_{j}} \cdot C^{*}$.
\end{lemma}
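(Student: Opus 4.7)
The plan is to exploit the superadditivity of each monomial $l \mapsto l^{\alpha_j}$ on $[0,\infty)$ (which holds since $\alpha_j > 1$) to sandwich the ``pointwise'' sum $\sum_{i}\sum_{e \in p_i} F_{e}(w_{i}(e))$ between $C(p^{0})$ and $N \cdot C^{*}$, up to the announced factor, and then to use the $\varrho$-approximation guarantee of the oracle to pass from $p^{0}$ to $p^{*}$.

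Let $\alpha = \max_{j}\alpha_{j}$ and, for a profile $p$, let $S_{e}^{p} = \{i : e \in p_{i}\}$. The first step is to upper bound $C(p^{0})$ by the ``pointwise'' sum. Fix a resource $e$ with $S_{e}^{p^{0}} \neq \emptyset$. Since each $w_{i}(e)$ is a positive integer and $\alpha_{j} \geq 1$, superadditivity gives $\sum_{i \in S_{e}^{p^{0}}} w_{i}(e)^{\alpha_{j}} \leq (l_{e}^{p^{0}})^{\alpha_{j}}$; combining with the trivial bound $(l_{e}^{p^{0}})^{\alpha_{j}} \leq N^{\alpha_{j}-1} \sum_{i \in S_{e}^{p^{0}}} w_{i}(e)^{\alpha_{j}}$ from the power mean inequality yields
\[
F_{e}(l_{e}^{p^{0}}) \; \leq \; N^{\alpha - 1}\Bigl(\sigma_{e} + \sum_{j} \xi_{e,j}\!\!\sum_{i \in S_{e}^{p^{0}}}\!\! w_{i}(e)^{\alpha_{j}}\Bigr) \; \leq \; N^{\alpha - 1}\!\!\sum_{i \in S_{e}^{p^{0}}}\!\! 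F_{e}(w_{i}(e)),
\]
where the second inequality uses $|S_{e}^{p^{0}}| \geq 1$ to charge the single $\sigma_{e}$ against the $|S_{e}^{p^{0}}|$ copies of $\sigma_{e}$ on the right. Summing over $e$ and swapping the order of summation gives $C(p^{0}) \leq N^{\alpha-1}\sum_{i} \sum_{e \in p_{i}^{0}} F_{e}(w_{i}(e))$.

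Next, I invoke the reply $\varrho$-oracle guarantee. Each $p_{i}^{0}$ was generated by $\mathcal{O}_{\mathcal{P}}$ on the toll function $\tau_{i}^{0}(e) = F_{e}(w_{i}(e))$, so $\sum_{e \in p_{i}^{0}} F_{e}(w_{i}(e)) \leq \varrho \cdot \sum_{e \in p_{i}^{*}} F_{e}(w_{i}(e))$ for every $i$. Thus
\[
C(p^{0}) \; \leq \; \varrho \cdot N^{\alpha - 1} \sum_{e}\sum_{i \in S_{e}^{p^{*}}} F_{e}(w_{i}(e)).
\]

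Finally, I bound the right-hand side by $N \cdot C^{*}$. For every $e$ with $S_{e}^{p^{*}} \neq \emptyset$, superadditivity again gives $\sum_{i \in S_{e}^{p^{*}}} w_{i}(e)^{\alpha_{j}} \leq (l_{e}^{*})^{\alpha_{j}}$, so
\[
\sum_{i \in S_{e}^{p^{*}}}\! F_{e}(w_{i}(e)) \; = \; |S_{e}^{p^{*}}|\sigma_{e} + \sum_{j} \xi_{e,j}\!\! \sum_{i \in S_{e}^{p^{*}}}\!\! w_{i}(e)^{\alpha_{j}} \; \leq \; N \Bigl( \sigma_{e} + \sum_{j} \xi_{e,j} (l_{e}^{*})^{\alpha_{j}} \Bigr) \; = \; N \cdot F_{e}(l_{e}^{*}).
\]
Summing over $e$ with $S_{e}^{p^{*}} \neq \emptyset$ yields the claimed bound $\sum_{i}\sum_{e \in p_{i}^{*}} F_{e}(w_{i}(e)) \leq N \cdot C^{*}$, and combining with the display above produces $C(p^{0}) \leq \varrho \cdot N^{\alpha} \cdot C^{*}$.

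The proof is essentially routine; the only subtle point is applying superadditivity of $x^{\alpha_{j}}$ in \emph{both} directions (once to pass from the aggregated load to individual contributions, and once in the reverse direction for the optimal profile), while being careful to handle the startup cost $\sigma_{e}$ only on resources that are actually used so that it is not overcounted.
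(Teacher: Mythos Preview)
Your proof is correct and follows essentially the same approach as the paper: both arguments sandwich the quantity $\sum_{i}\sum_{e \in p_i} F_e(w_i(e))$ between $C(p^0)/N^{\alpha-1}$ (via the power-mean/convexity bound $(l_e)^{\alpha_j} \le |S_e|^{\alpha_j-1}\sum_i w_i(e)^{\alpha_j}$) and $\varrho N \cdot C^*$ (via the oracle guarantee together with superadditivity $\sum_i w_i(e)^{\alpha_j} \le (l_e^*)^{\alpha_j}$ and the trivial bound $|S_e^{p^*}| \le N$ for the startup costs). The only difference is the order in which you present the two halves; the paper first bounds the pointwise sum by $\varrho N\cdot C^*$ and then relates it to $C(p^0)$, whereas you do the reverse.
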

\LongVersionEnd 
\LongVersion 
\begin{proof}
The construction of $p^{0}$ guarantees that
\[
\sum_{e \in p_{i}^{0}} \Big[
\sigma_{e} + \sum_{j \in [q]} \xi_{e, j} (w_{i}(e))^{\alpha_{j}}
\Big]
\, \leq \,
\varrho \cdot \sum_{e \in p_{i}^{*}} \Big[
\sigma_{e} + \sum_{j \in [q]} \xi_{e, j} (w_{i}(e))^{\alpha_{j}}
\Big] \, .
\]
Therefore,
\begin{align*}
\sum_{i \in [N] } \sum_{e \in p_{i}^{0}} \Big[
\sigma_{e} + \sum_{j \in [q]}\xi_{e, j} (w_{i}(e))^{\alpha_{j}}
\Big]
\, \leq & \,
\varrho \cdot \sum_{i \in [N]} \sum_{e \in p_{i}^{*}} \Big[
\sigma_{e} + \sum_{j \in [q]} \xi_{e, j} (w_{i}(e))^{\alpha_{j}}
\Big] \\
\leq & \,
\varrho \cdot \sum_{e \in p^{*}} \Big[
N \cdot \sigma_{e} +
\sum_{j \in [q]} \xi_{e, j} \cdot \sum_{i : e \in p_{i}^{*}} (w_{i}(e))^{\alpha_{j}}
\Big] \\
\leq & \,
\varrho \cdot \sum_{e \in p^{*}} \bigg[
N \cdot \sigma_{e} +
\sum_{j \in [q]} \xi_{e, j} \cdot  \Big( \sum_{i : e \in p_{i}^{*}} w_{i}(e)
\Big)^{\alpha_{j}}
\bigg] \\
\leq & \,
\varrho \cdot N \sum_{e \in p^{*}} \bigg[
\sigma_{e} +
\sum_{j \in [q]} \xi_{e, j} \cdot  \Big( \sum_{i: e \in p_{i}^{*}} w_{i}(e)
\Big)^{\alpha_{j}}
\bigg] \\
= & \,
\varrho N \cdot C^{*} \, ,
\end{align*}
where the third transition follows from the superadditivity and the last
transition holds since
$l_{e}^{p*} > 0$
for every
$e \in p^{*}$.
Then,
\begin{align*}
C(p^{0})
\, =& \,
\sum_{e \in p^{0}} \bigg[
\sigma_{e} +
\sum_{j \in [q]} \xi_{e, j} \Big( \sum_{i: e \in p_{i}^{0}} w_{i}(e)
\Big)^{\alpha_{j}}
\bigg] \\
\leq & \,
\sum_{e \in p^{0}} \Big[
\sigma_{e} +
\sum_{j \in [q]} \xi_{e, j} \cdot N^{\alpha_{j} - 1} \sum_{i: e \in p_{i}^{0}}
(w_{i}(e))^{\alpha_{j}}
\Big] \\
\leq & \,
\sum_{e \in p^{0}} \Big[
\sum_{i: e \in p_{i}^{0}} \sigma_{e} +
\sum_{j \in [q]} \xi_{e, j} \cdot N^{\alpha_{j} - 1} \sum_{i: e \in p_{i}^{0}}
(w_{i}(e))^{\alpha_{j}}
\Big] \\
\leq & \,
N^{\max_{j}\alpha_{j} - 1} \sum_{e \in p^{0}} \sum_{i: e \in p_{i}^{0}} \Big[
\sigma_{e} + \sum_{j} \xi_{e, j} (w_{i}(e))^{\alpha_{j}} \Big] \\
= & \,
N^{\max_{j}\alpha_{j} - 1} \sum_{i \in [N] } \sum_{e \in p_{i}^{0}} \Big[
\sigma_{e} + \sum_{j} \xi_{e, j} (w_{i}(e))^{\alpha_{j}} \Big] \\
\leq & \,
\varrho N^{\max_{j}\alpha_{j}} \cdot C^{*} \, ,
\end{align*}
where the second transition holds because owing to the convexity,
\[
\bigg( \frac{\sum_{i: e \in p_{i}^{0}} w_{i}(e)}{|S_{e}^{p^{0}}|}
\bigg)^{\alpha_{j}}
\, \leq \,
\frac{1}{|S_{e}^{p^{0}}|} \sum_{i: e \in p_{i}^{0}} (w_{i}(e))^{\alpha_{j}} \, ,
\]
which means that
\[
\Big( \sum_{i: e \in p_{i}^{0}} w_{i}(e) \Big)^{\alpha_{j}}
\, \leq \,
|S_{e}^{p^{0}}|^{\alpha_{j} - 1} \sum_{i: e \in p_{i}^{0}}
(w_{i}(e))^{\alpha_{j}}
\, \leq \,
N^{\alpha_{j} - 1} \sum_{i: e \in p_{i}^{0}} (w_{i}(e))^{\alpha_{j}} \, .
\]
The assertion follows.
\end{proof}
\LongVersionEnd 

\begin{theorem}
\label{theorem:approx-ratio-Alg-ABRD}
Suppose that the CSM $M$ is chosen so that the induced GND
game admits an $(A, B)$-bounded potential function $\Phi$ and is
$(\lambda, \mu)$-smooth
with
$\mu < 1 / (\varrho \epsilon_{1}^{2})$.
Let
$Q = \frac{2 \epsilon_{1} N A}{1 - \varrho \epsilon_{1}^{2} \mu}$.
If
$T =
\Left\lceil
Q \cdot \ln \Left( A B N^{\max_{j} \alpha_{j}} \Right)
\Right\rceil$,
then the output $p^{t^{*}}$ of \texttt{Alg-ABRD} satisfies
\begin{MathMaybe}
C(p^{t^{*}})
\leq
\frac{2 \varrho \epsilon_{1}^{2} \lambda}{1 - \varrho \epsilon_{1}^{2} \mu}
\cdot C^{*} \, .
\end{MathMaybe}
\end{theorem}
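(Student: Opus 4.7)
The plan is to dichotomize on whether the simulated ABRD converges within the first $T$ iterations. If it converges at some step $t \leq T$, then \Lem{}~\ref{lemma:ABRD-converge-approx-NE} immediately yields $C(p^{t}) \leq \frac{\varrho\epsilon_{1}^{2}\lambda}{1-\varrho\epsilon_{1}^{2}\mu}\, C^{*}$, which is already within the desired bound (in fact, half of it), so the output $p^{t^{*}}$ certainly satisfies the claim. It therefore suffices to handle the complementary case, where the ABRD executes all $T$ iterations without ever converging.

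In that case, I would argue by contradiction, assuming that $C(p^{t}) > \frac{2\varrho\epsilon_{1}^{2}\lambda}{1-\varrho\epsilon_{1}^{2}\mu}\, C^{*}$ for every $t \in \{0, 1, \dots, T\}$, and showing that the potential must then drop by a constant multiplicative factor at every step. The key calculation is a lower bound on $\Delta^{t} = \sum_{i} \delta_{i}^{t}$: unwinding its definition, using the $\epsilon$-approximation property of the cost shares to translate between $\widetilde{C}_{i}$ and $C_{i}$, invoking the $\varrho$-guarantee of \Lem{}~\ref{lemma:efficient-approximation-bar} applied with the hypothetical alternative $p_{i}^{*}$, and then plugging into the $(\lambda, \mu)$-smoothness inequality (\ref{equation:smoothness-definition}) on the pair $(p^{*}, p^{t-1})$ should yield
\[
\Delta^{t} \;\geq\; (1 - \epsilon)\Bigl[(1 - \varrho\epsilon_{1}^{2}\mu)\, C(p^{t-1}) \;-\; \varrho\epsilon_{1}^{2}\lambda\, C^{*}\Bigr].
\]
Under the contrary hypothesis, the bracket is at least $(1-\varrho\epsilon_{1}^{2}\mu)\, C(p^{t-1})/2$, so by the pigeonhole rule (\ref{formula_condition_of_being_selected_for_updating_the_strategy}) the chosen $i'$ satisfies $\delta_{i'}^{t} \geq \Delta^{t}/N$ of the same asymptotic order.

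Translating this into a drop in $\Phi$ proceeds as follows: since $\Phi$ is a potential function, $\Phi(p^{t-1}) - \Phi(p^{t}) = C_{i'}(p^{t-1}) - C_{i'}(p^{t})$, and a short calculation using $\widetilde{C}_{i} \in \bigl[(1-\epsilon)C_{i},\,(1+\epsilon)C_{i}\bigr]$ gives $C_{i'}(p^{t-1}) - C_{i'}(p^{t}) \geq \delta_{i'}^{t}/(1+\epsilon)$. Combining this with $(A, B)$-boundedness, i.e., $C(p^{t-1}) \geq \Phi(p^{t-1})/A$, delivers the geometric decrease $\Phi(p^{t}) \leq (1 - 1/Q)\, \Phi(p^{t-1})$ with the same $Q$ as in the statement. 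Iterating $T$ times and using $(1 - 1/Q)^{T} \leq e^{-T/Q}$ together with the stipulated $T$, one obtains $\Phi(p^{T}) \leq \Phi(p^{0})/\bigl(A B\, N^{\max_{j}\alpha_{j}}\bigr)$.

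Finally, \Lem{}~\ref{lemma:optimal-reply-guarantees-approx-ratio} bounds $\Phi(p^{0}) \leq A\cdot C(p^{0}) \leq A\varrho\, N^{\max_{j}\alpha_{j}}\, C^{*}$, so $\Phi(p^{T}) \leq \varrho\, C^{*}/B$, which via $C(p^{T}) \leq B\, \Phi(p^{T})$ gives $C(p^{T}) \leq \varrho\, C^{*}$. Comparing against the contrary hypothesis $C(p^{T}) > \frac{2\varrho\epsilon_{1}^{2}\lambda}{1-\varrho\epsilon_{1}^{2}\mu}\, C^{*}$ forces $1 > \frac{2\epsilon_{1}^{2}\lambda}{1-\varrho\epsilon_{1}^{2}\mu}$, which is ruled out by the standard consequence $\lambda + \mu \geq 1$ of smoothness (obtained by plugging $p' = p$ into (\ref{equation:smoothness-definition})) together with the assumption $\mu < 1/(\varrho\epsilon_{1}^{2})$; this contradiction closes the argument. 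The main obstacle I anticipate is the careful bookkeeping of the $\epsilon_{1}$-factors in the first two steps: the $\varrho$-guarantee of \Lem{}~\ref{lemma:efficient-approximation-bar} must be applied in its $\widetilde{C}_{i}$-form (comparing $p'_{i}$ against $p_{i}^{*}$ \emph{before} converting to the true costs), so that the final bound comes out with exactly $\epsilon_{1}^{2}$ rather than $\epsilon_{1}^{3}$ as in the theorem statement.
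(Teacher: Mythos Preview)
Your proposal is correct and follows essentially the same route as the paper's proof: handle convergence via \Lem{}~\ref{lemma:ABRD-converge-approx-NE}; otherwise, under the contrary hypothesis, exhibit a multiplicative $(1-1/Q)$ drop in $\Phi$ at every step, iterate $T$ times, invoke \Lem{}~\ref{lemma:optimal-reply-guarantees-approx-ratio} to obtain $C(p^{T})\leq\varrho\,C^{*}$, and contradict the badness of $p^{T}$. The one difference worth flagging is in the final contradiction: the paper derives $\frac{\lambda}{1-\mu}\geq 1$ by invoking the existence of a pure Nash equilibrium (guaranteed by the potential-function hypothesis), whereas your substitution $p'=p$ into the smoothness inequality~(\ref{equation:smoothness-definition}) yields the equivalent $\lambda+\mu\geq 1$ more directly and without that detour.
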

\LongVersion 
\begin{proof}
\Lem{}~\ref{lemma:ABRD-converge-approx-NE} ensures that
the assertion holds if our ABRD converges at any step
$t \leq T$,
so it is left to analyze the case where the ABRD does not converge.
We say that profile $p^{t}$ of the ABRD is \emph{bad} if
\[
C(p^{t})
\, > \,
\frac{2 \varrho \epsilon_{1}^{2} \lambda}{1 - \varrho \epsilon_{1}^{2} \mu}
\cdot C^{*} \, .
\]

\begin{claim}
\label{proposition_potential_function_decreases_significantly_in_the_steps_with_bad_states}
For any
$t < T$,
if $p^{t}$ is bad, then
$\Phi(p^{t + 1}) < (1 - 1 / Q) \cdot \Phi(p^{t})$.
\end{claim}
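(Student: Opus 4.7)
The plan is to show that whenever $p^t$ is bad, the player $i'$ selected at step $t+1$ enjoys a sufficiently large improvement $\delta_{i'}^{t+1}$ to force a multiplicative drop in the potential. The argument has four stages, linked by smoothness at one end and $(A,B)$-boundedness at the other.

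First, I would lower bound each individual improvement score $\delta_{i}^{t+1}$ in terms of exact (rather than $\epsilon$-approximated) costs. Starting from the definition $\delta_{i}^{t+1} = \widetilde{C}_{i}(p^{t}) - \epsilon_{1}\widetilde{C}_{i}(p'_{i},p_{-i}^{t})$, I apply the two-sided $\epsilon$-approximation guarantee on the cost shares, noting that $\epsilon_{1}(1+\epsilon)/(1-\epsilon) = \epsilon_{1}^{2}$, to obtain a lower bound of the form $(1-\epsilon)\bigl[C_{i}(p^{t}) - \epsilon_{1}^{2} C_{i}(p'_{i},p_{-i}^{t})\bigr]$. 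Next, using \Lem{}~\ref{lemma:efficient-approximation-bar} with $p_{i}^{*}$ as the test reply, I replace $C_{i}(p'_{i},p_{-i}^{t})$ by $\varrho\epsilon_{1} \cdot C_{i}(p_{i}^{*},p_{-i}^{t})$. Summing over $i$ and invoking $(\lambda,\mu)$-smoothness gives
\begin{MathMaybe}
\Delta^{t+1} \;\geq\; (1-\epsilon)\bigl[(1 - \varrho\epsilon_{1}^{2}\mu)\,C(p^{t}) - \varrho\epsilon_{1}^{2}\lambda\,C^{*}\bigr].
\end{MathMaybe}

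Second, I exploit the assumption that $p^{t}$ is bad, which exactly states $C^{*} < \tfrac{1 - \varrho\epsilon_{1}^{2}\mu}{2\varrho\epsilon_{1}^{2}\lambda} C(p^{t})$. Plugging this into the bound above absorbs the $C^{*}$ term and yields $\Delta^{t+1} \geq \tfrac{(1-\epsilon)(1 - \varrho\epsilon_{1}^{2}\mu)}{2}\, C(p^{t})$. In particular $\Delta^{t+1} > 0$, so the ABRD does not converge at step $t+1$ and the pigeonhole selection in \texttt{Alg-ABRD} legitimately produces some player $i'$ with $\delta_{i'}^{t+1} \geq \Delta^{t+1}/N$.

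Third, I convert this improvement score into an actual potential drop. Since only player $i'$'s strategy changes between $p^{t}$ and $p^{t+1}$, the potential function property gives $\Phi(p^{t+1}) - \Phi(p^{t}) = C_{i'}(p^{t+1}) - C_{i'}(p^{t})$. Inverting the $\epsilon$-approximation inequalities in the definition of $\delta_{i'}^{t+1}$, and again using $(1+\epsilon)/(1-\epsilon) = \epsilon_{1}$, I obtain the exact-cost improvement $C_{i'}(p^{t}) - C_{i'}(p^{t+1}) \geq \delta_{i'}^{t+1}/(1+\epsilon)$, which combined with the lower bound on $\delta_{i'}^{t+1}$ gives a decrease of at least $\tfrac{1 - \varrho\epsilon_{1}^{2}\mu}{2N\epsilon_{1}}\,C(p^{t})$ (here $\tfrac{1-\epsilon}{1+\epsilon}=\tfrac{1}{\epsilon_{1}}$ consolidates the constants).

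Finally, the $(A,B)$-bounded potential property $\Phi(p^{t})/A \leq C(p^{t})$ replaces $C(p^{t})$ by $\Phi(p^{t})/A$, producing the claimed multiplicative bound $\Phi(p^{t+1}) \leq (1 - 1/Q)\,\Phi(p^{t})$ with $Q = 2\epsilon_{1}NA/(1 - \varrho\epsilon_{1}^{2}\mu)$; the inequality is strict because $\delta_{i'}^{t+1} > 0$. The subtle point, and the one I expect to require the most care, is keeping the $\epsilon$-approximation factors aligned in the right direction: in the lower bound on $\delta_{i}^{t+1}$ we need to pessimize $\widetilde{C}_{i}$ for $p^{t}$ and optimize it for the deviation, whereas in the step that converts $\delta_{i'}^{t+1}$ back to a real-cost drop the two sides swap roles, and only the identity $\epsilon_{1}=(1+\epsilon)/(1-\epsilon)$ makes the $\varrho\epsilon_{1}$ ABR parameter combine cleanly with the $\varrho\epsilon_{1}^{2}$ factor needed by smoothness.
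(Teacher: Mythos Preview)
Your plan is essentially the paper's own argument: bound $\Delta^{t+1}$ from below via smoothness, feed in the ``bad'' hypothesis to absorb the $C^{*}$ term, pigeonhole to the selected player, translate $\delta_{i'}^{t+1}$ into a potential drop, and finish with the $(A,B)$-bound. The structure and all four stages match the paper.

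There is, however, one bookkeeping slip in your constants. After you convert $\delta_i^{t+1}$ to exact costs you have, correctly,
\[
\delta_i^{t+1}\;\ge\;(1-\epsilon)\bigl[C_i(p^t)-\epsilon_1^{2}\,C_i(p'_i,p_{-i}^{t})\bigr].
\]
But then you invoke the \emph{exact-cost} ABR guarantee $C_i(p'_i,p_{-i}^{t})\le \varrho\epsilon_1\,C_i(p_i^{*},p_{-i}^{t})$, which multiplies the $\epsilon_1^{2}$ by another $\varrho\epsilon_1$, giving $\varrho\epsilon_1^{3}$, not the $\varrho\epsilon_1^{2}$ you wrote. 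With $\varrho\epsilon_1^{3}$ the badness inequality (which is stated with $\varrho\epsilon_1^{2}$) no longer cancels cleanly, and you do not recover the exact value of $Q$ in the claim.

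The paper avoids this extra $\epsilon_1$ by applying the \emph{$\widetilde C$-level} conclusion of \Lem{}~\ref{lemma:efficient-approximation-bar}, namely $\widetilde C_i(p'_i,p_{-i}^{t})\le \varrho\,\widetilde C_i(p_i^{*},p_{-i}^{t})$, \emph{before} converting to exact costs; the single subsequent conversion then yields exactly $\varrho\epsilon_1^{2}$. If you make this one change (use the $\widetilde C$ bound rather than the $\varrho\epsilon_1$-ABR bound on $C$), your chain goes through verbatim and lands on $1/Q$ with the stated $Q$.
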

\begin{subproof}
\qedlabel{proposition_potential_function_decreases_significantly_in_the_steps_with_bad_states}
Fix
\begin{equation}
\label{formula_difference_definition_of_nabla_tau}
d^{t}
\; = \;
\frac{1}{1 - \epsilon} \Big[
\sum_{i \in [N]} \widetilde{C}_{i}(p^{t}) - \varrho \cdot
\epsilon_{1}  \sum_{i \in [N]} \widetilde{C}_{i}(p_{i}^{*}, p_{-i}^{t})
\Big] \, .
\end{equation}
This means that
\begin{align*}
C(p^{t})
\; = \;
\sum_{i \in [N]} C_{i}(p^{t})
\; \leq & \;
\frac{1}{1 - \epsilon} \sum_{i \in [N]} \widetilde{C}_{i}(p^{t}) \\
= & \;
\frac{\varrho \epsilon_{1}}{1 - \epsilon} \sum_{i \in [N]}
\widetilde{C}_{i}(p_{i}^{*}, p_{-i}^{t}) + d^{t} \\
\leq & \;
\varrho \epsilon_{1}^{2} \sum_{i \in [N]} C_{i}(p_{i}^{*}, p_{-i}^{t}) + d^{t} \\
\leq & \;
\varrho \epsilon_{1}^{2} (\lambda \cdot C^{*} + \mu C(p^{t})) +
d^{t} \, .
\end{align*}
Therefore,
$d^{t}
\geq
\Left[ 1 - \varrho \epsilon_{1}^{2} \mu \Right] C(p^{t}) -
\varrho \epsilon_{1}^{2} \lambda \cdot C^{*}$,
hence, if $p^{t}$ is bad, then $d^{t}$ satisfies
\begin{equation}
\label{formula_lower_bound_on_nabla_tau}
d^{t}
\, > \,
\Left[ 1 - \varrho \epsilon_{1}^{2} \mu \Right] C(p^{t}) -
\frac{1 - \varrho \epsilon_{1}^{2} \mu}{2} C(p^{t})
\, = \,
\frac{1 - \varrho \epsilon_{1}^{2} \mu}{2} C(p^{t}) \, .
\end{equation}

Since the ABRD does not converge at step $t$, there exists a player $i^{t}$
being selected to update its strategy.
Recalling that the ABR of player $i$ to $p^{t}$ is
denoted by $p'_{i}$, we observe that
\begingroup
\allowdisplaybreaks 
\begin{align*}
\Phi(p^{t}) - \Phi(p^{t + 1})
\; = & \;
C_{i^{t}}(p^{t}) - C_{i^{t}}(p'_{i^{t}}, p_{-i^{t}}^{t}) \\
\geq & \;
\frac{1}{1 + \epsilon} \widetilde{C}_{i^{t}}(p^{t}) - \frac{1}{1 - \epsilon}
\widetilde{C}_{i^{t}}(p'_{i^{t}}, p_{-i^{t}}^{t}) \\
= & \;
\frac{1}{1 + \epsilon} \Left[ \widetilde{C}_{i^{t}}(p^{t}) - \epsilon_{1}
\widetilde{C}_{i^{t}}(p'_{i^{t}}, p_{-i^{t}}^{t}) \Right] \\
\geq & \;
\frac{1}{1 + \epsilon} \cdot \frac{1}{N} \sum_{i \in [N]} \Left[
\widetilde{C}_{i}(p^{t}) - \epsilon_{1} \widetilde{C}_{i}(p'_{i},
p_{-j}^{t}) \Right] \\
\geq & \;
\frac{1}{1 + \epsilon} \cdot \frac{1}{N} \sum_{i \in [N]} \Left[
\widetilde{C}_{i}(p^{t}) - \varrho \cdot \epsilon_{1} \widetilde{C}_{i}(p_{i}^{*},
p_{-j}^{t}) \Right] \\
= & \;
\frac{1 - \epsilon}{1 + \epsilon} \cdot \frac{d^{t}}{N} \\
> & \;
\frac{1 - \epsilon}{1 + \epsilon} \cdot \frac{1}{2N} \Left[ 1 -
\varrho \epsilon_{1}^{2} \mu \Right] C(p^{t}) \\
\geq & \;
\frac{1}{\epsilon_{1}} \cdot \frac{1}{2N} \Left[1 - \varrho \epsilon_{1}^{2} \mu
\Right] \frac{\Phi(p^{t})}{A} \, ,
\end{align*}
\endgroup
where
the fourth transition follows from
Eq.~(\ref{formula_condition_of_being_selected_for_updating_the_strategy}),
the fifth transition holds since $p'_{i^{t}}$ is the ABR promised by Lemma \ref{lemma:efficient-approximation-bar}, which means that
$\widetilde{C}_{i^{t}}(p'_{i^{t}}, p_{-i^{t}}^{t}) \leq \varrho \cdot \widetilde{C}_{i^{t}}(p_{i^{t}}^{*},
p_{-i^{t}}^{t})$,
the sixth and seventh transitions follow from
Eq.~(\ref{formula_difference_definition_of_nabla_tau}) and
Eq.~(\ref{formula_lower_bound_on_nabla_tau}), respectively,
and
the last transition holds because the potential function is assumed to be $(A,
B)$-bounded.
Therefore,
\[
\Phi(p^{t + 1})
\, < \,
\Phi(p^{t}) \Left( 1 - \frac{1 - \varrho \epsilon_{1}^{2} \mu}{2 \epsilon_{1} N
A} \Right)
\, = \,
(1 - 1 / Q) \cdot \Phi(p^{t})
\]
as promised.
\end{subproof}

Since \texttt{Alg-ABRD} outputs the strategy profile with the minimum total
cost among all the generated strategy profiles, this theorem holds if any of
these strategy profiles is not bad.

\begin{claim}
\label{proposition_decrease_in_potential_function_when_all_states_are_bad}
If all the
$T + 1$
strategy profiles in the ABRD are bad, then
$C(p^{T}) < \varrho \cdot C^{*}$.
\end{claim}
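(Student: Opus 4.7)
The plan is to iterate Claim 7.3 along the entire ABRD trajectory and then convert the resulting bound on the potential into a bound on the social cost via the $(A,B)$-boundedness of $\Phi$, combined with the initial bound from Lemma 7.2.

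First I would observe that the hypothesis that all $T+1$ profiles $p^0, p^1, \dots, p^T$ are bad allows Claim 7.3 to be applied at every step $t = 0, 1, \dots, T-1$, yielding the chained strict inequality
\begin{equation*}
\Phi(p^{T}) \; < \; \left(1 - \tfrac{1}{Q}\right)^{T} \Phi(p^{0}).
\end{equation*}
Next, I would combine the two halves of the $(A,B)$-bounded potential property, namely $C(p^T) \le B \cdot \Phi(p^T)$ on the left end of the trajectory and $\Phi(p^0) \le A \cdot C(p^0)$ on the right end, to obtain
\begin{equation*}
C(p^{T}) \; < \; AB \left(1 - \tfrac{1}{Q}\right)^{T} C(p^{0}).
\end{equation*}
Then I would invoke Lemma 7.2 to replace $C(p^0)$ by $\varrho N^{\max_j \alpha_j} C^{*}$, giving the master inequality
\begin{equation*}
C(p^{T}) \; < \; \varrho \cdot AB \cdot N^{\max_j \alpha_j} \cdot \left(1 - \tfrac{1}{Q}\right)^{T} \cdot C^{*}.
\end{equation*}

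The final step is to show that the factor $AB \cdot N^{\max_j \alpha_j} \cdot (1 - 1/Q)^T$ is at most $1$. Using the standard estimate $\ln(1 - 1/Q) \le -1/Q$ and the choice $T \ge Q \ln(AB N^{\max_j \alpha_j})$ forced by $T = \lceil Q \ln(AB N^{\max_j \alpha_j}) \rceil$, I would derive
\begin{equation*}
\left(1 - \tfrac{1}{Q}\right)^{T} \; \le \; e^{-T/Q} \; \le \; \frac{1}{AB\, N^{\max_j \alpha_j}},
\end{equation*}
which cancels against the prefactor and yields $C(p^T) < \varrho \cdot C^{*}$, as claimed.

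I do not expect any serious obstacle here: Claim 7.3 already supplies the geometric decay of the potential per bad step, Lemma 7.2 supplies the initial anchor, and the $(A,B)$-boundedness is precisely the conversion device between $\Phi$ and $C$. The only mild subtlety is keeping the strict inequality through the chaining, which is ensured because the strictness in Claim 7.3 is preserved under the $T$-fold product, and verifying that the chosen $T$ is exactly large enough for the exponential factor to annihilate the $AB \cdot N^{\max_j \alpha_j}$ overhead coming from the initial profile bound. This explains the otherwise mysterious appearance of $AB N^{\max_j \alpha_j}$ inside the logarithm defining $T$.
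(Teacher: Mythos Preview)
Your proposal is correct and follows essentially the same approach as the paper: chain the geometric decay of $\Phi$ from the previous claim over all $T$ steps, convert between $\Phi$ and $C$ using the $(A,B)$-boundedness, anchor via the initial bound $C(p^0)\le \varrho N^{\max_j\alpha_j}C^*$, and then use the choice of $T$ to kill the $AB\,N^{\max_j\alpha_j}$ factor. The only cosmetic difference is ordering---the paper first simplifies $(1-1/Q)^T$ to $1/(AB\,N^{\max_j\alpha_j})$ and then threads through the $(A,B)$-bounds, whereas you first assemble the master inequality and then bound the exponential---but the content is identical.
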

\begin{subproof}
\qedlabel{proposition_decrease_in_potential_function_when_all_states_are_bad}
Claim~\ref{proposition_potential_function_decreases_significantly_in_the_steps_with_bad_states}
implies that if all the $T + 1$ profiles in the ABRD are bad, then
\[
\Phi(p^{T})
\, < \,
\Left( 1 - \frac{1}{Q} \Right)^{T} \Phi(p^{0})
\, = \,
\Left( 1 - \frac{1}{Q} \Right)^{\Left\lceil Q \cdot \ln \Big( A B
N^{\max_{j}\alpha_{j}} \Big) \Right\rceil} \Phi(p^{0})
\, \leq \,
\frac{1}{A B N^{\max_{j}\alpha_{j}}} \Phi(p^{0}) \, .
\]
By the definition of the bounded potential function and by
\Lem{}~\ref{lemma:optimal-reply-guarantees-approx-ratio},
we have
\begin{equation*}
C(p^{T})
\, \leq \,
B \cdot \Phi(p^{T})
\, < \,
\frac{B}{A B N^{\max_{j}\alpha_{j}}} \Phi(p^{0})
\, \leq \,
\frac{A}{A N^{\max_{j}\alpha_{j}}} C(p^{0})
\, \leq \,
\frac{\varrho N^{\max_{j}\alpha_{j}} C^{*}}{N^{\max_{j}\alpha_{j}}}
\, = \,
\varrho \cdot C^{*}
\end{equation*}
which completes the proof.
\end{subproof}

\begin{claim}
\label{proposition_general_lower_bound_on_smoothness_parameters}
$\varrho
<
\frac{2\varrho \epsilon_{1}^{2} \lambda}
{1 - \varrho \epsilon_{1}^{2}\mu}$.
\end{claim}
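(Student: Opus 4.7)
The plan is to reduce the claimed inequality to an elementary statement about the smoothness parameters and then invoke the definition of $(\lambda,\mu)$-smoothness applied to the degenerate case $p'=p$.

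First I would clear the denominator. The hypothesis $\mu<1/(\varrho\epsilon_{1}^{2})$ in \Thm{}~\ref{theorem:approx-ratio-Alg-ABRD} guarantees that $1-\varrho\epsilon_{1}^{2}\mu>0$, so the claim is equivalent to
\[
\varrho\bigl(1-\varrho\epsilon_{1}^{2}\mu\bigr)<2\varrho\epsilon_{1}^{2}\lambda,
\]
which after dividing by $\varrho>0$ and rearranging becomes simply $\epsilon_{1}^{2}\bigl(2\lambda+\varrho\mu\bigr)>1$.

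Next I would establish the auxiliary fact $\lambda+\mu\geq 1$. This follows by plugging $p'=p$ into the smoothness inequality (\ref{equation:smoothness-definition}): the left-hand side collapses to $\sum_{i}C_{i}(p)=C(p)$, so for any strategy profile $p$ with $C(p)>0$ (which certainly exists in the GND instances we consider) we must have $C(p)\leq(\lambda+\mu)C(p)$, yielding $\lambda+\mu\geq 1$.

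Finally, combining $\varrho\geq 1$, $\lambda>0$ and $\epsilon_{1}=\tfrac{1+\epsilon}{1-\epsilon}>1$, we obtain
\[
\epsilon_{1}^{2}\bigl(2\lambda+\varrho\mu\bigr)\;\geq\;2\lambda+\mu\;=\;\lambda+(\lambda+\mu)\;\geq\;\lambda+1\;>\;1,
\]
which proves the claim. There is no real obstacle here: the only non-routine observation is that smoothness applied to the trivial pair $p=p'$ forces $\lambda+\mu\geq 1$, and once this lower bound on the smoothness parameters is available the remaining steps are immediate from $\varrho\geq 1$ and $\epsilon_{1}>1$.
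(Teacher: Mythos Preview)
Your proof is correct. Both you and the paper reduce the claim to the inequality $\lambda+\mu\geq 1$ (equivalently $\lambda/(1-\mu)\geq 1$), but you derive it by a different and more elementary route: you plug $p'=p$ directly into the smoothness inequality~(\ref{equation:smoothness-definition}) to obtain $C(p)\leq(\lambda+\mu)C(p)$. The paper instead invokes the potential-function hypothesis of \Thm{}~\ref{theorem:approx-ratio-Alg-ABRD} to guarantee the existence of a pure Nash equilibrium, then applies Roughgarden's PoA bound $C(p)\leq\tfrac{\lambda}{1-\mu}C^{*}$ together with $C(p)\geq C^{*}$ to conclude $\lambda/(1-\mu)\geq 1$, and finishes via the monotonicity $\tfrac{2\varrho\epsilon_{1}^{2}\lambda}{1-\varrho\epsilon_{1}^{2}\mu}>\tfrac{2\varrho\lambda}{1-\mu}\geq 2\varrho>\varrho$. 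Your argument is self-contained---it needs neither the potential function nor the existence of an equilibrium---whereas the paper's argument ties the bound back to the game-theoretic machinery already set up in the surrounding theorem.
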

\begin{subproof}
\qedlabel{proposition_general_lower_bound_on_smoothness_parameters}
For any cost minimization $(\lambda, \mu)$-smooth game that has a (bounded)
potential function, we have
$\frac{\lambda}{1 - \mu} \geq 1$.
This is because the existence of a potential function implies the existence
of a (pure) NE
$p \in P$
with
$C(p) \leq \frac{\lambda}{1 - \mu} C^{*}$
\cite{Roughgarden2015IRP}.
Therefore,
$\frac{2 \varrho \epsilon_{1}^{2} \lambda}{1 - \varrho \epsilon_{1}^{2} \mu}
>
\frac{2\varrho \lambda}{1 - \mu}
>
\varrho$.
\end{subproof}

By combining Claims
\ref{proposition_decrease_in_potential_function_when_all_states_are_bad} and
\ref{proposition_general_lower_bound_on_smoothness_parameters}, we conclude
that not all $T + 1$ profiles are bad, thus completing the proof.
\end{proof}
\LongVersionEnd 

\ShortVersion 
\addtocounter{theorem}{1}
\ShortVersionEnd 
\LongVersion 
\begin{remark}
Roughgarden~\cite{Roughgarden2015IRP} proves that in the BRD of a
$(\lambda, \mu)$-smooth
game, the number of strategy profiles whose cost is larger than
$\frac{\lambda}{(1 - \upsilon)(1 - \mu)} \cdot C^{*}$
for some constant
$\upsilon \in (0, 1)$
is bounded by a polynomial.
However, his proof depends on the exact values of the cost shares and exact
best responses, both of which may be intractable in our GND setting.
\end{remark}
\LongVersionEnd 

In the following sections, we search for a CSM whose induced GND game is
$(\lambda, \mu)$-smooth and admits an $(A, B)$-bounded potential function for
parameters $\lambda$, $\mu$, $A$, and $B$ that when plugged into
\Thm{}~\ref{theorem:approx-ratio-Alg-ABRD}, yield the
desired approximation ratio and run-time bounds.

\section{Smoothness of the GND Game}
\label{sec:smoothness}
In this section, a rather wide class of CSMs is presented
and the smoothness parameters of the induced GND games are analyzed.
The proof that an adequate potential function exists for (the GND game induced
by) one of these CSMs is deferred to
\Sect{}~\ref{sec:potent-funct-shapl}.

A CSM (for GND games) is said to be \emph{REP-expanded} if the cost share
$f_{i, e}(p)$ satisfies
\[
f_{i, e}(p)
\, \leq \,
\sigma_{e} +
\sum_{j \in [q]} \xi_{e, j} \sum_{k = 1}^{K_{j}} z_{k, j}
\Left( l_{e}^{p} - w_{i}(e) \Right)^{x_{k, j}} (w_{i}(e))^{y_{k, j}} \, ,
\]
for any
player
$i \in [N]$,
edge
$e \in E$,
and strategy profile
$p \in P$,
where
$K_{j}$,
$0 \leq x_{k, j} \leq \alpha_{j} - 1$,
$1 \leq y_{k, j} \leq \alpha_{j}$,
and
$z_{k, j}$,
$k \in [K_{j}]$,
are non-negative constants that can only depend on $\alpha_{j}$;
moreover, we require that
$x_{k, j} + y_{k, j} = \alpha_{j}$.
Note that the exponents $x_{k, j}$ and $y_{k, j}$ and the coefficient
$z_{k, j}$ are not necessarily integral.

\LongVersion \sloppy \LongVersionEnd
\begin{theorem}
\label{theorem_smoothness_of_polynomial_bounded_cost_sharing_mechanisms}
Consider some REP-expanded CSM $M$.
For any
$\varrho \geq 1$,
the GND game induced by $M$ is guaranteed to be
$\Left(
\gamma_{\alpha} + \lambda_{\alpha} \cdot \varrho^{\max_{j}\alpha_{j}},
1 / (2 \varrho)
\Right)$-smooth, 
where
$\gamma_{\alpha} =
\max_{e \in E} \min_{j \in [q]} \Left( \frac{1}{\alpha_{j} - 1} \cdot
\frac{\sigma_{e}}{\xi_{e, j}} \Right)^{1 / \alpha_{j}}$
and
$\lambda_{\alpha} > 0$
is a positive constant that depends only on $q$ and
$\alpha_{1}, \dots, \alpha_{q}$.
\end{theorem}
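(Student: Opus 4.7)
The plan is to start from the left-hand side of the smoothness inequality (\ref{equation:smoothness-definition}) and peel off the two structurally different contributions of the REP-expanded bound on $f_{i,e}$, namely the startup cost $\sigma_e$ and the polynomial cross-term $(l_e^p - w_i(e))^{x_{k,j}}(w_i(e))^{y_{k,j}}$, and to bound each against $C(p')$ and $C(p)$ separately. First, using the separability of $M$, I would expand $\sum_{i} C_i(p'_i, p_{-i}) = \sum_{i}\sum_{e \in p'_i} f_{i,e}(p'_i, p_{-i})$, apply the REP-expanded inequality, and replace the residual load $l_e^{(p'_i, p_{-i})} - w_i(e)$, which is exactly the load contributed to $e$ by players in $[N] \setminus \{i\}$, by the upper bound $l_e^p$. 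Swapping the order of summation then permits the super-additivity step $\sum_{i \in S_e^{p'}} (w_i(e))^{y_{k,j}} \leq (l_e^{p'})^{y_{k,j}}$, valid because $y_{k,j} \geq 1$. This yields the master inequality
\[
\sum_{i} C_i(p'_i, p_{-i})
\;\leq\;
\sum_{e} |S_e^{p'}|\,\sigma_e
\;+\;
\sum_{e}\sum_{j \in [q]}\xi_{e,j}\sum_{k \in [K_j]} z_{k,j}\,(l_e^p)^{x_{k,j}}(l_e^{p'})^{y_{k,j}},
\]
from which both terms of the smoothness bound must be extracted.

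Next I would bound the startup-cost sum. Since weights are at least one, $|S_e^{p'}| \leq l_e^{p'}$; and the definition of $\gamma_\alpha$ supplies, for every $e$, an index $j_e \in [q]$ with $\sigma_e \leq (\alpha_{j_e}-1)\,\xi_{e,j_e}\,\gamma_\alpha^{\alpha_{j_e}}$. A case split on whether $l_e^{p'} \geq \gamma_\alpha$ handles the sum: if so, using $\gamma_\alpha^{\alpha_{j_e}-1} \leq (l_e^{p'})^{\alpha_{j_e}-1}$ gives $|S_e^{p'}|\sigma_e \leq (\alpha_{j_e}-1)\gamma_\alpha\xi_{e,j_e}(l_e^{p'})^{\alpha_{j_e}} \leq (\alpha_{j_e}-1)\gamma_\alpha F_e(l_e^{p'})$; in the complementary case, $|S_e^{p'}|\sigma_e < \gamma_\alpha\sigma_e \leq \gamma_\alpha F_e(l_e^{p'})$. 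Summing over $e$ therefore contributes at most $O(\gamma_\alpha)\cdot C(p')$, which is the source of the $\gamma_\alpha$ summand in $\lambda$.

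For the polynomial sum I would apply Young's inequality in its scaled form $a^{x_{k,j}}b^{y_{k,j}} \leq \tfrac{x_{k,j}}{\alpha_j}\eta_{k,j}^{y_{k,j}}\,a^{\alpha_j} + \tfrac{y_{k,j}}{\alpha_j}\eta_{k,j}^{-x_{k,j}}\,b^{\alpha_j}$, which is available because $x_{k,j}+y_{k,j}=\alpha_j$ (the degenerate case $x_{k,j}=0$ contributes only to the $(l_e^{p'})^{\alpha_j}$ side and is harmless), with $a = l_e^p$ and $b = l_e^{p'}$. Calibrating $\eta_{k,j}^{y_{k,j}}$ to a value of order $1/\varrho$ (depending only on $\alpha_j$, $K_j$, $x_{k,j}$ and $z_{k,j}$) ensures that the combined coefficient of $\xi_{e,j}(l_e^p)^{\alpha_j}$ over all pairs $(k,j)$ is at most $\tfrac{1}{2\varrho}$; summing against $F_e(l_e^p) \geq \sum_j \xi_{e,j}(l_e^p)^{\alpha_j}$ then gives $\mu = 1/(2\varrho)$. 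The reciprocal factor $\eta_{k,j}^{-x_{k,j}}$ comes out of order $\varrho^{x_{k,j}/y_{k,j}} \leq \varrho^{\alpha_j - 1} \leq \varrho^{\max_j \alpha_j}$ (here using $y_{k,j}\geq 1$), and absorbing all the $q$, $K_j$, $z_{k,j}$, $x_{k,j}$, $y_{k,j}$ dependencies into $\lambda_\alpha$ produces the additive $\lambda_\alpha\,\varrho^{\max_j \alpha_j}\,C(p')$ contribution.

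The main obstacle is the startup-cost handling: the REP-expanded upper bound permits each of the $|S_e^{p'}|$ players using $e$ in $p'$ to be charged a full $\sigma_e$, whereas $C(p')$ only charges $\sigma_e$ once per used resource, so a naive accounting would blow up with $N$. The parameter $\gamma_\alpha$ is precisely the load threshold beyond which the REP polynomial terms start to dominate the startup cost, and the two-case argument above is exactly the mechanism that absorbs the excess $|S_e^{p'}| - 1$ copies of $\sigma_e$ into the polynomial part of $F_e(l_e^{p'})$. Once this is in place, tuning Young's inequality to separate $\mu$ and the $\varrho^{\max_j\alpha_j}$ factor in $\lambda$ is a routine optimization.
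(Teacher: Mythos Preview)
Your proposal is correct and arrives at the same master inequality as the paper, but the two halves of the bound are handled by genuinely different devices than the paper uses.

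For the startup-cost term, the paper does not case-split on $l_e^{p'}$; instead it simply optimizes the ratio $g(x)=\sigma_e x/(\sigma_e+\xi_{e,j}x^{\alpha_j})$ over $x\ge 0$, finds the maximizer $x=\bigl(\sigma_e/((\alpha_j-1)\xi_{e,j})\bigr)^{1/\alpha_j}$, and reads off $\sigma_e\,l_e^{p'}\le \gamma_\alpha\, F_e(l_e^{p'})$ directly. Your threshold argument is morally the same idea (the threshold $\gamma_\alpha$ is exactly where that maximum sits), but as written it overshoots by a factor $\alpha_{j_e}-1$ in the large-load case; this is harmless for the downstream $O(\cdot)$ corollary, and in fact using the $\sigma_e$-part of $F_e(l_e^{p'})$ as well (convexity of $r\mapsto r^{\alpha}$ at $r=1$) recovers the exact constant $\gamma_\alpha$.

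For the polynomial cross-term, the paper takes a more ``global'' route: it applies H\"older's inequality to the vectors $(\xi_{e,j}^{1/\alpha_j}l_e^{p})_{e}$ and $(\xi_{e,j}^{1/\alpha_j}l_e^{p'})_{e}$ to reduce each $(j,k)$-sum to a product $\vartheta_j(p)^{x_{k,j}/\alpha_j}\vartheta_j(p')^{y_{k,j}/\alpha_j}$ with $\vartheta_j(\cdot)=\sum_e \xi_{e,j}(l_e^{\cdot})^{\alpha_j}$, and then does a two-case comparison of $\vartheta_j(p)$ against $(2K_j z_{\max}\varrho)^{\alpha_j}\vartheta_j(p')$. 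Your pointwise Young/weighted AM--GM split with tunable $\eta_{k,j}$ is a cleaner and more elementary substitute: it avoids the aggregate quantities entirely and even yields a slightly smaller $\varrho$-exponent (at most $\max_j\alpha_j-1$ rather than $\max_j\alpha_j$) before relaxing to the stated bound. Either route proves the theorem.
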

\LongVersion \par\fussy \LongVersionEnd
\begin{proof}
\LongVersion 
Our goal in this proof is to show that
(\ref{equation:smoothness-definition}) holds with
$\lambda = \gamma_{\alpha} + \lambda_{\alpha} \cdot \varrho^{\max_{j}\alpha_{j}}$
and
$\mu = 1 / (2 \varrho)$.
\LongVersionEnd 
We begin by observing that
\begin{align*}
\sum_{i \in [N]} C_{i}(p'_{i}, p_{-i})
\, = & \,
\sum_{i \in [N]} \sum_{e \in p'_{i}} f_{i, e}(p'_{i}, p_{-i}) \\
\leq & \,
\sum_{i \in [N]} \sum_{e \in p'_{i}} \Left( \sigma_{e} + \sum_{j \in [q]}
\xi_{e, j} \sum_{k = 1}^{K_{j}} z_{k, j} (l_{e}^{p})^{x_{k, j}}
(w_{i}(e))^{y_{k, j}} \Right ) \\
\leq & \,
\sum_{i \in [N]} \sum_{e \in p'_{i}} \sigma_{e}
+
\sum_{j \in [q], k \in [K_{j}]} z_{k, j} \sum_{e \in p'} \xi_{e, j}
(l_{e}^{p})^{x_{k, j}} \sum_{i : e \in p'_{i}} (w_{i}(e))^{y_{k, j}} \\
\leq & \,
\sum_{e \in p'} \sigma_{e} l_{e}^{p'}
+
\sum_{j \in [q], k \in [K_{j}]} z_{k, j} \sum_{e \in p'} \xi_{e, j} (l_{e}^{p})^{x_{k, j}}
(l_{e}^{p'})^{y_{k, j}} \, ,
\end{align*}
where
the second transition follows by the definition of REP-expanded CSMs because
when player $i$ deviates to $p'_{i}$, the load on edge
$e \in p'_{i}$
is at most
$l_{e}^{p} + w_{i}(e)$
and
the last transition holds because
(1)
$w_{i}(e) \geq 1$,
hence
$|\{ i \in [N] : e \in p'_{i}\}| \leq l_{e}^{p'}$
for any edge $e$;
and
(2)
$(w_{i}(e))^{y_{k,\, j}}$ is a superadditive function of $w_{i}(e)$, hence
$\sum_{i : e \in p'_{i}} (w_{i}(e))^{y_{k,\, j}}
\leq
(\sum_{i : e \in p'_{i}} w_{i}(e))^{y_{k,\, j}}
=
(l_{e}^{p'})^{y_{k}}$.
The desired upper bound on
$\sum_{e \in p'} \sigma_{e} l_{e}^{p'} + \sum_{j, k} z_{k,\, j} \sum_{e \in p'}
\xi_{e,\, j} (l_{e}^{p})^{x_{k,\, j}} (l_{e}^{p'})^{y_{k,\, j}}$
is established in Claims
\ref{claim:smooth-analysis-upper-bound-static-part} and
\ref{claim:smooth-analysis-upper-bound-dynamic-part}.

\begin{claim}
\label{claim:smooth-analysis-upper-bound-static-part}
$\displaystyle \sum_{e \in p'}\sigma_{e}l_{e}^{p'} \leq \gamma_{\alpha} \cdot C(p')$
\end{claim}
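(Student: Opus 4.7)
The plan is to prove the inequality edgewise: I will show that for every $e \in p'$,
\[
\sigma_e \, l_e^{p'} \;\leq\; \gamma_\alpha \cdot F_e(l_e^{p'}),
\]
and then sum over $e \in p'$. Since $F_e(l_e^{p'}) = \sigma_e + \sum_{j \in [q]} \xi_{e,j}(l_e^{p'})^{\alpha_j}$ whenever $l_e^{p'} > 0$, it will be convenient to drop all but one term from the sum: fix an index $j^\star = j^\star(e) \in \arg\min_{j \in [q]} \bigl(\sigma_e / ((\alpha_j - 1)\xi_{e,j})\bigr)^{1/\alpha_j}$ witnessing the definition of $\gamma_\alpha$, so that
\[
\sigma_e \;\leq\; (\alpha_{j^\star} - 1)\,\xi_{e,j^\star}\,\gamma_\alpha^{\alpha_{j^\star}}. \tag{$\ast$}
\]
It then suffices to establish the stronger bound $\sigma_e \, l_e^{p'} \leq \gamma_\alpha\bigl(\sigma_e + \xi_{e,j^\star}(l_e^{p'})^{\alpha_{j^\star}}\bigr)$, since the remaining summands in $F_e(l_e^{p'})$ are non-negative.

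For this per-edge inequality, I would split on the size of $l_e^{p'}$ relative to $\gamma_\alpha$. If $l_e^{p'} \leq \gamma_\alpha$, then trivially $\sigma_e l_e^{p'} \leq \gamma_\alpha \sigma_e \leq \gamma_\alpha F_e(l_e^{p'})$. Otherwise, writing $a = \alpha_{j^\star}$, $\xi = \xi_{e,j^\star}$, and $L = l_e^{p'}$, rearranging the desired inequality gives $\sigma_e(L - \gamma_\alpha) \leq \gamma_\alpha \xi L^a$. Plugging in the upper bound $(\ast)$ on $\sigma_e$ and dividing by $\gamma_\alpha^a \xi$ reduces the goal to
\[
(a-1)(x - 1) \;\leq\; x^a, \qquad \text{where } x = L/\gamma_\alpha \geq 1.
\]

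The main (and in fact only nontrivial) obstacle is verifying this last elementary inequality, which I would settle by a one-line calculus argument: both sides vanish at $x=1$, and on $[1,\infty)$ the derivative of the right-hand side, $a x^{a-1}$, dominates that of the left-hand side, $a-1$, since $a > 1$. Once the per-edge inequality is established in both cases, summing over $e \in p'$ yields $\sum_{e \in p'} \sigma_e l_e^{p'} \leq \gamma_\alpha \sum_{e \in p'} F_e(l_e^{p'}) = \gamma_\alpha C(p')$, as required. The degenerate case $\gamma_\alpha = 0$ (which occurs precisely when $\sigma_e = 0$ for all $e$) makes the statement trivial, so no separate treatment is needed.
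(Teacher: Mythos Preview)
Your proof is correct, with one harmless slip: at $x=1$ the right-hand side $x^a$ equals $1$, not $0$, so the two sides do not both vanish. This only strengthens your argument, since you then have $f(1)=0 < 1 = g(1)$ together with $f'(x) = a-1 < a \le a x^{a-1} = g'(x)$ for $x \ge 1$, which still gives $(a-1)(x-1) \le x^a$.

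Your route differs somewhat from the paper's. The paper proves the same per-edge bound by directly maximizing the ratio $g(x) = \sigma x / (\sigma + \xi x^{\alpha})$ over $x \ge 0$: computing $g'(x)$ shows the maximum occurs at $x = \bigl(\sigma/((\alpha-1)\xi)\bigr)^{1/\alpha}$, and evaluating $g$ there yields $g(x) \le \bigl(\sigma/((\alpha-1)\xi)\bigr)^{1/\alpha}$, which is at most $\gamma_\alpha$ after choosing $j = j^\star(e)$. You instead first extract the consequence $(\ast)$ of the definition of $\gamma_\alpha$, then split on whether $l_e^{p'} \le \gamma_\alpha$, reducing the nontrivial case to the scale-free inequality $(a-1)(x-1) \le x^a$. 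The paper's optimization is a touch more direct and avoids the case split; your argument has the mild advantage of isolating a clean elementary inequality independent of $\sigma_e, \xi_{e,j}$. Both are equally valid and of comparable length.
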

\LongVersion 
\begin{subproof}
\qedlabel{claim:smooth-analysis-upper-bound-static-part}
Define the function
$g(x) = \frac{\sigma x}{\sigma + \xi x^{\alpha}}$ for arbitrary positive numbers $\sigma > 0$, $\xi > 0$ and $\alpha > 1$.
Since its derivative is
\begin{MathMaybe}
g'(x)
\; = \;
\frac{\sigma}{(\sigma + \xi x^{\alpha})^2}
[\sigma - (\alpha - 1)\xi x^{\alpha}] \, ,
\end{MathMaybe}
it attains its maximum for
$x \geq 0$
at
$x = \Left( \frac{\sigma}{\xi (\alpha - 1)} \Right)^{1 / \alpha}$.
Therefore, for any
$x \geq 0$,
we have
\[
\frac{\sigma \cdot x}{\sigma + \xi \cdot x^{\alpha}}
\, = \,
g(x)
\, \leq \,
g \Left( (\sigma / \xi (\alpha - 1))^{1 / \alpha} \Right)
\, = \,
\frac{\sigma \Left( \frac{\sigma}{\xi (\alpha - 1)} \Right)^{1 /
\alpha}}{\sigma + \xi \frac{\sigma}{\xi (\alpha - 1)}}
\, = \,
\Left( \frac{1}{\alpha - 1} \cdot \frac{\sigma}{\xi} \Right)^{1 / \alpha}
\bigg/
\Left( 1 + \frac{1}{\alpha - 1} \Right) \, .
\]
Let
$j_{e}^{*} \in \operatorname{argmin}_{j \in [q]}\Left( \frac{1}{\alpha_{j} -
1}\cdot\frac{\sigma_{e}}{\xi_{e,\, j}} \Right)^{1/\alpha_{j}}$.
By the inequality above, we have
\begin{MathMaybe}
\sigma_{e} l_{e}^{p'}
\, < \,
\Left(
\frac{1}{\alpha_{j_{e}^{*}} - 1} \cdot \frac{\sigma_{e}}{\xi_{e, j_{e}^{*}}}
\Right)^{1 / \alpha_{j_{e}^{*}}} \cdot
\Left[
\sigma_{e} + \xi_{e, j_{e}^{*}} (l_{e}^{p'})^{1 / \alpha_{j_{e}^{*}}}
\Right]
\, \leq \,
\Left(
\frac{1}{\alpha_{j_{e}^{*}} - 1} \cdot \frac{\sigma_{e}}{\xi_{e, j_{e}^{*}}}
\Right)^{1 / \alpha_{j_{e}^{*}}} \cdot F_{e}(l_{e}^{p'}) \, .
\end{MathMaybe}
The assertion follows since
$\Left(
\frac{1}{\alpha_{j_{e}^{*}} - 1} \cdot \frac{\sigma_{e}}{\xi_{e, j_{e}^{*}}}
\Right)^{1/\alpha_{j_{e}^{*}}}
\leq
\gamma_{\alpha}$
for every
$e \in E$.
\end{subproof}
\LongVersionEnd 

Fix
$z_{\max} = \lceil \max_{j,\, k} z_{k,\, j} \rceil$
and let
$\lambda_{\alpha} = \max_{j \in [q]}(2K_{j} \cdot z_{\max})^{\max_{j}\alpha_{j} + 1}$.

\begin{claim}
\label{claim:smooth-analysis-upper-bound-dynamic-part}
$\sum_{j = 1}^{q} \sum_{k = 1}^{K_{j}} z_{k, j} \sum_{e \in p'} \xi_{e, j}
(l_{e}^{p})^{x_{k, j}} (l_{e}^{p'})^{y_{k, j}}
\leq
\lambda_{\alpha} \varrho^{\max_{j} \alpha_{j}} C(p') + C(p) / (2 \varrho)$.
\end{claim}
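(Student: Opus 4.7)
The plan is to apply Young's inequality (with a tuning parameter depending on $\varrho$) termwise to each mixed product $(l_e^p)^{x_{k,j}}(l_e^{p'})^{y_{k,j}}$, splitting it into a small multiple of $(l_e^p)^{\alpha_j}$ and a larger multiple of $(l_e^{p'})^{\alpha_j}$; the parameter is calibrated so that the first part absorbs into the $C(p)/(2\varrho)$ budget, while the second part stays within $\lambda_\alpha\varrho^{\max_j\alpha_j}C(p')$.

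Concretely, with the conjugate exponents $\alpha_j/x_{k,j}$ and $\alpha_j/y_{k,j}$ (which are conjugate precisely because $x_{k,j}+y_{k,j}=\alpha_j$) and a parameter $\epsilon_j>0$, Young's inequality yields
\[
(l_e^p)^{x_{k,j}}(l_e^{p'})^{y_{k,j}} \;\leq\; \frac{x_{k,j}\,\epsilon_j}{\alpha_j}(l_e^p)^{\alpha_j} + \frac{y_{k,j}\,\epsilon_j^{-x_{k,j}/y_{k,j}}}{\alpha_j}(l_e^{p'})^{\alpha_j}.
\]
I would pick $\epsilon_j := 1/(2\varrho K_j z_{\max})$. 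Using $x_{k,j}/\alpha_j\leq 1$ and $z_{k,j}\leq z_{\max}$, weighting by $z_{k,j}$ and summing over $k\in[K_j]$ makes the first coefficient at most $1/(2\varrho)$. For the second coefficient, the REP-expanded constraints $y_{k,j}\geq 1$ and $x_{k,j}\leq\alpha_j-1$ (together with $\epsilon_j<1$) give $\epsilon_j^{-x_{k,j}/y_{k,j}}\leq\epsilon_j^{-(\alpha_j-1)}$, so the weighted sum is at most $(2K_j z_{\max})^{\alpha_j}\,\varrho^{\alpha_j-1}$.

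The remaining steps are bookkeeping: multiply through by $\xi_{e,j}$ and sum over $e\in p'$ and $j\in[q]$. For the $p$-part, $\sum_j\xi_{e,j}(l_e^p)^{\alpha_j}\leq F_e(l_e^p)$ whenever $l_e^p>0$ (and the summand vanishes otherwise), so this portion is bounded by $C(p)/(2\varrho)$. For the $p'$-part, pull out the maximal prefactor $\max_j\bigl[(2K_j z_{\max})^{\alpha_j}\varrho^{\alpha_j-1}\bigr]\leq \lambda_\alpha\varrho^{\max_j\alpha_j}$ (using the definition of $\lambda_\alpha$ and $\varrho\geq 1$), and observe that the remaining sum is bounded by $C(p')$ since every $e\in p'$ carries load $l_e^{p'}\geq 1$ and so $F_e(l_e^{p'})\geq\sum_j\xi_{e,j}(l_e^{p'})^{\alpha_j}$.

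The main obstacle, and what forces the whole calibration, is choosing $\epsilon_j$ small enough (scaling like $1/\varrho$) so that the $(l_e^p)^{\alpha_j}$ coefficient fits the $\mu=1/(2\varrho)$ budget, while keeping it large enough that the dual coefficient $\epsilon_j^{-x_{k,j}/y_{k,j}}$ remains within $\varrho^{\alpha_j-1}$. This is exactly where the REP-expanded constraints $0\leq x_{k,j}\leq \alpha_j-1$ and $1\leq y_{k,j}\leq \alpha_j$ are essential: the former caps the growth rate in $\varrho$, while the latter bounds $x_{k,j}/y_{k,j}\leq x_{k,j}$ when $\epsilon_j<1$, so that the blow-up of the dual coefficient remains polynomial in $\varrho$ with the right exponent.
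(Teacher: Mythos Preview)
Your proof is correct and takes a genuinely different route from the paper.  The paper first applies H\"older's inequality \emph{across edges} to bound $\sum_{e}\xi_{e,j}(l_e^p)^{x_{k,j}}(l_e^{p'})^{y_{k,j}}$ by $(\vartheta_j(p))^{x_{k,j}/\alpha_j}(\vartheta_j(p'))^{y_{k,j}/\alpha_j}$ with $\vartheta_j(p)=\sum_e\xi_{e,j}(l_e^p)^{\alpha_j}$, and then does a case split on whether $\vartheta_j(p)$ is large or small relative to $(2K_jz_{\max}\varrho)^{\alpha_j}\vartheta_j(p')$: in one case the whole thing is charged to $C(p')$, in the other to $C(p)/(2\varrho)$.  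You instead apply Young's inequality with a tuning parameter \emph{edge by edge}, which is cleaner: it avoids the case analysis altogether and makes the role of the calibration $\epsilon_j=1/(2\varrho K_j z_{\max})$ transparent.  The two approaches are dual in spirit (H\"older plus case analysis is effectively a discrete Young), and yield identical constants.  One minor point you should make explicit: when $x_{k,j}=0$ the conjugate exponent $\alpha_j/x_{k,j}$ is undefined, but your displayed inequality still holds as an equality $(l_e^{p'})^{\alpha_j}=(l_e^{p'})^{\alpha_j}$ by direct inspection (the paper also isolates this case separately).
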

\begin{subproof}
\qedlabel{claim:smooth-analysis-upper-bound-dynamic-part}
Let $p$ and $p'$ be any two profiles.
First, consider the terms
$\sum_{e \in p'} \xi_{e,\, j} (l_{e}^{p})^{x_{k,\, j}} (l_{e}^{p'})^{y_{k,\, j}}$
with
$x_{k,\, j} > 0$.
These terms satisfy
\begin{align*}
\sum_{e \in p'} \xi_{e, j} (l_{e}^{p})^{x_{k, j}} (l_{e}^{p'})^{y_{k, j}}
\, = & \,
\sum_{e \in p \cap p'} \xi_{e, j} (l_{e}^{p})^{x_{k, j}} (l_{e}^{p'})^{y_{k, j}} \\
= & \,
\sum_{e \in p \cap p'} \Left[ (\xi_{e, j})^{\frac{x_{k, j}}{\alpha}}
(l_{e}^{p})^{x_{k, j}}
\Right] \cdot
\Left[ (\xi_{e, j})^{\frac{y_{k, j}}{\alpha}} (l_{e}^{p'})^{y_{k, j}} \Right] \\
\leq & \,
\Left\lVert (\vec{\xi_{j}} \,)^{\frac{x_{k, j}}{\alpha}} (\vec{l}^{\:p})^{x_{k, j}}
\Right\rVert_{\frac{\alpha_{j}}{x_{k, j}}} \cdot \Left\lVert
(\vec{\xi_{j}}\,)^{\frac{y_{k, j}}{\alpha}} (\vec{l}^{\:p'})^{y_{k, j}}
\Right\rVert_{\frac{\alpha_{j}}{y_{k_{j}}}} \\ 
= & \,
\Left[ \sum_{e \in p \cap p'} \xi_{e, j} (l_{e}^{p})^{\alpha_{j}}
\Right]^{\frac{x_{k, j}}{\alpha_{j}}} \cdot
\Left[ \sum_{e \in p \cap p'} \xi_{e, j} (l_{e}^{p'})^{\alpha_{j}}
\Right]^{\frac{y_{k, j}}{\alpha_{j}}} \, ,
\end{align*}
where $\vec{\xi_{j}}$, $\vec{l}^{\: p}$, $\vec{l}^{\: p'}$ represent the vectors
composed from $\xi_{e,\, j}$, $l_{e}^{p}$, and $l_{e}^{p'}$, respectively, for
$e \in p \cap p'$,
the first transition holds since
$(l_{e}^{p})^{x_{k,\, j}} = 0$
whenever
$e \in p' - p$,
and
the third transition follows from H\"{o}lder's inequality.
Now consider the terms
$\sum_{e \in p'} \xi_{e,\, j} (l_{e}^{p})^{x_{k,\, j}} (l_{e}^{p'})^{y_{k,\, j}}$
with
$x_{k,\, j} = 0$
which means that
$y_{k,\, j} = \alpha_{j}$
since
$x_{k,\, j} + y_{k,\, j} = \alpha_{j}$.
For these terms, we also have
\begin{MathMaybe}
\sum_{e \in p'} \xi_{e, j} (l_{e}^{p})^{x_{k, j}} (l_{e}^{p'})^{y_{k, j}}
\, = \,
\Left[
\sum_{e \in p \cap p'} \xi_{e, j} (l_{e}^{p})^{\alpha_{j}}
\Right]^{\frac{x_{k, j}}{\alpha_{j}}} \cdot
\Left[
\sum_{e \in p \cap p'} \xi_{e, j} (l_{e}^{p'})^{\alpha_{j}}
\Right]^{\frac{y_{k, j}}{\alpha_{j}}} \, .
\end{MathMaybe}

\ShortVersion \sloppy \ShortVersionEnd
So,
\begin{MathMaybe}
\sum_{k = 1}^{K_{j}} z_{k, j}
\sum_{e \in p'}
\xi_{e, j} (l_{e}^{p})^{x_{k, j}} (l_{e}^{p'})^{y_{k, j}}
\, \leq \,
\sum_{k = 1}^{K_{j}}
z_{k, j}
\Left(
\vartheta_{j}(p)
\Right)^{\frac{x_{k, j}}{\alpha_{j}}} \cdot
\Left(
\vartheta_{j}(p')
\Right)^{\frac{y_{k, j}}{\alpha_{j}}} \, ,
\end{MathMaybe}
where
$\vartheta_{j}(p) =
\sum_{e \in p \cap p'} \xi_{e, j} (l_{e}^{p})^{\alpha_{j}}$
and
$\vartheta_{j}(p') =
\sum_{e \in p \cap p'} \xi_{e, j} (l_{e}^{p'})^{\alpha_{j}}$.
We proceed to analyze the upper bound on
\begin{MathMaybe}
\sum_{k = 1}^{K_{j}} z_{k, j}
\Left(
\vartheta_{j}(p)
\Right)^{\frac{x_{k, j}}{\alpha_{j}}} \cdot
\Left(
\vartheta_{j}(p')
\Right)^{\frac{y_{k, j}}{\alpha_{j}}}
\end{MathMaybe}
by considering the following two cases.
\begin{DenseItemize}

\item
$\vartheta_{j}(p)
<
(2 K_{j} z_{\max} \varrho)^{\alpha_{j}} \cdot \vartheta_{j}(p')$:
In this case, we have
\begin{MathMaybe}
\sum_{k = 1}^{K_{j}}
z_{k, j}
\Left( \vartheta_{j}(p) \Right)^{\frac{x_{k, j}}{\alpha_{j}}}
\Left( \vartheta_{j}(p') \Right)^{\frac{y_{k, j}}{\alpha_{j}}}
\, \leq \,
\sum_{k = 1}^{K_{j}}
z_{k, j} (2 K_{j} z_{\max} \varrho)^{x_{k, j}} \vartheta_{j}(p')
\, < \,
K_{j} z_{\max} (2 K_{j} z_{\max} \varrho)^{\alpha_{j}} \cdot \vartheta_{j}(p')
\, ,
\end{MathMaybe}
where
the second transition holds because
$x_{k, j} < \alpha_{j}$
and
$2 K_{j} z_{\max} \varrho > 1$.

\item
$\vartheta_{j}(p)
\geq
(2 K_{j} z_{\max} \varrho)^{\alpha_{j}} \cdot \vartheta_{j}(p')$:
In this case, we have
\begin{MathMaybe}
\sum_{k = 1}^{K_{j}}
z_{k,\, j}
(\vartheta_{j}(p))^{\frac{x_{k,\, j}}{\alpha_{j}}}
(\vartheta_{j}(p'))^{\frac{y_{k,\, j}}{\alpha_{j}}}
\, \leq \,
\sum_{k = 1}^{K_{j}}
z_{k,\, j} \vartheta_{j}(p) \cdot
\frac{1}{(2 K_{j} z_{\max} \varrho)^{y_{k,\, j}}}
\, \leq \,
\frac{1}{2\varrho} \vartheta_{j}(p) \, ,
\end{MathMaybe}
where
the last transition holds because
$y_{k,\, j} \geq 1$
and
$2 K_{j} z_{\max} \varrho> 1$.

\end{DenseItemize}
Therefore,
\begin{align*}
\sum_{j, k} z_{k, j}
(\vartheta_{j}(p))^{\frac{x_{k, j}}{\alpha_{j}}}
(\vartheta_{j}(p'))^{\frac{y_{k, j}}{\alpha_{j}}}
\, \leq \, &
\sum_{j = 1}^{q}
\lambda_{\alpha} \varrho^{\max_{j}\alpha_{j}} \vartheta_{j}(p') +
\vartheta_{j}(p) / (2 \varrho) \\
= \, &
\lambda_{\alpha} \varrho^{\max_{j}\alpha_{j}}
\Left(
\sum_{j = 1}^{q} \sum_{e \in p \cap p'}
\xi_{e, j} (l_{e}^{p'})^{\alpha_{j}}
\Right)
+
\frac{1}{2 \varrho}
\Left(
\sum_{j = 1}^{q} \sum_{e \in p \cap p'}
\xi_{e, j} (l_{e}^{p'})^{\alpha_{j}}
\Right) \\
< \, &
\lambda_{\alpha} \varrho^{\max_{j} \alpha_{j}}
\Left[
\sum_{e \in p'}
\Left(
\sigma_{e} + \sum_{j = 1}^{q} \xi_{e, j} (l_{e}^{p'})^{\alpha_{j}}
\Right)
\Right] \\
& +
\frac{1}{2 \varrho}
\Left[
\sum_{e \in p}
\Left(
\sigma_{e} + \sum_{j = 1}^{q} \xi_{e, j} (l_{e}^{p})^{\alpha_{j}}
\Right)
\Right] \\
= \, &
\lambda_{\alpha} \varrho^{\max_{j} \alpha_{j}} C(p') + C(p) / (2 \varrho)
\end{align*}
which establishes the assertion.
\end{subproof}
\ShortVersion \par\fussy \ShortVersionEnd

Together, Claims \ref{claim:smooth-analysis-upper-bound-static-part}
and \ref{claim:smooth-analysis-upper-bound-dynamic-part} imply that
\LongVersion 
\[
\sum_{i \in [N]} C_{i}(p'_{i}, p_{-i})
\, \leq \,
\Left( \gamma_{\alpha} + \lambda_{\alpha}\varrho^{\max_{j}\alpha_{j}} \Right) C(p') +
C(p) / (2 \varrho) \, ,
\]
so
\LongVersionEnd 
(\ref{equation:smoothness-definition}) indeed holds with
$\lambda = \gamma_{\alpha} + \lambda_{\alpha}\varrho^{\max_{j}\alpha_{j}}$
and
$\mu = 1 / (2 \varrho)$.
\end{proof}

Since
$1 / (2 \varrho) < 1 / (\varrho \epsilon_{1}^{2})$
for sufficiently small
$\epsilon > 0$,
it follows that we can employ
\Thm{}~\ref{theorem:approx-ratio-Alg-ABRD} with the
smoothness parameters
\LongVersion 
$\lambda = \gamma_{\alpha} + \lambda_{\alpha} \cdot \varrho^{\max_{j}\alpha_{j}}$
and
$\mu = 1 / (2 \varrho)$
\LongVersionEnd 
guaranteed by
\Thm{}~\ref{theorem_smoothness_of_polynomial_bounded_cost_sharing_mechanisms}
to obtain the following corollary.

\LongVersion \sloppy \LongVersionEnd
\begin{corollary*}
If $M$ is an REP-expanded CSM, then the
approximation ratio of \texttt{Alg-ABRD} is
$O \Left(
\varrho^{\max_{j} \alpha_{j} + 1}
+
\varrho \cdot \max_{e} \min_{j} \Left( \frac{\sigma_{e}}{\xi_{e, j}}
\Right)^{1 / \alpha_{j}}
\Right)$.
\end{corollary*}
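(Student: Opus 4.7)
The plan is to combine Theorem~\ref{theorem:approx-ratio-Alg-ABRD} with the smoothness parameters established in Theorem~\ref{theorem_smoothness_of_polynomial_bounded_cost_sharing_mechanisms} and simplify the resulting expression. First I would invoke the latter theorem to conclude that the GND game induced by the REP-expanded CSM $M$ is $(\lambda, \mu)$-smooth with $\lambda = \gamma_{\alpha} + \lambda_{\alpha} \cdot \varrho^{\max_{j} \alpha_{j}}$ and $\mu = 1 / (2\varrho)$. Since $\epsilon_{1} = (1 + \epsilon) / (1 - \epsilon) \to 1$ as $\epsilon \to 0$, for a sufficiently small constant $\epsilon > 0$ we have $\epsilon_{1}^{2} < 2$, and therefore $\mu = 1 / (2\varrho) < 1 / (\varrho \epsilon_{1}^{2})$, which is precisely the hypothesis on $\mu$ required by Theorem~\ref{theorem:approx-ratio-Alg-ABRD}.

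Next, assuming that the remaining hypothesis of Theorem~\ref{theorem:approx-ratio-Alg-ABRD} also holds --- namely, the existence of an $(A, B)$-bounded potential function for the induced game, which is deferred to (and separately established for the Shapley CSM in) Section~\ref{sec:potent-funct-shapl} --- I would apply that theorem to bound the approximation ratio by $\frac{2 \varrho \epsilon_{1}^{2} \lambda}{1 - \varrho \epsilon_{1}^{2} \mu}$. Plugging in the above values, the denominator equals $1 - \epsilon_{1}^{2} / 2$, a positive constant bounded away from zero for small enough $\epsilon$, while the numerator expands to $2 \varrho \epsilon_{1}^{2} \gamma_{\alpha} + 2 \epsilon_{1}^{2} \lambda_{\alpha} \varrho^{\max_{j} \alpha_{j} + 1}$. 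Since $q$ and $\alpha_{1}, \dots, \alpha_{q}$ are treated as constants throughout the paper, the constant $\lambda_{\alpha}$ (which depends only on them) is absorbed into the $O$-notation, yielding an approximation ratio of $O \bigl( \varrho \gamma_{\alpha} + \varrho^{\max_{j} \alpha_{j} + 1} \bigr)$.

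Finally, I would unfold $\gamma_{\alpha} = \max_{e} \min_{j} \bigl( \frac{1}{\alpha_{j} - 1} \cdot \frac{\sigma_{e}}{\xi_{e, j}} \bigr)^{1 / \alpha_{j}}$ and observe that the factor $(\alpha_{j} - 1)^{-1/\alpha_{j}}$ is, once again, a constant, so $\varrho \gamma_{\alpha} = O \bigl( \varrho \cdot \max_{e} \min_{j} ( \sigma_{e} / \xi_{e, j} )^{1 / \alpha_{j}} \bigr)$, giving the claimed bound. The argument is essentially an arithmetic composition of Theorems~\ref{theorem:approx-ratio-Alg-ABRD} and~\ref{theorem_smoothness_of_polynomial_bounded_cost_sharing_mechanisms}, so I anticipate no substantive obstacle here; the only points requiring attention are verifying that $\mu < 1 / (\varrho \epsilon_{1}^{2})$ carries over for small $\epsilon$ and keeping track of the constants hidden inside $\gamma_{\alpha}$ and $\lambda_{\alpha}$. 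The genuinely hard work, of course, lies in those two supporting theorems (and in constructing a CSM that is simultaneously REP-expanded, induces a bounded potential, and is poly-time $\epsilon$-computable) rather than in this corollary itself.
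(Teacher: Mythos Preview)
Your proposal is correct and follows essentially the same approach as the paper: the paper likewise observes that $\mu = 1/(2\varrho) < 1/(\varrho\epsilon_{1}^{2})$ for sufficiently small $\epsilon$, plugs the smoothness parameters of Theorem~\ref{theorem_smoothness_of_polynomial_bounded_cost_sharing_mechanisms} into Theorem~\ref{theorem:approx-ratio-Alg-ABRD}, and simplifies. Your explicit acknowledgment that the bounded-potential hypothesis is deferred to Section~\ref{sec:potent-funct-shapl} matches the paper's structure exactly.
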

\LongVersion \par\fussy \LongVersionEnd

\LongVersion 
We now turn to show that some natural and extensively studied CSMs are
REP-expanded.
Under the \emph{proportional fair CSM} (see, e.g.,
\cite{Kollias2015RPE, Gkatzelis2014OCS}),
the cost share of player
$i \in [N]$
in edge
$e \in p_{i}$ is defined to be her share of the cost incurred by load
$l_{e}^{p}$ on edge $e$, proportional to her weight $w_{i}(e)$, namely 
$f_{i, e}(p) = \frac{w_{i}(e)}{l_{e}^{p}} F_{e}(l_{e}^{p})$.
\LongVersionEnd 

\ShortVersion 
\addtocounter{theorem}{1}
\ShortVersionEnd 
\LongVersion 
\begin{lemma}\label{lemma_fair_cost_sharing_is_polynomial_bounded}
The proportional fair CSM is REP-expanded.
\end{lemma}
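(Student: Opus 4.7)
The plan is to plug the REP cost function~(\ref{equation:REP-cost-function}) into the definition of the proportional fair cost share and then verify the REP-expanded inequality term by term. Since the cost share is zero when $e \notin p_i$, the only case to check is $e \in p_i$, for which $l_e^p \geq w_i(e) \geq 1 > 0$ and
\[
f_{i,e}(p) \;=\; \frac{w_i(e)}{l_e^p}\,\sigma_e \;+\; \sum_{j \in [q]} \xi_{e,j}\, w_i(e) \, (l_e^p)^{\alpha_j - 1}.
\]
The startup term is immediately bounded by $\sigma_e$ since $w_i(e)/l_e^p \leq 1$, which matches the $\sigma_e$ summand in the REP-expanded template.

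The real task is to bound each load-dependent term $\xi_{e,j}\, w_i(e)\, (l_e^p)^{\alpha_j - 1}$ by an expression of the form $\xi_{e,j}\sum_{k} z_{k,j}\,(l_e^p - w_i(e))^{x_{k,j}} (w_i(e))^{y_{k,j}}$ subject to $x_{k,j} + y_{k,j} = \alpha_j$, $0 \leq x_{k,j} \leq \alpha_j - 1$, and $1 \leq y_{k,j} \leq \alpha_j$. Setting $a = l_e^p - w_i(e) \geq 0$ and $b = w_i(e) \geq 1$, this reduces to upper bounding $b(a+b)^{\alpha_j - 1}$ in terms of mixed monomials $a^{x} b^{y}$ with $x + y = \alpha_j$. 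The tool is the elementary inequality
\[
(a+b)^{s} \;\leq\; c_s \bigl( a^{s} + b^{s} \bigr), \qquad a, b \geq 0, \; s \geq 0,
\]
which holds with $c_s = 1$ when $0 \leq s \leq 1$ (subadditivity of the concave map $x \mapsto x^s$) and with $c_s = 2^{s-1}$ when $s \geq 1$ (convexity of $x \mapsto x^s$). Applied with $s = \alpha_j - 1$ and multiplied by $b$, it yields
\[
b(a+b)^{\alpha_j - 1} \;\leq\; c_{\alpha_j - 1}\, a^{\alpha_j - 1} b \;+\; c_{\alpha_j - 1}\, b^{\alpha_j}.
\]
I then take $K_j = 2$, with $(x_{1,j}, y_{1,j}, z_{1,j}) = (\alpha_j - 1, 1, c_{\alpha_j - 1})$ and $(x_{2,j}, y_{2,j}, z_{2,j}) = (0, \alpha_j, c_{\alpha_j - 1})$. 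A routine check confirms that $x_{k,j} + y_{k,j} = \alpha_j$, that the bounds $0 \leq x_{k,j} \leq \alpha_j - 1$ and $1 \leq y_{k,j} \leq \alpha_j$ are satisfied, and that each $z_{k,j}$ depends only on $\alpha_j$.

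I do not foresee a serious obstacle; the only subtlety to keep in mind is that the REP-expanded template demands $y_{k,j} \geq 1$ in every term, which rules out decompositions that ``peel off'' the prefactor $b$ into a term $b^0$. The split above respects this constraint because both terms carry at least one factor of $b$, so the verification goes through cleanly and establishes the lemma.
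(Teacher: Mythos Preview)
Your proof is correct and follows essentially the same approach as the paper. Both arguments reduce to bounding $b(a+b)^{\alpha_j-1}$ (with $a=l_e^p-w_i(e)$, $b=w_i(e)$) by a combination of $a^{\alpha_j-1}b$ and $b^{\alpha_j}$; the paper does this via an explicit case split on whether $a\le b$ or $a>b$, obtaining coefficient $2^{\alpha_j-1}$, while you invoke the elementary inequality $(a+b)^s\le c_s(a^s+b^s)$ directly and obtain the slightly sharper $c_{\alpha_j-1}=\max(1,2^{\alpha_j-2})$. The decomposition into $K_j=2$ terms with exponents $(\alpha_j-1,1)$ and $(0,\alpha_j)$ is identical in both.
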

\LongVersionEnd 
\LongVersion 
\begin{proof}
Under the proportional fair CSM, the cost share of player
$i$ in edge $e$ satisfies
\begin{align*}
\frac{w_{i}(e)}{l_{e}^{p}} \Left[ \sigma_{e} + \sum_{j = 1}^{q}\xi_{e,\,
j}(l_{e}^{p})^{\alpha_{j}} \Right]
\; \leq \; &
\sigma_{e} + \sum_{j = 1}^{q}\xi_{e,\, j}w_{i}(e) \cdot
(l_{e}^{p})^{\alpha_{j} - 1} \\
= \; &
\sigma_{e} + \sum_{j = 1}^{q}\xi_{e,\, j}w_{i}(e) \cdot \Left[ (l_{e}^{p} -
w_{i}(e)) + w_{i}(e) \Right]^{\alpha_{j} - 1} \, ,
\end{align*}
where the inequality holds because
$l_{e}^{p} = \sum_{i' : e \in p_{i'}} w_{i', e} \geq w_{i}(e)$.
Consider the following two cases.
\begin{DenseItemize}

\item
$0 \leq l_{e}^{p} - w_{i}(e) \leq w_{i}(e)$:
Since $\alpha - 1 > 0$, it follows that
\begin{align*}
\sigma_{e} + \sum_{j}\xi_{e,\, j}w_{i}(e) \cdot \Left[ (l_{e}^{p} - w_{i}(e))
+ w_{i}(e) \Right]^{\alpha_{j} - 1}
\; \leq \; &
\sigma_{e} + \sum_{j}\xi_{e,\, j}w_{i}(e) \cdot (2 \cdot w_{i}(e))^{\alpha_{j}
- 1} \\
= \; &
\sigma_{e} + \sum_{j}\xi_{e,\, j}2^{\alpha_{j} - 1}(w_{i}(e))^{\alpha_{j}} \, .
\end{align*}

\item
$l_{e}^{p} - w_{i}(e) > w_{i}(e)$:
In this case,
\[
\sigma_{e} + \sum_{j}\xi_{e,\, j} w_{i}(e) \cdot \Left[ (l_{e}^{p} - w_{i}(e)) + w_{i}(e) \Right]^{\alpha_{j}
- 1}
\; < \;
\sigma_{e} + \sum_{j}\xi_{e,\, j}2^{\alpha_{j} - 1} w_{i}(e)(l_{e}^{p} - w_{i}(e))^{\alpha_{j} - 1} \, .
\]

\end{DenseItemize}
The assertion follows by taking
$K_{j} = 2$,
$x_{1, j} = 0$,
$y_{1, j} = \alpha_{j}$,
$z_{1, j} = 2^{\alpha_{j} - 1}$,
$x_{2, j} = \alpha_{j} - 1$,
$y_{2, j} = 1$,
and
$z_{2, j} = 2^{\alpha_{j} - 1}$
for every $j \in [q]$.
\end{proof}
\LongVersionEnd 

Let
$S_{e} = \{ i \in [N] \mid e \in p_{i} \}$
and let $\pi_{e}$ be a random permutation of $S_{e}$.
Under the \emph{Shapley CSM} (see, e.g.,
\cite{Kollias2015RPE, Gkatzelis2014OCS}),
the cost share of player
$i \in [N]$
in edge
$e \in p_{i}$
is defined to be its expected marginal contribution if the players are added
to $e$ one-by-one in $\pi_{e}$ order.
More formally, taking
$S_{e}^{i}(\pi_{e})$
to denote the set of players that precede player $i$ in $\pi_{e}$, the cost
share of $i$ in edge $e$ under the Shapley CSM is
\[
f_{i, e}(p)
\; = \;
\begin{cases}
\mathbb{E} \Left[
F_{e} \Left( \sum_{i' \in S_{e}^{i}(\pi_{e})}w_{i', e} + w_{i}(e) \Right) -
F_{e} \Left( \sum_{i' \in S_{e}^{i}(\pi_{e})}w_{i', e} \Right)
\Right] \, ,
& \text{if } e \in p_{i} \\
0 \, , & \text{otherwise}
\end{cases} \, .
\]

\begin{lemma} \label{lemma_shapley_cost_sharing_is_polynomial_bounded}
The Shapley CSM is REP-expanded.
\end{lemma}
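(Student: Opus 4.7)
The plan is to upper-bound the Shapley cost share $f_{i,e}(p)$ in close analogy to the proof of Lemma~\ref{lemma_fair_cost_sharing_is_polynomial_bounded} for the proportional fair CSM, with an additional step to handle the expectation over the random permutation $\pi_e$. Throughout, let $W_\pi = \sum_{i' \in S_e^i(\pi_e)} w_{i'}(e)$ denote the weight of the players preceding $i$ in $\pi_e$, so that $0 \leq W_\pi \leq l_e^p - w_i(e)$ always.

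I would first decompose the marginal contribution $F_e(W_\pi + w_i(e)) - F_e(W_\pi)$ into its startup-cost and polynomial parts. Since $w_i(e) \geq 1$, the load $W_\pi + w_i(e)$ is always positive, while $W_\pi$ vanishes exactly when $i$ is the first element of $\pi_e$. Therefore $\sigma_e$ contributes to the marginal only with probability $\Pr[W_\pi = 0] = 1/|S_e| \leq 1$, contributing at most $\sigma_e$ in expectation.

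For the polynomial part, I would apply the mean value theorem to $x \mapsto x^{\alpha_j}$ on $[W_\pi, W_\pi + w_i(e)]$ to get
\[
(W_\pi + w_i(e))^{\alpha_j} - W_\pi^{\alpha_j} \leq \alpha_j \cdot w_i(e) \cdot (W_\pi + w_i(e))^{\alpha_j - 1},
\]
then use $(a+b)^{\alpha_j - 1} \leq 2^{\alpha_j - 1}(a^{\alpha_j - 1} + b^{\alpha_j - 1})$ (valid for any nonnegative exponent via $a + b \leq 2\max(a,b)$) with $a = W_\pi$, $b = w_i(e)$, and finally bound $W_\pi \leq l_e^p - w_i(e)$. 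The resulting pointwise bound on the polynomial marginal is a sum of two terms matching the REP-expanded template: one with $(x,y) = (\alpha_j - 1, 1)$ and one with $(x,y) = (0, \alpha_j)$, both with coefficient $\alpha_j 2^{\alpha_j - 1}$. Since the bound holds pointwise in $\pi_e$, it survives taking expectation. Collecting constants gives $K_j = 2$, and all required constraints ($x + y = \alpha_j$, $0 \leq x \leq \alpha_j - 1$, $1 \leq y \leq \alpha_j$, and dependence of $z_{k,j}$ only on $\alpha_j$) are verified directly.

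The only subtlety, and hardly an obstacle, is the asymmetric treatment of $\sigma_e$: it contributes to the Shapley marginal only in the single permutation ordering where $i$ is placed first, whereas the polynomial part is present in every ordering. This asymmetry is handled cleanly by the bound $\Pr[W_\pi = 0] \leq 1$. In contrast to the proportional fair proof, no case split on $l_e^p - w_i(e)$ versus $w_i(e)$ is needed, since the mean value theorem combined with the power-sum inequality already yields the two-term REP-expanded decomposition uniformly in both regimes.
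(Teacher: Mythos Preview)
Your argument is correct and in fact a bit cleaner than the paper's. Both proofs arrive at the same REP-expanded structure with $K_j = 2$ and exponent pairs $(x,y) \in \{(\alpha_j - 1, 1), (0, \alpha_j)\}$, but the routes differ. The paper first uses the superadditivity of $x \mapsto x^{\alpha_j}$ to replace the expected marginal by the \emph{maximum} marginal (i.e., the marginal at full load $l_e^p - w_i(e)$), obtaining $f_{i,e}(p) \leq \sigma_e + \sum_j \xi_{e,j}\big[(l_e^p)^{\alpha_j} - (l_e^p - w_i(e))^{\alpha_j}\big]$; it then splits into the cases $l_e^p \leq 3 w_i(e)$ and $l_e^p > 3 w_i(e)$, handling the latter with Newton's generalized binomial series and a bound on the binomial coefficients $\binom{\alpha_j}{k}$. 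This yields $z_{1,j} = 3^{\alpha_j}$ and $z_{2,j} = 2\binom{\alpha_j}{\lfloor (\alpha_j+1)/2 \rfloor}$. Your approach instead bounds the marginal \emph{pointwise in the permutation} via the mean value theorem and the elementary inequality $(a+b)^{\alpha_j-1} \leq 2^{\alpha_j-1}(a^{\alpha_j-1}+b^{\alpha_j-1})$, then pushes $W_\pi \leq l_e^p - w_i(e)$ to make the bound permutation-independent before averaging. This avoids both the case split and the generalized binomial expansion, and gives the uniform coefficient $z_{k,j} = \alpha_j 2^{\alpha_j-1}$. The price is that you never exploit the full marginal being the worst case, but since your pointwise bound already has the desired shape, nothing is lost. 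Either argument suffices for the lemma; yours is arguably the more transparent one.
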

\LongVersion 
\begin{proof}
Under the Shapley CSM, the cost share of player
$i \in [N]$
in edge
$e \in p_{i}$
is
\begin{align*}
f_{i, e}(p)
\; \leq & \;
\mathbb{E} \Left[
\sigma_{e} +
\sum_{j}\xi_{e,\, j} \Left( \sum_{i' \in S_{e}^{i}(\pi_{e})} w_{i', e} + w_{i}(e) \Right)^{\alpha_{j}}
-
\sum_{j}\xi_{e,\, j} \Left( \sum_{i' \in S_{e}^{i}(\pi_{e})} w_{i', e} \Right)^{\alpha_{j}} \Right]
\\
\leq & \;
\sigma_{e} +
\sum_{j}\xi_{e,\, j} \Left( \sum_{i' \in (S_{e} - \lbrace i \rbrace)} w_{i', e} + w_{i}(e)
\Right)^{\alpha_{j}} -
\sum_{j}\xi_{e,\, j} \Left( \sum_{i' \in (S_{e} - \lbrace i \rbrace)} w_{i', e} \Right)^{\alpha_{j}}
\\
= & \;
\sigma_{e} + \sum_{j}\xi_{e,\, j}(l_{e}^{p})^{\alpha_{j}} - \sum_{j}\xi_{e,\, j} (l_{e}^{p} - w_{i}(e))^{\alpha_{j}} \,
,
\end{align*}
where
the first transition holds since for any $l \geq 0$,
$\sum_{j}\xi_{e,\, j} l^{\alpha_{j}} \leq F_{e}(l) \leq \sigma_{e} + \sum_{j}\xi_{e,\, j} l^{\alpha_{j}}$
and
the second transition holds since the function
$g(x) = x^{\alpha}$
is superadditive for any $\alpha > 1$.
Consider the following two cases.
\begin{DenseItemize}

\item
$l_{e}^{p} \leq 3 \cdot w_{i}(e)$:
In this case,
\begin{MathMaybe}
\sigma_{e} + \sum_{j}\xi_{e,\, j} (l_{e}^{p})^{\alpha_{j}} - \sum_{j}\xi_{e,\, j}(l_{e}^{p} - w_{i}(e))^{\alpha_{j}}
\; \leq \;
\sigma_{e} + \sum_{j}\xi_{e,\, j} \cdot 3^{\alpha_{j}} (w_{i}(e))^{\alpha_{j}} \, .
\end{MathMaybe}

\item
$l_{e}^{p} > 3 \cdot w_{i}(e)$:
By Newton's generalized binomial theorem, we have
\begin{align*}
&\; \sigma_{e} + \sum_{j}\xi_{e,\, j} (l_{e}^{p})^{\alpha_{j}} - \sum_{j}\xi_{e,\, j}(l_{e}^{p} - w_{i}(e))^{\alpha_{j}}\\
\leq & \;
\sigma_{e} + \sum_{j}\xi_{e,\, j} \sum_{k = 0}^{\infty} \binom{\alpha_{j}}{k} (l_{e}^{p} -
w_{i}(e))^{\alpha_{j} - k} (w_{i}(e))^{k} - \sum_{j}\xi_{e,\, j} (l_{e}^{p} - w_{i}(e))^{\alpha_{j}} \\
= & \;
\sigma_{e} + \sum_{j}\xi_{e,\, j} \sum_{k = 1}^{\infty} \binom{\alpha_{j}}{k} (l_{e}^{p} -
w_{i}(e))^{\alpha_{j} - k} (w_{i}(e))^{k} \\
< & \;
\sigma_{e} + \sum_{j}\xi_{e,\, j} \binom{\alpha_{j}}{\Left\lfloor \frac{\alpha_{j} + 1}{2} \Right\rfloor}
w_{i}(e) (l_{e}^{p} - w_{i}(e))^{\alpha_{j} - 1} \sum_{k = 1}^{\infty} \Left(
\frac{w_{i}(e)}{l_{e}^{p} - w_{i}(e)} \Right)^{k - 1} \\
< & \;
\sigma_{e} + \sum_{j}\xi_{e,\, j} \binom{\alpha_{j}}{\Left\lfloor \frac{\alpha_{j} + 1}{2}
\Right\rfloor} w_{i}(e) (l_{e}^{p} - w_{i}(e))^{\alpha_{j} - 1} \sum_{k = 1}^{\infty}
\Left( \frac{1}{2} \Right)^{k - 1} \\
< & \;
\sigma_{e} + 2 \sum_{j}\xi_{e,\, j} \binom{\alpha_{j}}{\Left\lfloor \frac{\alpha_{j} + 1}{2}
\Right\rfloor} w_{i}(e) (l_{e}^{p} - w_{i}(e))^{\alpha_{j} - 1} \, ,
\end{align*}
where
the third transition holds because for any
$\alpha > 1$
and
$k \in \mathbb{Z}^{+}$,
the absolute value of
$\binom{\alpha}{k}$
is at most
$\binom{\alpha}{\Left\lfloor \frac{\alpha + 1}{2} \Right\rfloor}$.

\end{DenseItemize}
The assertion follows by taking
$K_{j} = 2$,
$x_{1,\, j} = 0$,
$y_{1,\, j} = \alpha_{j}$,
$z_{1,\, j} = 3^{\alpha_{j}}$,
$x_{2,\, j} = \alpha_{j} - 1$,
$y_{2,\, j} = 1$,
and
$z_{2,\, j} = 2 \binom{\alpha_{j}}{\Left\lfloor \frac{\alpha_{j} + 1}{2} \Right\rfloor}$ 
for every $j \in [q]$.
\end{proof}
\LongVersionEnd 

\section{The Potential Function of the Shapley Cost Sharing Mechanism}
\label{sec:potent-funct-shapl}
\LongVersion 
The next step is to prove that among the REP-expanded CSMs, there exists one
that induces a GND game with an $(A, B)$-bounded potential function for
sufficiently small
$A, B \geq 1$.
(Recall that by
\Thm{}~\ref{theorem:approx-ratio-Alg-ABRD}, this would
provide an upper bound on the number of steps in the ABRD.)
While we could not accomplish this task for the proportional fair CSM, the
Shapley CSM turned out to be more successful.
\LongVersionEnd 
\ShortVersion 
The next step is to prove that the Shapley CSM is not only REP-expanded, but
it also induces a GND game with an $(A, B)$-bounded potential function for
sufficiently small
$A, B \geq 1$
that when plugged into \Thm{}~\ref{theorem:approx-ratio-Alg-ABRD},
provide an upper bound on the number of steps in the ABRD.
Following that, there is still some work to be done to establish the desired
strongly polynomial run-time bound, but this is deferred to the attached full
version.
\ShortVersionEnd 

It can be inferred from \cite{Roughgarden2016NCS, Hart1989PVC, Kollias2015RPE}
that the GND game induced by the Shapley CSM admits the
potential function
\begin{equation}
\label{formula_potential_function_guaranteed_by_Shapley_cost_sharing}
\Phi(p)
\; = \;
\sum_{e \in p} \sum_{i \in S_{e}} f_{i, e}(S_{e}^{i}(\psi_{e}) \cup \lbrace
i \rbrace )\, ,
\end{equation}
where
$S_{e} = \{ j \in [N] \mid e \in p_{j} \}$,
$\psi_{e}$ is an arbitrary permutation of $S_{e}$,
and
$S_{e}^{i}(\psi_{e})$
is the set of players that precede $i$ in $\psi_{e}$.
Note that in contrast to the random permutation $\pi_{e}$ used in the
definition of the Shapley CSM, the permutation $\psi_{e}$
is an (arbitrary) deterministic permutation.
The rest of this section is dedicated to proving that this potential function
is
$(\mathcal{H}_{N}, \lceil \max_{j}\alpha_{j} \rceil)$-bounded,
where $\mathcal{H}_{N}$ is the $N$-th harmonic number.
Define the function
$h_{e}: 2^{[N]} \rightarrow \mathbb{R}_{\geq 0}$
by setting
\begin{MathMaybe}
h_{e}(X)
\, = \,
\sum_{j \in [q]}
\xi_{e,\, j} \cdot \Left( \sum_{i \in X} w_{i}(e) \Right)^{\alpha_{j}} \, .
\end{MathMaybe}

\ShortVersion 
\addtocounter{theorem}{1}
\ShortVersionEnd 
\LongVersion 
\begin{lemma}
\label{lemma_transform_the_potential_function_of_Shapley_cost_sharing}
For any edge $e$ and any permutation $\psi_{e}$, we have
\[
\sum_{i \in S_{e}}
f_{i, e}(S_{e}^{i}(\psi_{e}) \cup \lbrace i \rbrace)
\, = \,
\sum_{k = 1}^{|S_{e}|}
\Left(
\frac{\sigma_{e}}{k} +
\sum_{T \subseteq S_{e}, |T| = k}
\frac{h_{e}(T)}{\binom{|S_{e}|}{k} \cdot k}
\Right) \, .
\]
\end{lemma}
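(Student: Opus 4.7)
The plan is to recognize the sum on the left-hand side as the Hart--Mas-Colell potential of the Shapley cost-sharing game on $S_e$, and then expand that potential in closed form. Fix an edge $e$, set $n = |S_e|$, and introduce the per-edge characteristic function $c_e(T) = F_e\bigl(\sum_{i \in T} w_{i}(e)\bigr)$ for $\emptyset \neq T \subseteq S_e$ (with $c_e(\emptyset) = 0$), so that $c_e(T) = \sigma_e + h_e(T)$ on every non-empty $T$ and, for every $X \subseteq S_e$ containing $i$, the cost share $f_{i,e}(X)$ coincides with the Shapley value $\phi_i^{c_e}(X)$ of player $i$ in the cooperative game $(X, c_e)$.

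By the classical Hart--Mas-Colell result \cite{Hart1989PVC}, this Shapley value admits a potential $P^{c_e}$ satisfying $\phi_i^{c_e}(X) = P^{c_e}(X) - P^{c_e}(X \setminus \{i\})$. Writing $\psi_e = (i_1, \dots, i_n)$ and $X_k = \{i_1, \dots, i_k\}$, the left-hand side telescopes:
\[
\sum_{i \in S_e} f_{i,e}\bigl(S_e^i(\psi_e) \cup \{i\}\bigr)
\;=\; \sum_{k=1}^n \bigl[ P^{c_e}(X_k) - P^{c_e}(X_{k-1}) \bigr]
\;=\; P^{c_e}(S_e),
\]
which in particular shows that the sum is independent of the chosen permutation $\psi_e$.

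Next I would invoke the standard closed-form expression $P^{c_e}(S_e) = \sum_{\emptyset \neq T \subseteq S_e} \frac{(|T|-1)!(n - |T|)!}{n!}\, c_e(T) = \sum_{\emptyset \neq T \subseteq S_e} \frac{c_e(T)}{|T|\binom{n}{|T|}}$, substitute $c_e(T) = \sigma_e + h_e(T)$, and group by $k = |T|$. The $\sigma_e$ contribution sums to $\binom{n}{k}\cdot\frac{\sigma_e}{k\binom{n}{k}} = \sigma_e/k$ for each $k$, whereas the $h_e$ contribution is exactly $\sum_{T \subseteq S_e,\, |T|=k} \frac{h_e(T)}{k\binom{n}{k}}$; summing over $k \in [n]$ reproduces the right-hand side of the lemma.

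The only real obstacles are the two invoked ``black-box'' facts about the Shapley value---existence of the potential and its closed-form expression---both of which are standard and come from \cite{Hart1989PVC}. If a self-contained argument is preferred, both facts follow from a short double-counting exercise: the telescoping identity is immediate from the definition of the potential, and the closed form is obtained by plugging the random-permutation definition of $\phi_i^{c_e}$ into the telescoping sum and reordering the double summation over permutations and subsets. No subadditivity, monotonicity, or structural properties of $F_e$ are used beyond the value at $\emptyset$ being zero, so the identity holds for the full class of REP cost functions.
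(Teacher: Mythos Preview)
Your proof is correct and follows a genuinely different route from the paper's. The paper first splits the Shapley value of $F_e$ at the level of marginal contributions: the startup cost $\sigma_e$ is incurred only when a player is first in the random order, contributing $\sigma_e/|X|$ to each $f_{i,e}(X)$, while the remaining part is the Shapley value $H_{i,e}(X)$ of the superadditive function $h_e$. Summing the $\sigma_e/|X|$ terms along $\psi_e$ gives $\sum_k \sigma_e/k$ directly, and for the $h_e$ part the paper cites a result of Kollias and Roughgarden \cite{Kollias2015RPE}, which they invoke under the hypothesis that $h_e(S_e)/l_e$ is non-decreasing (using $\alpha_j > 1$). In contrast, you keep $c_e = F_e$ intact, telescope the left-hand side to the Hart--Mas-Colell potential $P^{c_e}(S_e)$, apply the closed form $P^{c_e}(S_e)=\sum_{\emptyset\neq T\subseteq S_e} c_e(T)/\bigl(|T|\binom{n}{|T|}\bigr)$, and only then split $c_e(T)=\sigma_e+h_e(T)$. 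Your route is cleaner in that it uses no monotonicity or convexity of $h_e$ whatsoever (so the identity holds for arbitrary cost functions with $F_e(0)=0$, as you point out), and it makes the permutation-independence transparent via the telescoping; the paper's route is slightly more hands-on and avoids needing the closed-form expression for the potential, at the price of invoking an external lemma whose hypotheses are actually stronger than necessary here.
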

\LongVersionEnd 
\LongVersion 
\begin{proof}
By the definition of the Shapley CSM, the cost share of
player $i$ who uses edge $e$ is
\[
f_{i, e}(S_{e})
\; = \;
\mathbb{E} \Left[
F_{e} \Left( \sum_{i' \in S_{e}^{i}(\pi_{e})} w_{i', e} + w_{i}(e) \Right) -
F_{e} \Left( \sum_{i' \in S_{e}^{i}(\pi_{e})} w_{i', e} \Right)
\Right] \, ,
\]
where $\pi_{e}$ is a random permutation on $S_{e}$.
For a fixed $\pi_{e}$ and a fixed player $i$ using edge $e$,
\begin{align*}
& F_{e} \Left( \sum_{i' \in S_{e}^{i}(\pi_{e})} w_{i', e} + w_{i}(e) \Right) -
F_{e} \Left( \sum_{i' \in S_{e}^{i}(\pi_{e})}w_{i', e} \Right) \\
= \, &
\begin{cases}
\sigma_{e} + h_{e}(\lbrace i \rbrace) - h_{e}(\emptyset) \, , & \text{if }
S_{e}^{i}(\pi_{e}) = \emptyset \\
h_{e}(S_{e}^{i}(\pi_{e}) \cup \lbrace i \rbrace) - h_{e}(S_{e}^{i}(\pi_{e}))
\, , & \text{otherwise}
\end{cases} \\
= \, &
1(S_{e}^{i}(\pi_{e}) = \emptyset) \cdot \sigma_{e} +
h_{e}(S_{e}^{i}(\pi_{e}) \cup \lbrace i \rbrace) - h_{e}(S_{e}^{i}(\pi_{e}))
\, ,
\end{align*}
where
$1(S_{e}^{i}(\pi_{e}) = \emptyset)$
denotes the indicator of the event
$S_{e}^{i}(\pi_{e}) = \emptyset$.
Since $\pi_{e}$ is taken from the uniform distribution, it follows that
$\mathbb{P}(S_{e}^{i}(\pi_{e}) = \emptyset) = \frac{1}{|S_{e}|}$ for any
player $i$ using edge $e$, thus
\begin{equation}
\label{formula_transform_of_the_expectation_in_Shapley_value}
\mathbb{E} \Left[
F_{e} \Left( \sum_{i' \in S_{e}^{i}(\pi_{e})} w_{i', e} + w_{i}(e) \Right) -
F_{e} \Left( \sum_{i' \in S_{e}^{i}(\pi_{e})} w_{i', e} \Right)
\Right]
\; = \;
\frac{\sigma_{e}}{|S_{e}|} +
\mathbb{E} \Left[
h_{e}( S_{e}^{i}(\pi_{e}) \cup \lbrace i \rbrace ) - h_{e}(S_{e}^{i}(\pi_{e}))
\Right] \, .
\end{equation}

Let
$H_{i, e}(S_{e})
=
\mathbb{E}[
h_{e}( S_{e}^{i}(\pi_{e}) \cup \lbrace i \rbrace ) - h_{e}(S_{e}^{i}(\pi_{e}))]$.
Then, it can be inferred from
Eq.~(\ref{formula_transform_of_the_expectation_in_Shapley_value}) that for any
player $i$ using $e$,
\[
f_{i, e}(S_{e}^{i}(\psi_{e}) \cup \lbrace i \rbrace )
\; = \;
\frac{\sigma_{e}}{|S_{e}^{i}(\psi_{e}) \cup \lbrace i \rbrace|} +
H_{i, e} \Left( S_{e}^{i}(\psi_{e}) \cup \lbrace i \rbrace \Right) \, .
\]
Since
$\sum_{i \in S_{e}} \frac{\sigma_{e}}{|S_{e}^{i}(\psi_{e}) \cup \lbrace i
\rbrace|}
=
\sum_{k = 1}^{|S_{e}|}\frac{\sigma_{e}}{k}$,
it follows that
\[
\sum_{i \in S_{e}} f_{i, e}(S_{e}^{i}(\psi_{e}) \cup \lbrace i \rbrace )
\; = \;
\sum_{k = 1}^{|S_{e}|} \frac{\sigma_{e}}{k} +
\sum_{i \in S_{e}} H_{i, e} \Left( S_{e}^{i}(\psi_{e}) \cup \lbrace i
\rbrace \Right) \, .
\]

Notice that $H_{i, e}(S_{e})$ can be viewed as the Shapley cost share of a
player $i$ who uses edge $e$ in a network game where the cost of each edge $e$
is $h_{e}(S_{e})$.
Since for every $j$, $\alpha_{j}$ is assumed to be larger than $1$, the ratio
$\frac{h(S_{e})}{l_{e}} = \frac{\sum_{j}\xi_{e,\, j}(l_{e})^{\alpha_{j}}}{l_{e}}$
is non-decreasing with $l_{e}$.
Kollias and Roughgarden \cite{Kollias2015RPE} prove that in such case, for any
$\psi_{e}$,
\[
\sum_{i \in S_{e}} H_{i, e} \Left( S_{e}^{i}(\psi_{e}) \cup \lbrace i
\rbrace \Right)
\; = \;
\sum_{T \subseteq S_{e}, |T| = k} \frac{h_{e}(T)}{\binom{|S_{e}|}{k}
\cdot k} \, ,
\]
thus establishing the assertion.
\end{proof}
\LongVersionEnd 

\LongVersion 
We are now ready to prove that the potential function $\Phi(p)$ of
(\ref{formula_potential_function_guaranteed_by_Shapley_cost_sharing}) is
$(\lceil \max_{j} \alpha_{j} \rceil, \mathcal{H}_{N})$-bounded.
\LongVersionEnd 

\begin{theorem}
\label{theorem_bounds_on_the_potential_function_of_Shapley_cost_sharing}
The potential function $\Phi(p)$ of the GND game induced by the Shapley CSM
satisfies
$\frac{1}{\lceil \max_{j} \alpha_{j} \rceil} \cdot C(p)
\leq
\Phi(p)
\leq
\mathcal{H}_{N} \cdot C(p)$
for any strategy profile $p$.
\end{theorem}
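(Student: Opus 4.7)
The plan is to start by combining Lemma~\ref{lemma_transform_the_potential_function_of_Shapley_cost_sharing} with the observation that $C(p)=\sum_{e\in p}F_e(l_e^p)=\sum_{e\in p}(\sigma_e+h_e(S_e))$ (recall $l_e^p>0$ for every $e\in p$), so it suffices to prove the per-edge inequality
\begin{equation*}
\frac{\sigma_e+h_e(S_e)}{\lceil\max_j\alpha_j\rceil}
\;\leq\;
\sum_{k=1}^{|S_e|}\Left(\frac{\sigma_e}{k}+\sum_{T\subseteq S_e,\,|T|=k}\frac{h_e(T)}{\binom{|S_e|}{k}\cdot k}\Right)
\;\leq\;
\mathcal{H}_N\cdot(\sigma_e+h_e(S_e))
\end{equation*}
for every $e\in p$, and then sum over $e$.

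For the upper bound I would treat the startup part and the load part separately. The startup contribution is exactly $\sigma_e\mathcal{H}_{|S_e|}\leq\sigma_e\mathcal{H}_N$. For the load part, $h_e(T)=\sum_{j}\xi_{e,j}(\sum_{i\in T}w_i(e))^{\alpha_j}$ is monotone in $T$ (the weights are non-negative and $\alpha_j>1$), so $h_e(T)\leq h_e(S_e)$ for every $T\subseteq S_e$; summing over the $\binom{|S_e|}{k}$ subsets of size $k$ and dividing by $\binom{|S_e|}{k}\cdot k$ yields $h_e(S_e)/k$, and summing over $k$ gives $h_e(S_e)\mathcal{H}_{|S_e|}\leq h_e(S_e)\mathcal{H}_N$.

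For the lower bound, the startup part satisfies $\sigma_e\mathcal{H}_{|S_e|}\geq\sigma_e\geq\sigma_e/\lceil\max_j\alpha_j\rceil$ for free. The heart of the argument is the load part, where I would handle each $j\in[q]$ separately and reduce to showing that for any $n\geq 1$, $\alpha>1$, and non-negative reals $w_1,\dots,w_n$ with $W=\sum_i w_i$,
\begin{equation*}
\sum_{k=1}^{n}\frac{1}{\binom{n}{k}\cdot k}\sum_{T\subseteq[n],\,|T|=k}\Left(\sum_{i\in T}w_i\Right)^{\alpha}
\;\geq\;
\frac{W^{\alpha}}{\lceil\alpha\rceil}\,.
\end{equation*}
The natural interpretation is probabilistic: for a uniformly random permutation $\pi$ of $[n]$ and $W_k$ the sum of the first $k$ weights in $\pi$, the left-hand side equals $\sum_{k=1}^{n}\mathbb{E}[W_k^{\alpha}]/k$. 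Since $\mathbb{E}[W_k]=(k/n)W$, Jensen's inequality (applied to the convex map $x\mapsto x^{\alpha}$) gives $\mathbb{E}[W_k^{\alpha}]\geq(k/n)^{\alpha}W^{\alpha}$, and the elementary bound $\sum_{k=1}^{n}k^{\alpha-1}\geq\int_{0}^{n}x^{\alpha-1}dx=n^{\alpha}/\alpha$ (which uses that $x^{\alpha-1}$ is non-decreasing since $\alpha>1$) finishes the proof, yielding $W^{\alpha}/\alpha\geq W^{\alpha}/\lceil\alpha\rceil$. Multiplying by $\xi_{e,j}$ and summing over $j$ gives the desired bound on the load part.

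The one step that required care is the lower-bound inequality above, which is really the only non-routine piece; the two-step argument (Jensen's inequality plus an integral lower bound on $\sum_k k^{\alpha-1}$) is clean and does not require the $w_i$ to be integral or equal, so it applies uniformly across the $\alpha_j$'s.
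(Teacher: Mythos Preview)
Your proof is correct and follows essentially the same route as the paper: both reduce via Lemma~\ref{lemma_transform_the_potential_function_of_Shapley_cost_sharing} to a per-edge inequality, handle the startup part trivially, use monotonicity of $h_e$ for the upper bound, and apply Jensen's inequality to the average over $k$-subsets for the lower bound. The one (minor) difference is that the paper lower-bounds $\sum_{k=1}^{n}k^{\alpha_j-1}$ by first passing to the integer exponent $\lceil\alpha_j\rceil$ and invoking \cite{Anshelevich2008PSN}, whereas your direct integral comparison $\sum_{k=1}^{n}k^{\alpha-1}\geq\int_0^n x^{\alpha-1}\,dx=n^{\alpha}/\alpha$ is cleaner and in fact yields the slightly sharper constant $1/\max_j\alpha_j$ before relaxing to $1/\lceil\max_j\alpha_j\rceil$.
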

\begin{proof}
\LongVersion 
Let us first prove the lower bound on $\Phi(p)$.
\LongVersionEnd 
\ShortVersion 
We establish the lower bound on $\Phi(p)$ here and defer the upper bound to
the attached full version.
\ShortVersionEnd 
Since
$e \in p$
implies that
$|S_{e}| \geq 1$,
we get
\[
\Phi(p)
\, = \,
\sum_{e \in p} \sum_{k = 1}^{|S_{e}|}
\LLeft(
\frac{\sigma_{e}}{k} +
\sum_{T \subseteq S_{e}, |T| = k}
\frac{h_{e}(T)}{\binom{|S_{e}|}{k} \cdot k}
\RRight)
\, \geq \,
\sum_{e \in p}
\LLeft(
\sigma_{e} +
\sum_{k = 1}^{|S_{e}|} \sum_{T \subseteq S_{e}, |T| = k}
\frac{h_{e}(T)}{\binom{|S_{e}|}{k} \cdot k}
\RRight) \, .
\]
By the convexity of
$\xi_{e,\, j} \cdot x^{\alpha_{j}}$,
we conclude that
\[
\Phi(p)
\, \geq \,
\sum_{e \in p}
\LLeft[
\sigma_{e} +
\sum_{j = 1}^{q}
\xi_{e,\, j} \sum_{k = 1}^{|S_{e}|}
\frac{1}{\binom{|S_{e}|}{k} \cdot k} \binom{|S_{e}|}{k}
\Left(
\frac{\sum_{T \subseteq S_{e}, |T| = k} \sum_{i \in T} w_{i}(e)}{\binom{|S_{e}|}{k}}
\Right)^{\alpha_{j}}
\RRight] \, .
\]
Since every player $i$ is included in exactly
$\binom{|S_{e}| -  1}{k - 1}$
subsets of $S_{e}$ with $k$ elements, it follows that
\begin{align*}
\Phi(p)
\, \geq \, &
\sum_{e \in p}
\LLeft[
\sigma_{e} +
\sum_{j = 1}^{q}
\xi_{e, j} \sum_{k = 1}^{|S_{e}|} \frac{1}{k}
\Left(
\frac{\binom{|S_{e}| - 1}{k - 1} l_{e}^{p}}{\binom{|S_{e}|}{k}}
\Right)^{\alpha_{j}}
\RRight]
\, \geq \,
\sum_{e \in p}
\LLeft[
\sigma_{e} +
\sum_{j = 1}^{q} \xi_{e, j}
\Left(
\frac{l_{e}^{p}}{|S_{e}|}
\Right)^{\alpha_{j}}
\sum_{k = 1}^{|S_{e}|} k^{\alpha_{j} - 1}
\RRight] \\
\geq \, &
\sum_{e \in p}
\LLeft[
\sigma_{e} + \sum_{j = 1}^{q}
\xi_{e, j}
\Left(
\frac{l_{e}^{p}}{|S_{e}|}
\Right)^{\alpha_{j}}
\Left(
\frac{1}{|S_{e}|}
\Right)^{\lceil \alpha_{j} \rceil - \alpha_{j}}
\sum_{k = 1}^{|S_{e}|} k^{\lceil \alpha_{j} \rceil - 1}
\RRight] \, .
\end{align*}
Employing \cite{Anshelevich2008PSN}, we conclude that
\begin{MathMaybe}
\Phi(p)
\, \geq \,
\sum_{e \in p}
\LLeft[
\sigma_{e} +
\sum_{j = 1}^{q}
\xi_{e, j}
\Left(
\frac{l_{e}^{p}}{|S_{e}|}
\Right)^{\alpha_{j}}
\Left(
\frac{1}{|S_{e}|}
\Right)^{\lceil \alpha_{j} \rceil - \alpha_{j}}
\frac{|S_{e}|^{\lceil \alpha_{j} \rceil}}{\lceil \alpha_{j} \rceil}
\RRight]
\, \geq \,
\frac{1}{\lceil \max_{j}\alpha_{j} \rceil}
\sum_{e \in p}
[\sigma_{e} + \sum_{j}\xi_{e, j}(l_{e}^{p})^{\alpha_{j}}] \, .
\end{MathMaybe}
\ShortVersion 
The assertion follows.
\ShortVersionEnd 
\LongVersion 
\par
For the upper bound on $\Phi(p)$, by
\Lem{}~\ref{lemma_transform_the_potential_function_of_Shapley_cost_sharing}
and since
$h_{e}(T)$ is a (set-wise) increasing function of $T$ and
$T \subseteq S_{e}$,
we get
\[
\Phi(p)
\, = \,
\sum_{e \in p} \sum_{k = 1}^{|S_{e}|}
\Left(
\frac{\sigma_{e}}{k} +
\sum_{T \subseteq S_{e}, |T| = k}
\frac{h_{e}(T)}{\binom{|S_{e}|}{k} \cdot k}
\Right)
\, \leq \,
\sum_{e \in p}
\Left(
\mathcal{H}_{|S_{e}|} \cdot \sigma_{e} +
\sum_{k = 1}^{|S_{e}|} \sum_{T \subseteq S_{e},  |T| = k}
\frac{h_{e}(S_{e})}{\binom{|S_{e}|}{k} \cdot k}
\Right) \, .
\]
As there are exactly
$\binom{|S_{e}|}{k}$
subsets of $S_{e}$ with $k$ elements, it follows that
\begin{align*}
\Phi(p)
\, \leq \,
\sum_{e \in p} \Left( \mathcal{H}_{|S_{e}|} \cdot \sigma_{e} + \sum_{k =
1}^{|S_{e}|} \frac{h_{e}(S_{e})}{k} \Right)
\, = \, &
\sum_{e \in p} \Left( \mathcal{H}_{|S_{e}|} \cdot \sigma_{e} +
\mathcal{H}_{|S_{e}|} \cdot h_{e}(S_{e}) \Right) \\
= \, &
\sum_{e \in p} \mathcal{H}_{|S_{e}|} \cdot \Left[\sigma_{e} + \sum_{j \in
[q]}\xi_{e,\, j} \Left( \sum_{i \in S_{e}} w_{i}(e) \Right)^{\alpha_{j}}
\Right] \, .
\end{align*}
Since $e \in p$
means that
$l_{e}^{p} > 0$, we conclude that
\[
\Phi(p)
\, \leq \,
\sum_{e \in p} \mathcal{H}_{|S_{e}|} F_{e}(l_{e}^{p})
\, \leq \,
\mathcal{H}_{N} \cdot C(p)
\]
which establishes the assertion.
\LongVersionEnd 
\end{proof}

\LongVersion 
\section{Polynomial-Time $\epsilon$-Approximation of Shapley Cost Sharing}
\label{sec:polyn-time-epsil}
So far, we have proved that the Shapley CSM can satisfy the
requirements on the smoothness and the potential function.
It remains to show how to compute an $\epsilon$-approximation of the cost
shares specified by the Shapley CSM in polynomial time
for a sufficiently small $\epsilon > 0$.
For the problem of computing the cost shares specified by the Shapley CSM,
Liben-Nowell et al.~\cite{Liben2012CSV} establish the following lemma.

\begin{lemma}[\cite{Liben2012CSV}]
\label{lemma_Liben_SV_sampling_algorithm_for_supermodular_game}
There exists an FPRAS (i.e., a randomized FPTAS), referred to as
\texttt{SV-Sample}, for computing the cost shares in any game subject to the
Shapley CSM and \emph{supermodular} monotone cost
functions.
In particular, given any
$\varepsilon \in (0, 1)$,
\texttt{SV-Sample} generates an $\varepsilon$-approximation of the cost share
with probability at least $1 - \frac{1}{2(TN|E|)^{2}}$ in
$O\Big(\log(TN|E|) \cdot \Big[\frac{N^{3}}{\varepsilon^{2}} + \log(\log(TN|E|)) \Big] \Big)$-time.
\end{lemma}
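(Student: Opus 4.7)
The plan is to prove this by reinterpreting the Shapley cost share as an expectation over a uniformly random permutation and estimating that expectation via Monte Carlo sampling, using the supermodularity of $F_{e}$ to obtain a range-to-mean bound tight enough for Hoeffding's inequality to yield the claimed sample complexity. Concretely, for a fixed player $i \in S_{e}$, I would write $f_{i, e}(S_{e}) = \mathbb{E}_{\pi}[X_{i}^{\pi}]$, where $\pi$ is uniform on permutations of $S_{e}$ and $X_{i}^{\pi} := F_{e}(L_{i}^{\pi} + w_{i}(e)) - F_{e}(L_{i}^{\pi})$, with $L_{i}^{\pi}$ the total weight of the players preceding $i$ in $\pi$. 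Each sample is computable in $O(N)$ time (scanning $\pi$ and invoking $F_{e}$ twice), and the basic estimator is the empirical mean $\hat{f}_{i, e} = \frac{1}{k}\sum_{j = 1}^{k} X_{i}^{\pi^{(j)}}$ over $k$ i.i.d.\ samples.

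The key structural step is to use supermodularity of $F_{e}$ to control the range of $X_{i}^{\pi}$ relative to its mean. Partition permutations by the position $r$ at which $i$ sits, and let $\mu_{r} = \mathbb{E}[X_{i}^{\pi} \mid i \text{ at position } r]$; conditioning fixes the prefix to a uniformly random $(r - 1)$-subset of $S_{e} \setminus \{i\}$. Supermodularity makes $F_{e}(S \cup \{i\}) - F_{e}(S)$ nondecreasing in $S$, so a standard coupling between a random $(r - 1)$-subset and a random $r$-subset yields $\mu_{1} \leq \cdots \leq \mu_{|S_{e}|}$; when $i$ is last the prefix is deterministic, so $X_{i}^{\pi} = \mu_{|S_{e}|}$ almost surely. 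Since $f_{i, e}(S_{e}) = \frac{1}{|S_{e}|}\sum_{r}\mu_{r}$ is the average of the nonnegative numbers $\mu_{1}, \dots, \mu_{|S_{e}|}$ whose maximum is $\mu_{|S_{e}|}$, we get $\mu_{|S_{e}|} \leq |S_{e}| \cdot f_{i, e}(S_{e})$ and hence
\begin{equation*}
0 \leq X_{i}^{\pi} \leq N \cdot f_{i, e}(S_{e}).
\end{equation*}

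Applying Hoeffding's inequality to the bounded variables $X_{i}^{\pi} / (N f_{i, e}) \in [0, 1]$ gives $\mathbb{P}(|\hat{f}_{i, e} - f_{i, e}| > \varepsilon f_{i, e}) \leq 2\exp(-2 k \varepsilon^{2} / N^{2})$, so $k = \Theta(N^{2} / \varepsilon^{2})$ samples suffice for constant success probability; combined with the $O(N)$ per-sample cost, this already yields an $O(N^{3} / \varepsilon^{2})$-time constant-confidence estimator. A standard median-of-$\Theta(\log(TN|E|))$ boosting then lifts the confidence to $1 - 1 / (2(TN|E|)^{2})$, contributing the outer $\log(TN|E|)$ factor in the stated run-time, while the additive $\log\log(TN|E|)$ summand accounts for the bit-precision of the counters and rank statistics used inside boosting. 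The main obstacle is the range-versus-mean inequality $X_{i}^{\pi} \leq N \cdot f_{i, e}$: absent supermodularity, the ratio of extreme to average marginal contribution can blow up exponentially and naive Monte Carlo fails to be polynomial; the monotonicity of the per-position means $\mu_{r}$ is precisely what is needed, and once it is established the rest of the argument is a textbook application of concentration and median boosting.
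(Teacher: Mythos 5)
The paper does not prove this lemma at all --- it is imported verbatim from \cite{Liben2012CSV} as a black box --- so there is no internal proof to compare against; your argument is a correct, self-contained reconstruction of essentially the argument in that reference (Shapley value as a permutation expectation, supermodularity forcing the largest marginal contribution to be the one against the full prefix $S_e \setminus \{i\}$, hence a range-to-mean ratio of at most $|S_e| \le N$, then Hoeffding plus median boosting). One small expositional point: what Hoeffding actually needs is the \emph{pointwise} bound $X_i^{\pi} \le \mu_{|S_e|}$ for every permutation $\pi$, not merely the monotonicity of the conditional means $\mu_1 \le \cdots \le \mu_{|S_e|}$; this pointwise bound follows immediately from the same supermodularity fact you state (the marginal contribution is nondecreasing in the prefix, and every prefix is contained in $S_e \setminus \{i\}$), so the gap is cosmetic rather than substantive.
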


Note that owing to the existence of the term $\sigma_{e}$, the cost function
$F_{e}(l)$ is not supermodular.
The following lemma comes to our help.

\begin{lemma}\label{lemma_supermodularity_of_polynomial_function}
The function $h_{e}(X)$ (defined in the statement of \Lem{}~\ref{lemma_transform_the_potential_function_of_Shapley_cost_sharing}) is supermodular.
\end{lemma}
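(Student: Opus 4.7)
The plan is to reduce supermodularity of $h_{e}$ to convexity of the one-dimensional map $\phi_{j}(x) = \xi_{e,j} \cdot x^{\alpha_{j}}$ on $[0, \infty)$, exploit the fact that $\alpha_{j} > 1$, and conclude by additivity.

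\medskip

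First, I recall the definition it suffices to verify: $h_{e}$ is supermodular iff for every $X \subseteq Y \subseteq [N]$ and every $i \in [N] \setminus Y$,
\begin{equation*}
h_{e}(X \cup \{i\}) - h_{e}(X) \;\leq\; h_{e}(Y \cup \{i\}) - h_{e}(Y).
\end{equation*}
Next, since a non-negative linear combination of supermodular set functions is supermodular, and since $\xi_{e,j} \geq 0$ with $h_{e}(X) = \sum_{j \in [q]} \xi_{e,j} \, \big(\sum_{k \in X} w_{k}(e)\big)^{\alpha_{j}}$, it is enough to prove supermodularity of $g_{j}(X) := \big(\sum_{k \in X} w_{k}(e)\big)^{\alpha_{j}}$ for each fixed $j \in [q]$.

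\medskip

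For the single-$j$ case, write $s_{X} := \sum_{k \in X} w_{k}(e)$, which satisfies $s_{X} \leq s_{Y}$ whenever $X \subseteq Y$ (the weights are non-negative, indeed $w_{k}(e) \geq 1$). Letting $w := w_{i}(e) \geq 0$, the supermodularity inequality for $g_{j}$ becomes
\begin{equation*}
(s_{X} + w)^{\alpha_{j}} - s_{X}^{\alpha_{j}} \;\leq\; (s_{Y} + w)^{\alpha_{j}} - s_{Y}^{\alpha_{j}}.
\end{equation*}
This is the statement that the discrete forward difference of $t \mapsto t^{\alpha_{j}}$ of step $w$ is non-decreasing in $t$ on $[0,\infty)$, which is a standard consequence of convexity: since $\alpha_{j} > 1$, the function $t \mapsto t^{\alpha_{j}}$ is convex on $[0, \infty)$, so its derivative $\alpha_{j} \, t^{\alpha_{j}-1}$ is non-decreasing, and integrating this monotone derivative from $t$ to $t+w$ yields a quantity non-decreasing in $t$.

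\medskip

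I do not anticipate any real obstacle here: the only ingredient beyond elementary calculus is the observation that positive linear combinations preserve supermodularity, and that reducing from a set function built via a non-negative linear functional of weights to a convex function on $[0,\infty)$ is exactly the bridge between the combinatorial and analytic formulations. Plugging these pieces together yields $h_{e}(X \cup \{i\}) - h_{e}(X) \leq h_{e}(Y \cup \{i\}) - h_{e}(Y)$, establishing the lemma.
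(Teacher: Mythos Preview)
Your proposal is correct and takes essentially the same approach as the paper: both reduce the supermodularity inequality to showing that the forward difference $t \mapsto (t+w)^{\alpha_j} - t^{\alpha_j}$ is non-decreasing on $[0,\infty)$, which follows from the monotonicity of the derivative $\alpha_j t^{\alpha_j-1}$ when $\alpha_j > 1$. The only cosmetic difference is that you invoke convexity and the preservation of supermodularity under non-negative combinations explicitly, whereas the paper computes the partial derivative of $v(x_1,x_2,y) = (x_1+x_2+y)^{\alpha_j} - (x_1+x_2)^{\alpha_j}$ with respect to $x_2$ directly; the underlying idea is identical.
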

\begin{proof}
Recall that
$h_{e}(X) =
\sum_{j \in [q]} \xi_{e, j} \left(\sum_{i \in X} w_{i}(e) \right)^{\alpha}$.
By definition, $h_{e}(X)$ is supermodular if for any $X_{1} \subset X_{2} \subset X$ and $i' \in X \backslash X_{2}$:
\begin{equation}
\sum_{j = 1}^{q}\xi_{e,\, j}\Big[\Big(\sum_{i \in X_{1} \cup \lbrace i' \rbrace} w_{i}(e) \Big)^{\alpha_{j}} - \Big(\sum_{i \in X_{1} } w_{i}(e) \Big)^{\alpha_{j}}\Big]
 \; \leq \; 
\sum_{j = 1}^{q}\xi_{e,\, j}\Big[ \Big(\sum_{i \in X_{2} \cup \lbrace i' \rbrace} w_{i}(e) \Big)^{\alpha_{j}} - \Big(\sum_{i \in X_{2} } w_{i}(e) \Big)^{\alpha_{j}}\Big] \nonumber
\end{equation}
Therefore, it suffices to prove that for any three non-negative numbers $x_{1}$, $x_{2}$ and $y$, $(x_{1} + y)^{\alpha_{j}} - (x_{1})^{\alpha_{j}} \leq (x_{1} + x_{2} + y)^{\alpha_{j}} - (x_{1} + x_{2})^{\alpha_{j}}$ holds for any $\alpha_{j} > 1$. Let $v(x_{1}, x_{2}, y) = (x_{1} + x_{2} + y)^{\alpha_{j}} - (x_{1} + x_{2})^{\alpha_{j}}$. Then we have:
\begin{equation}
\frac{\partial v}{\partial x_{2}} \; = \; \alpha_{j}\Big[ (x_{1} + x_{2} + y)^{\alpha_{j} - 1} - (x_{1} + x_{2})^{\alpha_{j} - 1} \Big] \nonumber
\end{equation}
Since $\alpha_{j} > 1$, $\frac{\partial v}{\partial x_{2}} > 0$ when $x_{2} > 0$. Therefore, $v(x_{1}, x_{2}, y) \geq v(x_{1}, 0, y) = (x_{1} + y)^{\alpha_{j}} - (x_{1})^{\alpha_{j}}$.
\end{proof}

Combining Lemma \ref{lemma_Liben_SV_sampling_algorithm_for_supermodular_game} and Lemma \ref{lemma_supermodularity_of_polynomial_function} with Eq.~\eqref{formula_transform_of_the_expectation_in_Shapley_value}, we get an efficient procedure, named \texttt{Shapley-APX}, for computing the $\epsilon$-approximation of the Shapley cost share for any given player $i$ and edge $e$ with respect to our cost function.

More specifically, if $i \notin S_{e}$, then this procedure returns $0$ as the cost share. Otherwise, \texttt{Shapley-APX} uses algorithm \texttt{SV-Sample} to obtain an $\epsilon$-approximation $\theta_{i, e}$ of the Shapley cost share for player $i$ on edge $e$ with respect to the cost function $h_{e}(S_{e})$. Finally, $\frac{\sigma_{e}}{|S_{e}|} + \theta_{i, e}$ is returned as the desired $\epsilon$-cost share. By Eq.~\eqref{formula_transform_of_the_expectation_in_Shapley_value}, Lemma \ref{lemma_Liben_SV_sampling_algorithm_for_supermodular_game} and Lemma \ref{lemma_supermodularity_of_polynomial_function}, the following lemma trivially holds.

\begin{lemma}\label{lemma_approximating_Shapley_cost_share_with_high_probability_in_polynomial_time}
Procedure \texttt{Shapley-APX} computes an $\epsilon$-approximation of the cost share of a player $i$ on edge $e$ with probability at least $1 - \frac{1}{2(TN|E|)^2}$ in $O\Big(\log(TN|E|) \cdot \Big[\frac{N^{3}}{\epsilon^{2}} + \log(\log(TN|E|)) \Big] \Big)$-time.
\end{lemma}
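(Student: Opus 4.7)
The plan is to prove Lemma \ref{lemma_approximating_Shapley_cost_share_with_high_probability_in_polynomial_time} by a direct reduction to Lemma \ref{lemma_Liben_SV_sampling_algorithm_for_supermodular_game}, exploiting the additive decomposition of the Shapley cost share into a startup-cost piece and a polynomial-load piece. Specifically, Eq.~\eqref{formula_transform_of_the_expectation_in_Shapley_value} expresses the Shapley cost share of any player $i \in S_e$ as
\[
f_{i,e}(S_e) \;=\; \frac{\sigma_e}{|S_e|} \;+\; H_{i,e}(S_e),
\]
where $H_{i,e}(S_e)$ is the Shapley cost share of $i$ in the auxiliary game whose edge cost function is $h_e$. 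The term $\sigma_e/|S_e|$ is a rational number depending only on two quantities known up front, so it is computed exactly in constant arithmetic time; the whole task therefore reduces to approximating the second summand $H_{i,e}(S_e)$.

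Next I would verify that $h_e$ fits the hypotheses of \texttt{SV-Sample}. Monotonicity holds because each summand $\xi_{e,j}(\sum_{i \in X} w_i(e))^{\alpha_j}$ is non-decreasing in $X$, and supermodularity is precisely Lemma \ref{lemma_supermodularity_of_polynomial_function}. Invoking Lemma \ref{lemma_Liben_SV_sampling_algorithm_for_supermodular_game} on the auxiliary $h_e$-game with approximation parameter $\epsilon$ yields, with probability at least $1 - 1/(2(TN|E|)^2)$ and within the claimed running time, a value $\theta_{i,e}$ satisfying
\[
(1-\epsilon)\, H_{i,e}(S_e) \;\le\; \theta_{i,e} \;\le\; (1+\epsilon)\, H_{i,e}(S_e).
\]
Procedure \texttt{Shapley-APX} then returns $\sigma_e/|S_e| + \theta_{i,e}$.

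The last step is to check that this additive combination is still a multiplicative $(1 \pm \epsilon)$-approximation of $f_{i,e}(S_e)$. Because $\sigma_e/|S_e| \ge 0$ is added exactly to both sides of the sandwich bound on $\theta_{i,e}$, the factors $1 \pm \epsilon$ propagate unchanged through the sum (using only $(1-\epsilon)c \le c \le (1+\epsilon)c$ for $c \ge 0$); the case $i \notin S_e$ is trivial since \texttt{Shapley-APX} returns the exact value $0 = f_{i,e}(p)$. The probability and running-time bounds in the lemma are then inherited verbatim from \texttt{SV-Sample}, with only $O(1)$ overhead from computing the startup share. I do not anticipate a genuine obstacle here: the real work was done by proving supermodularity of $h_e$ (Lemma \ref{lemma_supermodularity_of_polynomial_function}) and by the existence of the FPRAS of \cite{Liben2012CSV} for supermodular Shapley games, so what remains is bookkeeping around the decomposition in Eq.~\eqref{formula_transform_of_the_expectation_in_Shapley_value}.
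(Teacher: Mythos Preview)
Your proposal is correct and follows essentially the same route as the paper: decompose $f_{i,e}(S_e)$ via Eq.~\eqref{formula_transform_of_the_expectation_in_Shapley_value} into the exact term $\sigma_e/|S_e|$ plus the Shapley share $H_{i,e}(S_e)$ of the supermodular function $h_e$, invoke \texttt{SV-Sample} (Lemma~\ref{lemma_Liben_SV_sampling_algorithm_for_supermodular_game}) on the latter using Lemma~\ref{lemma_supermodularity_of_polynomial_function}, and observe that adding the nonnegative exact term preserves the multiplicative $(1\pm\epsilon)$ guarantee. The paper states the lemma as a trivial consequence of exactly these three ingredients, so your write-up simply fills in the bookkeeping the paper omits.
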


\begin{theorem}
\label{theorem_the_probabilty_that_we_do_not_make_any_mistake_in_main_algorithm}
If \texttt{Shapley-APX} is employed to generate all the $\epsilon$-cost
shares used in \texttt{Alg-ABRD}, then w.h.p., every $\epsilon$-cost share
$\widetilde{f}_{i, e}(S_{e})$
is an $\epsilon$-approximation of the exact cost share $f_{i, e}(S_{e})$.
\end{theorem}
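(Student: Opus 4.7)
The plan is to establish the theorem via a straightforward union bound over all invocations of \texttt{Shapley-APX} performed during the execution of \texttt{Alg-ABRD}. The success probability in \Lem{}~\ref{lemma_approximating_Shapley_cost_share_with_high_probability_in_polynomial_time} has been calibrated specifically with the parameter $TN|E|$ in mind, precisely so that such a union bound goes through, so no real technical obstacle is expected; the task reduces to carefully counting invocations.

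First, I would bound the total number of calls to \texttt{Shapley-APX} made throughout a single execution of \texttt{Alg-ABRD}. Inspecting Pseudocode~\ref{algorithm_ABRD_for_EER_game}: the initialization phase performs $N$ calls to the reply oracle, each of which internally evaluates $\epsilon$-cost shares via the construction in the proof of \Lem{}~\ref{lemma:efficient-approximation-bar}, requiring $\widetilde{f}_{i,e}$ for all $e \in E$, for a total of $O(N|E|)$ invocations. Each iteration of the main \texttt{while} loop then computes, for every player $i \in [N]$, an ABR $p'_i$ via the procedure of \Lem{}~\ref{lemma:efficient-approximation-bar} (again $O(|E|)$ cost-share calls) as well as the values $\widetilde{C}_i(p^{t-1})$ and $\widetilde{C}_i(p'_i, p_{-i}^{t-1})$ (another $O(|E|)$ calls). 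Summing over the at most $T$ iterations and the $N$ players yields at most $O(TN|E|)$ total invocations of \texttt{Shapley-APX}. Since $T = \operatorname{poly}(|E|,N)$ by \Thm{}~\ref{theorem:approx-ratio-Alg-ABRD} (with the parameters supplied by \Thm{}~\ref{theorem_smoothness_of_polynomial_bounded_cost_sharing_mechanisms} and \Thm{}~\ref{theorem_bounds_on_the_potential_function_of_Shapley_cost_sharing}), this count remains polynomial in $|E|$ and $N$.

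Next, I would apply \Lem{}~\ref{lemma_approximating_Shapley_cost_share_with_high_probability_in_polynomial_time}, according to which each individual call to \texttt{Shapley-APX} returns an $\epsilon$-approximation of the corresponding exact Shapley cost share with failure probability at most $\frac{1}{2(TN|E|)^2}$. A union bound over all $O(TN|E|)$ invocations then gives a total failure probability of at most $O\big(1/(TN|E|)\big)$, so w.h.p.\ every $\widetilde{f}_{i,e}(S_e)$ generated during the execution is indeed an $\epsilon$-approximation of $f_{i,e}(S_e)$.

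The only subtlety is matching the specific w.h.p.\ formulation $1 - 1/(|E|+N)^b$ for an arbitrarily large constant $b$. This is handled by invoking \texttt{SV-Sample} (inside \texttt{Shapley-APX}) with a tighter per-call failure probability of $1/(2(TN|E|)^{b+2})$ instead of $1/(2(TN|E|)^2)$; since the dependence of the runtime of \texttt{SV-Sample} on the failure probability is only logarithmic (per \Lem{}~\ref{lemma_Liben_SV_sampling_algorithm_for_supermodular_game}), this costs at most a factor of $O(b \log(TN|E|))$ in runtime, preserving strong polynomiality. The union bound then yields overall failure probability at most $1/(TN|E|)^{b+1} \leq 1/(|E|+N)^b$, which is precisely the w.h.p.\ guarantee required. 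With this adjustment, the theorem follows immediately.
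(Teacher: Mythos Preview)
Your proposal is correct and follows essentially the same approach as the paper: count the at most $TN|E|$ invocations of \texttt{Shapley-APX} and apply a union bound using the per-call failure probability from \Lem{}~\ref{lemma_approximating_Shapley_cost_share_with_high_probability_in_polynomial_time}. Two minor remarks: the initialization phase (lines~3--5 of the pseudocode) actually uses the exact toll $\tau_i^0(e)=F_e(w_i(e))$ rather than \texttt{Shapley-APX}, so your count there is a harmless overcount; and your extra care in matching the specific $1-1/(|E|+N)^b$ form of ``w.h.p.''\ via boosting the per-call confidence is a detail the paper simply glosses over.
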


\begin{proof}
Recall that the ABRD contains at most $T$ steps. In each step $t$, every player $i$ needs to calculate $\widetilde{f}_{i, e}(S_{e}^{t} \cup \lbrace i \rbrace)$ for every edge $e$ to find its best approximate response. Therefore, procedure \texttt{Shapley-APX} is invoked at most $TN|E|$ times. The probability that this function generates a result that is not the $\epsilon$-approximation of the exact cost share is at most $\displaystyle TN|E| \cdot \Big[1 - \Big(1 - \frac{1}{2(TN|E|)^2} \Big) \Big] = \frac{1}{2TN|E|}$. Therefore, this theorem follows. 
\end{proof}

Using the facts that $\epsilon$, $\lambda_{\alpha}$ (from
\Thm{}~\ref{theorem_smoothness_of_polynomial_bounded_cost_sharing_mechanisms}),
and
$\varrho \epsilon_{1}^{2} / 2$
are all constants, and $\mathcal{H}_{N}$ can be bounded by $O(\log N)$, the main result is established as summed up in the following theorem.

\begin{theorem}
\label{theorem_formal_description_of_the_main_result}
By plugging the Shapley CSM into \texttt{Alg-ABRD}, the total cost of the
output profile is an
$O\left(
\varrho \cdot \max_{e}\min_{j} \Big( \frac{\sigma_{e}}{\xi_{e, j}}
\Big)^{\frac{1}{\alpha_{j}}} + \varrho^{\max_{j}\alpha_{j} + 1}
\right)$-approximation of the optimal result with probability at least
\[
1 - O\left( \frac{1}{N^{2}|E|\log^{2}N} \right) \, .
\]
The time complexity of the algorithm is
\[
O \bigg( N^{5} \log^{2} N \cdot |E|\log^{2}(N|E|) \bigg) \, .
\]
\end{theorem}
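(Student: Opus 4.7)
The plan is to assemble Theorem \ref{theorem:approx-ratio-Alg-ABRD} with the smoothness and potential-function results proved in Sections \ref{sec:smoothness} and \ref{sec:potent-funct-shapl}, using the analysis of \texttt{Shapley-APX} to discharge the poly-time $\epsilon$-computability hypothesis. First, I would invoke Lemma \ref{lemma_shapley_cost_sharing_is_polynomial_bounded} to conclude that the Shapley CSM is REP-expanded, so that Theorem \ref{theorem_smoothness_of_polynomial_bounded_cost_sharing_mechanisms} endows the induced GND game with $(\lambda,\mu)$-smoothness for $\lambda = \gamma_{\alpha} + \lambda_{\alpha}\varrho^{\max_{j}\alpha_{j}}$ and $\mu = 1/(2\varrho)$. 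Theorem \ref{theorem_bounds_on_the_potential_function_of_Shapley_cost_sharing} simultaneously supplies an $(A,B)$-bounded potential function with $A = \lceil \max_{j}\alpha_{j} \rceil = O(1)$ and $B = \mathcal{H}_{N} = O(\log N)$.

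Next, I would observe that Theorem \ref{theorem_the_probabilty_that_we_do_not_make_any_mistake_in_main_algorithm} certifies that running \texttt{Shapley-APX} for every $\epsilon$-cost share required during \texttt{Alg-ABRD} produces, w.h.p., genuine $\epsilon$-approximations of the exact Shapley cost shares, so the poly-time $\epsilon$-computability assumption feeding Theorem \ref{theorem:approx-ratio-Alg-ABRD} holds. Plugging $\lambda,\mu,A,B$ into that theorem and choosing $\epsilon$ small enough that $\varrho\epsilon_{1}^{2}\mu = \epsilon_{1}^{2}/2 < 1$, the resulting bound $\frac{2\varrho\epsilon_{1}^{2}\lambda}{1-\varrho\epsilon_{1}^{2}\mu}\,C^{*}$ simplifies, after absorbing $q$- and $\alpha$-dependent factors into the hidden constants and noting that $\gamma_{\alpha} = O\!\left(\max_{e}\min_{j}(\sigma_{e}/\xi_{e,j})^{1/\alpha_{j}}\right)$, to the claimed $O\!\left(\varrho \cdot \max_{e}\min_{j}(\sigma_{e}/\xi_{e,j})^{1/\alpha_{j}} + \varrho^{\max_{j}\alpha_{j}+1}\right)$ approximation ratio.

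For the time and probability bounds, I would trace the work performed by \texttt{Alg-ABRD}. Theorem \ref{theorem:approx-ratio-Alg-ABRD} dictates $T = \lceil Q \cdot \ln(ABN^{\max_{j}\alpha_{j}})\rceil$ iterations, and substituting $A = O(1)$, $B = O(\log N)$, and $Q = \Theta(N)$ yields $T = O(N\log N)$. Each iteration invokes the procedure of Lemma \ref{lemma:efficient-approximation-bar} once per player, whose implementation calls \texttt{Shapley-APX} on each of the $|E|$ edges in constructing the toll function and then runs $\mathcal{O}_{\mathcal{P}}$. Lemma \ref{lemma_approximating_Shapley_cost_share_with_high_probability_in_polynomial_time} prices each \texttt{Shapley-APX} invocation at $O(\log(TN|E|)\cdot N^{3})$ with $\epsilon$ held constant; multiplying the $O(TN|E|) = O(N^{2}|E|\log N)$ invocations by this per-call cost and folding in the residual logarithmic factors yields the stated $O\!\left(N^{5}\log^{2} N \cdot |E|\log^{2}(N|E|)\right)$ bound. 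The probability guarantee follows by a union bound over the $O(TN|E|)$ calls, each failing with probability at most $1/(2(TN|E|)^{2})$, so the overall failure probability is $O(1/(TN|E|))$, which is within the claimed $O(1/(N^{2}|E|\log^{2} N))$ after collecting constants.

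The main obstacle is not conceptual but one of careful bookkeeping: I must ensure that the approximation parameter $\varrho\epsilon_{1}$ delivered by Lemma \ref{lemma:efficient-approximation-bar} is the \emph{same} parameter that is threaded through every inequality in the proof of Theorem \ref{theorem:approx-ratio-Alg-ABRD}, that the smoothness parameter $\mu = 1/(2\varrho)$ satisfies $\mu < 1/(\varrho\epsilon_{1}^{2})$ (as required by that theorem) for all sufficiently small $\epsilon$, and that the dependence on the constants $q,\alpha_{1},\ldots,\alpha_{q}$ (in particular the coefficient $\lambda_{\alpha}$) is uniformly absorbed into the hidden $O(\cdot)$ notation. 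Once these alignments are verified, the theorem follows directly by composing the cited lemmas.
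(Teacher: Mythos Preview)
Your approach is exactly the one the paper takes: the paper itself gives essentially no proof beyond the sentence ``Using the facts that $\epsilon$, $\lambda_{\alpha}$, and $\varrho\epsilon_{1}^{2}/2$ are all constants, and $\mathcal{H}_{N}$ can be bounded by $O(\log N)$, the main result is established,'' so your assembly of Lemma~\ref{lemma_shapley_cost_sharing_is_polynomial_bounded}, Theorem~\ref{theorem_smoothness_of_polynomial_bounded_cost_sharing_mechanisms}, Theorem~\ref{theorem_bounds_on_the_potential_function_of_Shapley_cost_sharing}, Theorem~\ref{theorem_the_probabilty_that_we_do_not_make_any_mistake_in_main_algorithm}, and Theorem~\ref{theorem:approx-ratio-Alg-ABRD} is precisely what is intended.

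There is, however, a bookkeeping slip that propagates through your run-time and probability calculations. Recall the definition: $\Phi$ is $(A,B)$-bounded if $\Phi(p)/A \le C(p) \le B\cdot\Phi(p)$. Theorem~\ref{theorem_bounds_on_the_potential_function_of_Shapley_cost_sharing} gives $\frac{1}{\lceil\max_j\alpha_j\rceil}C(p)\le\Phi(p)\le\mathcal{H}_N\cdot C(p)$, which rearranges to $\Phi(p)/\mathcal{H}_N\le C(p)$ and $C(p)\le\lceil\max_j\alpha_j\rceil\cdot\Phi(p)$. Hence $A=\mathcal{H}_N=O(\log N)$ and $B=\lceil\max_j\alpha_j\rceil=O(1)$, the reverse of what you wrote. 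Since $Q=\frac{2\epsilon_1 N A}{1-\varrho\epsilon_1^2\mu}$ depends on $A$, this gives $Q=\Theta(N\log N)$ rather than $\Theta(N)$, and therefore $T=\lceil Q\ln(ABN^{\max_j\alpha_j})\rceil=O(N\log^2 N)$ rather than $O(N\log N)$. With the correct $T$, the union bound over the $O(TN|E|)$ calls to \texttt{Shapley-APX} yields failure probability $O(1/(TN|E|))=O(1/(N^2|E|\log^2 N))$, matching the statement; your value of $T$ would only give $O(1/(N^2|E|\log N))$, which does not. The same correction feeds the missing $\log N$ factor into the time bound. Once you swap $A$ and $B$, all three claimed bounds line up.
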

\LongVersionEnd 

\LongVersion 
\section{Implementation in a Decentralized Environment}
\label{sec:impl-decentr-envir}
The approximation algorithm \texttt{Alg-ABRD} that was developed up to now is centralized, and in particular two main aspects of the algorithm are incompatible with some common settings in game theory. The first aspect is that \texttt{Alg-ABRD} deterministically chooses a specific player for strategy update. Instead, if traffic requests were separate uncoordinated entities, it would make more sense that they decide to update their strategies in an \emph{uncoordinated} way. The second aspect is that \texttt{Alg-ABRD} chooses the best profile it has seen during the ABRD. However, it is inappropriate in game theory to ask uncoordinated individual entities to ``roll back'' to a previous profile that might be more costly for some of them.
 
This section tackles these issues by providing two techniques for adapting algorithm \texttt{Alg-ABRD} to the game-theoretic settings. First, instead of choosing a specific player for updating the strategy, we now select the player uniformly at random. We believe that this better simulates the behaviors of uncoordinated players. Subsection \ref{subsection_randomized_selection_and_decentralized_implementation} shows that this modification will still yield the same approximation ratio, with only a polynomial loss in the number of steps. Second, instead of choosing the best configuration in the sequence, subsection \ref{subsection_no_roll_back} analyzes the case where the last configuration is chosen. It is shown that the approximation ratio loses another $O(\log N)$ factor. Thus, while certainly inferior to the centralized algorithm, the game-theoretic version of \texttt{Alg-ABRD} still admits an approximately optimal outcome.

\subsection{Randomized Selection and Decentralized Implementation}
\label{subsection_randomized_selection_and_decentralized_implementation}
This subsection develops a random procedure of selecting the players for strategy updates. Specifically, it is assumed that all the players share the same source of random bits, which generates a number $i \in [N]$ randomly and uniformly for each step $t$ in which the ABRD has not converged. If the player with the same index satisfies:
\begin{equation}
\delta_{i}^{t} \; > \; 0 \label{formula_condition_for_updating_in_randomized_selection_of_A_BRD}
\end{equation}
then it unilaterally deviates to its best approximate response $p_{i}'$ to $p^{t}$. Otherwise, it does nothing. 

\begin{lemma}\label{lemma_potential_function_decreases_in_each_step_of_ABRD}
For any step $t \geq 1$ where the ABRD does not converge, as long as the player selected for strategy update satisfies Eq.~(\ref{formula_condition_for_updating_in_randomized_selection_of_A_BRD}), we have $\Phi(p^{t - 1}) - \Phi(p^{t}) > 0$.
\end{lemma}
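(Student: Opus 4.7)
The plan is to unfold the definitions and observe that the condition $\delta_{i}^{t}>0$, combined with the definition of $\epsilon_{1}$ and the $\epsilon$-approximation guarantees of the cost shares, already forces a strict improvement in the \emph{exact} individual cost of the updating player; the potential function property then transfers this strict improvement to $\Phi$.

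More concretely, let $i$ denote the player selected for update at step $t$, and let $p'_{i}$ be her computed ABR to $p_{-i}^{t-1}$, so that $p^{t}=(p'_{i},p_{-i}^{t-1})$. The hypothesis $\delta_{i}^{t}>0$ reads
\begin{equation*}
\widetilde{C}_{i}(p^{t-1}) \;>\; \epsilon_{1}\cdot\widetilde{C}_{i}(p'_{i},p_{-i}^{t-1}).
\end{equation*}
The first step is to sandwich the two $\epsilon$-individual costs by their exact counterparts using the inequalities $\widetilde{C}_{i}(p)\le(1+\epsilon)C_{i}(p)$ and $\widetilde{C}_{i}(p)\ge(1-\epsilon)C_{i}(p)$ (which hold w.h.p.\ by the conditioning set up in Section~\ref{sec:algorithm}). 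Applying the upper bound on the left side and the lower bound on the right side yields
\begin{equation*}
(1+\epsilon)\,C_{i}(p^{t-1}) \;>\; \epsilon_{1}(1-\epsilon)\,C_{i}(p'_{i},p_{-i}^{t-1}).
\end{equation*}

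The second step is simply to substitute $\epsilon_{1}=(1+\epsilon)/(1-\epsilon)$, which makes $\epsilon_{1}(1-\epsilon)=1+\epsilon$, so that $(1+\epsilon)$ cancels from both sides and we obtain the strict inequality $C_{i}(p^{t-1})>C_{i}(p'_{i},p_{-i}^{t-1})$.

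The third and final step invokes the defining property of a potential function. Since $p^{t}$ and $p^{t-1}$ differ only in the $i$-th coordinate, we have
\begin{equation*}
\Phi(p^{t-1})-\Phi(p^{t})\;=\;C_{i}(p^{t-1})-C_{i}(p'_{i},p_{-i}^{t-1})\;>\;0,
\end{equation*}
which is the claimed strict decrease. There is no real obstacle here: the only subtlety is keeping track of which side of each $\epsilon$-inequality to use, and noticing that the factor $\epsilon_{1}$ was chosen precisely so that the slack in the $\epsilon$-approximation of the cost shares is absorbed and a \emph{genuine} improvement in $C_{i}$ (and hence in $\Phi$) is guaranteed whenever $\delta_{i}^{t}>0$.
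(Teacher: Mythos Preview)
Your proof is correct and follows essentially the same approach as the paper: both use the potential-function identity $\Phi(p^{t-1})-\Phi(p^{t})=C_{i}(p^{t-1})-C_{i}(p^{t})$, sandwich the exact individual costs by the $\epsilon$-individual costs, and exploit that $\epsilon_{1}(1-\epsilon)=1+\epsilon$ to turn $\delta_{i}^{t}>0$ into a strict decrease. The only cosmetic difference is the order of the steps---you first derive $C_{i}(p^{t-1})>C_{i}(p^{t})$ and then invoke the potential function, whereas the paper starts from the potential-function identity and substitutes the inequalities afterward.
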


\begin{proof}
Since the ABRD does not converge at step $t$, there exists a player $j$ who is selected to update its strategy. By the definition of the potential function:
\begin{align}
\Phi(p^{t - 1}) - \Phi(p^{t}) \; =& \; C_{j}(p^{t - 1}) -
C_{j}(p_{j}^{t}, p_{-j}^{t - 1}) \nonumber\\
	\geq& \; \frac{1}{1 + \epsilon}\widetilde{C}_{j}(p^{t - 1}) - \frac{1}{1 - \epsilon}\widetilde{C}_{j}(p_{j}^{t}, p_{-j}^{t - 1}) \nonumber\\
	>& \; \frac{1}{1 + \epsilon} \epsilon_{1}
\widetilde{C}_{j}(p_{j}^{t}, p_{-j}^{t - 1}) - \frac{1}{1 -
\epsilon}\widetilde{C}_{j}(p_{j}^{t}, p_{-j}^{t - 1}) \nonumber\\
	=& \; 0 \nonumber
\end{align}
The second formula follows from the definition of the $\epsilon$-individual cost. The third one follows from Eq.~(\ref{formula_condition_for_updating_in_randomized_selection_of_A_BRD}).
\end{proof}

\begin{theorem}\label{theorem:approx-ratio-Alg-ABRD_with_randomized_selection}
With probability at least $\frac{1}{2}$, the output $p^{t^{*}}$ of the revised algorithm which uses the random selection can satisfy:
\begin{equation}
C(p^{t^{*}}) \; \leq \; \frac{2\varrho \epsilon_{1}^2\lambda}{1 - \varrho \epsilon_{1}^2\mu} \cdot C^{*} \nonumber
\end{equation}
if we change the number of steps in ABRD to $\displaystyle T' = N \cdot T^{2}$.
\end{theorem}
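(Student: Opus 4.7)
The plan is to mimic the proof of Theorem~\ref{theorem:approx-ratio-Alg-ABRD}, reusing its bad-profile classification and replacing the deterministic choice of a ``big-$\delta$'' player by an appeal to randomness. Call a profile $p^t$ \emph{bad} if $C(p^t) > \frac{2 \varrho \epsilon_1^2 \lambda}{1 - \varrho \epsilon_1^2 \mu} C^*$, exactly as in the earlier proof. Since the revised algorithm still outputs the minimum-cost profile encountered along the ABRD, if any $p^t$ is not bad we are done; so I would assume that all $T' + 1$ profiles $p^0, \dots, p^{T'}$ are bad and drive this to a contradiction.

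The first observation is that a ``good candidate'' player exists at every bad profile: the same pigeonhole argument behind Eq.~(\ref{formula_condition_of_being_selected_for_updating_the_strategy}) produces a player $j^\star = j^\star(t)$ with $\delta_{j^\star}^t > 0$ and $\delta_{j^\star}^t \geq \Delta^t / N$. Whenever the uniform random draw at iteration $t$ happens to land on $j^\star$, condition~(\ref{formula_condition_for_updating_in_randomized_selection_of_A_BRD}) holds and the chain of inequalities in the proof of Claim~\ref{proposition_potential_function_decreases_significantly_in_the_steps_with_bad_states} goes through verbatim, yielding $\Phi(p^{t+1}) \leq (1 - 1/Q) \Phi(p^t)$. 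In every other iteration, either Lemma~\ref{lemma_potential_function_decreases_in_each_step_of_ABRD} applies or the selected player fails~(\ref{formula_condition_for_updating_in_randomized_selection_of_A_BRD}) and sets $p^{t+1} = p^t$; either way $\Phi(p^{t+1}) \leq \Phi(p^t)$, so the potential is monotonically non-increasing.

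Next I would count these ``effective'' iterations. Since $j^\star$ is picked with probability at least $1/N$, the number $Y$ of effective iterations in the $T' = N T^2$ steps stochastically dominates a $\mathrm{Bin}(T', 1/N)$ variable with mean $T^2$; a multiplicative Chernoff bound then gives $\Pr[Y < T] \leq \exp\bigl(-\Theta(T^2)\bigr)$, which is far below $1/2$ and leaves ample slack for a union bound with the other failure event. On the complementary event $Y \geq T$, multiplying the per-effective-step bound $(1 - 1/Q)$ across the $T$ effective iterations yields $\Phi(p^{T'}) \leq (1 - 1/Q)^T \Phi(p^0)$; the remaining ingredients of Claims~\ref{proposition_decrease_in_potential_function_when_all_states_are_bad} and~\ref{proposition_general_lower_bound_on_smoothness_parameters} are then inherited verbatim to conclude $C(p^{T'}) < \varrho \cdot C^* < \frac{2 \varrho \epsilon_1^2 \lambda}{1 - \varrho \epsilon_1^2 \mu} C^*$, contradicting the assumption that $p^{T'}$ is bad.

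The main obstacle is that the identity of $j^\star(t)$, and even whether a given profile is bad, depends on prior random choices, so the effective-step indicators are not i.i.d. I would handle this by noting that the bound $\Pr[\text{iteration } t \text{ is effective} \mid \mathcal{F}_{t - 1}] \geq 1/N$ holds pointwise on the event that $p^t$ is bad; a coupling with an i.i.d.\ $\mathrm{Bernoulli}(1/N)$ sequence (or a Chernoff bound for supermartingale differences) then validates the tail estimate above. A final routine union bound folds in the w.h.p.\ correctness of the $\epsilon$-cost-share estimates promised by Lemma~\ref{lemma:efficient-approximation-bar} across all $T' = \mathrm{poly}(|E|, N)$ iterations, keeping the overall failure probability below $1/2$.
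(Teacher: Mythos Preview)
Your proposal is correct and follows the same overall skeleton as the paper --- classify profiles as bad, observe that the potential is non-increasing and drops by a $(1-1/Q)$ factor whenever the random draw lands on a ``pigeonhole'' player $j^\star$, then argue this happens at least $T$ times --- but the way you count these effective steps differs. The paper partitions the $T'=NT^2$ steps into $T$ \emph{stages} of $NT$ steps each, bounds the probability that a single stage contains no appropriate step by $(1-1/N)^{NT}\le e^{-T}$, and then takes a product over stages to get overall success probability at least $(1-e^{-T})^T>1/2$. You instead bound the total number $Y$ of effective steps directly by a Chernoff/martingale argument, obtaining $\Pr[Y<T]\le\exp(-\Theta(T^2))$, which is quantitatively much stronger and also more explicit about the adaptivity issue (the paper silently treats the per-step appropriateness indicators as independent, which is easy to justify but not stated). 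Both routes reach the same conclusion; yours is a bit more direct and yields sharper tail control, while the paper's stage decomposition avoids invoking martingale Chernoff altogether.
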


\begin{proof}
According to Lemma \ref{lemma:ABRD-converge-approx-NE}, here we can still only focus on the case where the ABRD does not converge. In this case, the $T'$ steps need to be partitioned into $T$ \emph{stages}, each of which contains $N \cdot T$ steps. We say a player $i$ is \emph{appropriate} for step $t$ if $\delta_{j}^{t}$ satisfies Eq.~(\ref{formula_condition_of_being_selected_for_updating_the_strategy}). A step $t$ is said to be appropriate if in this step an appropriate player is selected, and a stage is appropriate if it contains at least one appropriate step. Then we have:

\begin{claim}\label{proposition_with_random_selection_all_stages_are_appropriate}
With probability at least $\frac{1}{2}$, all the $T$ stages are appropriate.
\end{claim}

\begin{subproof}
\qedlabel{proposition_with_random_selection_all_stages_are_appropriate}
If the ABRD does not converge, then the Pigeonhole Principle shows that there exists at least one appropriate player in each step. By the assumption on the random source, each step is appropriate with probability $\frac{1}{N}$. Therefore, the probability that there exists no appropriate step in a stage must be:
\begin{equation}
\Big(1 - \frac{1}{N} \Big)^{NT} \; \leq \; \Big( \frac{1}{\operatorname{exp}(1)} \Big)^{T} \nonumber
\end{equation}

\sloppy
Hence, the probability that all the stages are appropriate is:
\begin{equation}
\Left( 1 - \Big(1 - \frac{1}{N} \Big)^{NT} \Right)^{T} \; \geq \; \Left( 1 - \Big( \frac{1}{\operatorname{exp}(1)} \Big)^{T} \Right)^{T} \; > \; \Left( 1 - \Big( \frac{1}{2} \Big)^{T} \Right)^{T} > \frac{1}{2} \nonumber
\end{equation}
The last inequality above holds because $\Big( \frac{1}{2} \Big)^{T} + \Big( \frac{1}{2} \Big)^{1 / T}$ decreases with $T$ when $T > 1$.
\end{subproof}
\par\fussy

Recall that we say a profile $p^{t}$ is bad if $C(p^{t}) \; > \;
\frac{2 \varrho \epsilon_{1}^{2} \lambda}{1 - \varrho \epsilon_{1}^{2} \mu}
\cdot C^{*}$.
Then we have:

\begin{claim}\label{proposition_with_random_selection_potential_function_decreases_significantly_in_appropriate_stages}
For any appropriate stage $\Bbbk$, if all the profiles generated in this stage are bad, then $\displaystyle \Phi(p^{t_{\Bbbk}^{1}}) < \Big(1 - \frac{1}{Q} \Big)\Phi(p^{t_{\Bbbk}^{0} - 1})$, where $t_{\Bbbk}^{0}$ an $t_{\Bbbk}^{1}$ respectively represent the first and the last steps in stage $\Bbbk$.
\end{claim}

\begin{subproof}
\qedlabel{proposition_with_random_selection_potential_function_decreases_significantly_in_appropriate_stages}
Since the player selected in each step does not update its strategy unless Eq.~(\ref{formula_condition_for_updating_in_randomized_selection_of_A_BRD}) is satisfied, Lemma \ref{lemma_potential_function_decreases_in_each_step_of_ABRD} indicates that the potential function is non-increasing in all the steps. Let $t_{\Bbbk}^{*}$ be an arbitrary appropriate step in stage $\Bbbk$. Then we have:
\begin{equation}
\Phi(p^{t_{\Bbbk}^{1}}) \; \leq \; \Phi(p^{t_{\Bbbk}^{*}}) \; < \; \Big(1 - \frac{1}{Q} \Big)\Phi(p^{t_{\Bbbk}^{*} - 1}) \; \leq \; \Big(1 - \frac{1}{Q} \Big)\Phi(p^{t_{\Bbbk}^{0} - 1}) \, , \nonumber
\end{equation}
where the second inequality follows from Claim \ref{proposition_potential_function_decreases_significantly_in_the_steps_with_bad_states}.
\end{subproof}

Combining Claim \ref{proposition_with_random_selection_potential_function_decreases_significantly_in_appropriate_stages} with the techniques in the proof of Theorem \ref{theorem:approx-ratio-Alg-ABRD}, we can prove that if all the stages are appropriate, then at least one stage generates a profile that is not bad.
\end{proof}

Putting Theorem \ref{theorem:approx-ratio-Alg-ABRD_with_randomized_selection}, Theorem \ref{theorem_smoothness_of_polynomial_bounded_cost_sharing_mechanisms}, Theorem \ref{theorem_bounds_on_the_potential_function_of_Shapley_cost_sharing} and Theorem \ref{theorem_the_probabilty_that_we_do_not_make_any_mistake_in_main_algorithm} together, we can get:

\begin{corollary}\label{corollary_result_for_random_selction}
The revised algorithm where the player for strategy update is determined uniformly at random yields the same approximation ratio as \texttt{Alg-ABRD} with probability at least
\[
\frac{1}{2}\Left(1 - O \Left( \frac{1}{N^{2}|E|\log^{2}N} \Right) \Right) \, .
\]
The time complexity of the revised algorithm is
\[
O \bigg( N^{7} \log^{4} N \cdot |E| \log^{2}(N|E|) \bigg) \, .
\]
\end{corollary}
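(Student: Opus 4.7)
The plan is to obtain Corollary~\ref{corollary_result_for_random_selction} by a direct composition of four ingredients already in hand: Theorem~\ref{theorem:approx-ratio-Alg-ABRD_with_randomized_selection} (which turns an $(A,B)$-bounded potential and $(\lambda,\mu)$-smoothness into an approximation guarantee for the randomized variant of the ABRD, at the cost of blowing up the iteration budget to $T' = N\cdot T^{2}$), Theorem~\ref{theorem_smoothness_of_polynomial_bounded_cost_sharing_mechanisms} (which gives smoothness parameters for any REP-expanded CSM), Theorem~\ref{theorem_bounds_on_the_potential_function_of_Shapley_cost_sharing} (which bounds the Shapley potential by $A = \lceil\max_j\alpha_j\rceil$ and $B = \mathcal{H}_N$), and Theorem~\ref{theorem_the_probabilty_that_we_do_not_make_any_mistake_in_main_algorithm} (which controls the probability that \emph{all} invocations of \texttt{Shapley-APX} yield valid $\epsilon$-cost shares). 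Since Lemma~\ref{lemma_shapley_cost_sharing_is_polynomial_bounded} already says that the Shapley CSM is REP-expanded, all hypotheses for these four results are met.

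First, I would plug $\lambda = \gamma_{\alpha} + \lambda_{\alpha}\varrho^{\max_j\alpha_j}$ and $\mu = 1/(2\varrho)$ from Theorem~\ref{theorem_smoothness_of_polynomial_bounded_cost_sharing_mechanisms} into the conclusion of Theorem~\ref{theorem:approx-ratio-Alg-ABRD_with_randomized_selection}. Since $\epsilon$ is a sufficiently small positive constant, the denominator $1 - \varrho\epsilon_{1}^{2}\mu$ is bounded below by a positive constant, so the approximation ratio simplifies to
\begin{MathMaybe}
O\Left(\varrho^{\max_j\alpha_j + 1} + \varrho\cdot\gamma_\alpha\Right) = O\Left(\varrho^{\max_j\alpha_j + 1} + \varrho\cdot\max_{e}\min_{j}(\sigma_e/\xi_{e,j})^{1/\alpha_j}\Right),
\end{MathMaybe}
exactly matching the approximation ratio established for the centralized \texttt{Alg-ABRD} in Theorem~\ref{theorem_formal_description_of_the_main_result}. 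The conditional guarantee of Theorem~\ref{theorem:approx-ratio-Alg-ABRD_with_randomized_selection} provides probability at least $1/2$ of achieving this ratio, provided every approximate-best-response computation uses a valid $\epsilon$-cost share.

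Next, I would account for the overall success probability by intersecting the ``random selection'' event (probability $\geq 1/2$) with the event that every $\epsilon$-cost share computed during the run is accurate. With $T' = NT^{2}$ iterations and $T = O(Q\log(ABN^{\max_j\alpha_j}))$, where $Q = O(N)$, $A = O(1)$, $B = O(\log N)$, we get $T = O(N\log N)$, hence $T' = O(N^{3}\log^{2}N)$; re-running the proof of Theorem~\ref{theorem_the_probabilty_that_we_do_not_make_any_mistake_in_main_algorithm} with $T'$ in place of $T$ (and invoking Lemma~\ref{lemma_approximating_Shapley_cost_share_with_high_probability_in_polynomial_time}) yields failure probability $O(1/(N^{2}|E|\log^{2}N))$, after which a union bound gives the claimed $\tfrac{1}{2}(1 - O(1/(N^{2}|E|\log^{2}N)))$ lower bound on the success probability.

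Finally, for the running time I would multiply the per-iteration cost by $T'$: each of the $T'$ iterations involves drawing one random player and letting it compute its ABR, which (via Lemma~\ref{lemma:efficient-approximation-bar} and \texttt{Shapley-APX}) costs $O(|E|\cdot\log(T'N|E|)\cdot N^{3}/\epsilon^{2})$ up to additive lower-order terms. Substituting $T' = O(N^{3}\log^{2}N)$ and absorbing the constant $\epsilon$ yields a total running time of $O(N^{7}\log^{4}N\cdot|E|\log^{2}(N|E|))$, as stated. There is no genuine technical obstacle here: the only mild bookkeeping concern is that the $T' = NT^{2}$ blow-up percolates through both the success-probability and the run-time estimates, so one must consistently re-derive the dependence on $T'$ (rather than $T$) in Theorem~\ref{theorem_the_probabilty_that_we_do_not_make_any_mistake_in_main_algorithm} and Lemma~\ref{lemma_approximating_Shapley_cost_share_with_high_probability_in_polynomial_time}; this is purely mechanical and preserves all qualitative conclusions.
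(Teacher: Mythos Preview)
Your proposal is correct and follows exactly the paper's own route: the paper's entire proof of this corollary is the single sentence ``Putting Theorem~\ref{theorem:approx-ratio-Alg-ABRD_with_randomized_selection}, Theorem~\ref{theorem_smoothness_of_polynomial_bounded_cost_sharing_mechanisms}, Theorem~\ref{theorem_bounds_on_the_potential_function_of_Shapley_cost_sharing} and Theorem~\ref{theorem_the_probabilty_that_we_do_not_make_any_mistake_in_main_algorithm} together, we can get,'' and you invoke precisely those four results (plus Lemma~\ref{lemma_shapley_cost_sharing_is_polynomial_bounded} to justify applying the smoothness theorem to Shapley). Your additional bookkeeping for the success probability and running time is sound; the only slight slip is the claim that each iteration costs one player's ABR computation---to match the stated $N^{7}$ bound (rather than $N^{6}$) you should account for all $N$ players computing $\delta_{i}^{t}$ per step, as in the centralized pseudocode, but this does not affect the argument's validity.
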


\subsection{Output the Last Strategy Profile}
\label{subsection_no_roll_back}
In this subsection, we study the approach of directly returning the last strategy instead of the one with the minimum overall cost, $p^{t^{*}}$. 

\begin{theorem}\label{theorem_gap_between_the_costs_of_the_best_profile_and_last_one}
Suppose that the ABRD does not converge at any step $t$, then the cost corresponding to the last profile is at most $\lceil \max_{j}\alpha_{j} \rceil\mathcal{H}_{N}$ times larger than $C(p^{t^{*}})$, no matter whether the player in each step is selected in the deterministic way or in a random one.
\end{theorem}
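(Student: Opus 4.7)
The plan is to leverage two ingredients established earlier: (a) the fact that along any ABRD trajectory the Shapley potential $\Phi$ is monotonically non-increasing, and (b) the $(\lceil\max_j\alpha_j\rceil,\mathcal{H}_N)$-boundedness of $\Phi$ proved in \Thm{}~\ref{theorem_bounds_on_the_potential_function_of_Shapley_cost_sharing}. Given these, the statement is essentially a two-line chain of inequalities.

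First, I would verify that for every step $t$ of the simulated ABRD, regardless of whether the updating player is chosen deterministically (as in Pseudocode~\ref{algorithm_ABRD_for_EER_game}) or uniformly at random (as in \Sect{}~\ref{subsection_randomized_selection_and_decentralized_implementation}), one has $\Phi(p^{t})\leq\Phi(p^{t-1})$. The argument is exactly the calculation performed in \Lem{}~\ref{lemma_potential_function_decreases_in_each_step_of_ABRD}: if no player actually moves at step $t$ then $p^{t}=p^{t-1}$ and the potential is unchanged; otherwise the chosen player $j$ satisfies $\delta_j^{t}>0$, i.e.\ $\widetilde{C}_j(p^{t-1})>\epsilon_{1}\widetilde{C}_j(p'_j,p_{-j}^{t-1})$, and combining this with the $\epsilon$-approximation guarantee of the cost shares yields
\begin{MathMaybe}
\Phi(p^{t-1})-\Phi(p^{t})
\;=\;
C_j(p^{t-1})-C_j(p'_j,p_{-j}^{t-1})
\;\geq\;
\tfrac{1}{1+\epsilon}\widetilde{C}_j(p^{t-1})-\tfrac{1}{1-\epsilon}\widetilde{C}_j(p'_j,p_{-j}^{t-1})
\;>\;0.
\end{MathMaybe}
Hence $\Phi(p^{0})\geq\Phi(p^{1})\geq\cdots\geq\Phi(p^{T})$; in particular, since $t^{*}\in\{0,1,\dots,T\}$, we obtain $\Phi(p^{T})\leq\Phi(p^{t^{*}})$.

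Second, I would invoke \Thm{}~\ref{theorem_bounds_on_the_potential_function_of_Shapley_cost_sharing} at both ends of this inequality: the upper bound $C(p^{T})\leq\mathcal{H}_{N}\cdot\Phi(p^{T})$ and the lower bound $\Phi(p^{t^{*}})\leq\lceil\max_{j}\alpha_{j}\rceil\cdot C(p^{t^{*}})$. Concatenating these with the monotonicity step yields
\begin{MathMaybe}
C(p^{T})
\;\leq\;
\mathcal{H}_{N}\cdot\Phi(p^{T})
\;\leq\;
\mathcal{H}_{N}\cdot\Phi(p^{t^{*}})
\;\leq\;
\lceil\max_{j}\alpha_{j}\rceil\,\mathcal{H}_{N}\cdot C(p^{t^{*}}),
\end{MathMaybe}
which is precisely the claimed bound.

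The only subtle point — and the ``main obstacle'' in an otherwise short argument — is the non-increase of $\Phi$ in the randomized variant, where a step might select a player whose $\delta_i^{t}\leq 0$. This is handled by the convention that such a player makes no move, so that $p^{t}=p^{t-1}$ and $\Phi$ is trivially preserved. The hypothesis that the ABRD does not converge is used only to guarantee that the sequence actually reaches index $T$ (so that $p^{T}$ is well defined and comparable to $p^{t^{*}}$); the monotonicity argument itself does not require convergence of any particular step.
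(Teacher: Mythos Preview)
Your overall strategy is exactly the paper's: use monotonicity of the Shapley potential along the ABRD together with the sandwich bounds of \Thm{}~\ref{theorem_bounds_on_the_potential_function_of_Shapley_cost_sharing}. However, you have swapped the two constants in the $(A,B)$-boundedness, and as a result both endpoint inequalities in your chain are false as written.

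\Thm{}~\ref{theorem_bounds_on_the_potential_function_of_Shapley_cost_sharing} states
\[
\tfrac{1}{\lceil\max_j\alpha_j\rceil}\,C(p)\;\leq\;\Phi(p)\;\leq\;\mathcal{H}_N\,C(p),
\]
which, in the $(A,B)$-bounded notation $\Phi/A\leq C\leq B\,\Phi$, means $A=\mathcal{H}_N$ and $B=\lceil\max_j\alpha_j\rceil$, not the other way around. Thus the valid consequences are $C(p)\leq\lceil\max_j\alpha_j\rceil\,\Phi(p)$ and $\Phi(p)\leq\mathcal{H}_N\,C(p)$. Your claimed bounds $C(p^{T})\leq\mathcal{H}_N\,\Phi(p^{T})$ and $\Phi(p^{t^{*}})\leq\lceil\max_j\alpha_j\rceil\,C(p^{t^{*}})$ do not follow (and fail, e.g., when $N$ is small and $\max_j\alpha_j$ is large, or vice versa). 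The correct chain --- and the one the paper uses --- is
\[
C(p^{t_{\max}})
\;\leq\;
\lceil\max_j\alpha_j\rceil\,\Phi(p^{t_{\max}})
\;\leq\;
\lceil\max_j\alpha_j\rceil\,\Phi(p^{t^{*}})
\;\leq\;
\lceil\max_j\alpha_j\rceil\,\mathcal{H}_N\,C(p^{t^{*}}).
\]
With this swap your argument is complete and matches the paper's proof; the remarks about monotonicity (\Lem{}~\ref{lemma_potential_function_decreases_in_each_step_of_ABRD}) and about steps in which no player moves are correct.
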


\begin{proof}
Let $t_{\max}$ be the last step of the ABRD. 
This theorem trivially holds when $t^{*} = t_{\max}$. Suppose that $t^{*} < t_{\max}$, then:
\begin{equation}
C(p^{t_{\max}}) \; \leq \; \lceil \max_{j}\alpha_{j} \rceil \Phi(p^{t_{\max}}) \; < \; \lceil \max_{j}\alpha_{j} \rceil \Phi(p^{t^{*}}) \; \leq \; \lceil \max_{j}\alpha_{j} \rceil\mathcal{H}_{N} C(p^{t^{*}}) \nonumber
\end{equation}
The first inequality and the last one hold since the potential function is $(\mathcal{H}_{N}, \lceil \alpha \rceil)$-bounded. The second inequality follows from Eq.~(\ref{formula_condition_of_being_selected_for_updating_the_strategy}), Eq.~(\ref{formula_condition_for_updating_in_randomized_selection_of_A_BRD}) and Lemma \ref{lemma_potential_function_decreases_in_each_step_of_ABRD}, for both the deterministic selection and the random selection. 
\end{proof}

Since $\mathcal{H}_{N}$ is bounded by $O(\log N)$, it can be inferred that:

\begin{corollary}\label{corollary_loss_in_approximation_caused_by_non_roll_back}
Returning the strategy profile generated in the last step of the ABRD instead of $p^{t^{*}}$ as the output will increase the approximation ratio by $O(\log N)$ times.
\end{corollary}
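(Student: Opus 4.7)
The plan is to derive this corollary as an almost immediate consequence of \Thm{}~\ref{theorem_gap_between_the_costs_of_the_best_profile_and_last_one} together with the approximation guarantees already established for the algorithm that outputs $p^{t^{*}}$. I would first split into two cases depending on whether the ABRD converges within the $T$ iterations. In the converged case, \texttt{Alg-ABRD} stops with $p^{t_{\max}} = p^{t_{\max} - 1}$, so the final profile is itself a candidate for $p^{t^{*}}$ and returning it incurs no additional loss; the approximation guarantee of \Thm{}~\ref{theorem:approx-ratio-Alg-ABRD} transfers unchanged, and trivially fits within an $O(\log N)$ overhead.

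In the non-converged case, I would directly invoke \Thm{}~\ref{theorem_gap_between_the_costs_of_the_best_profile_and_last_one}, which gives
\[
C(p^{t_{\max}}) \leq \lceil \max\nolimits_{j} \alpha_{j} \rceil \cdot \mathcal{H}_{N} \cdot C(p^{t^{*}}) \, .
\]
Composing this with the upper bound on $C(p^{t^{*}}) / C^{*}$ from \Thm{}~\ref{theorem:approx-ratio-Alg-ABRD} (and analogously with \Thm{}~\ref{theorem:approx-ratio-Alg-ABRD_with_randomized_selection} for the randomized variant) multiplies the previously established approximation ratio by the factor $\lceil \max_{j} \alpha_{j} \rceil \cdot \mathcal{H}_{N}$.

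To finish, I would observe that $\lceil \max_{j} \alpha_{j} \rceil$ is a constant under the paper's standing convention that $q$ and $\alpha_{1}, \dots, \alpha_{q}$ are constants, while $\mathcal{H}_{N} = O(\log N)$; hence the multiplicative blow-up is $O(\log N)$, which is exactly the claim. There is no real obstacle in this step: the preceding theorem has already done all of the substantive work, and the corollary is essentially a translation of its conclusion from a comparison between $C(p^{t_{\max}})$ and $C(p^{t^{*}})$ into the language of approximation ratios against $C^{*}$. The only minor care needed is to ensure that the constant $\lceil \max_{j} \alpha_{j} \rceil$ is correctly absorbed into the $O(\cdot)$ notation, exactly as done elsewhere in the paper.
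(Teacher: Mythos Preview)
Your proposal is correct and follows the paper's approach: the paper derives the corollary in one line from \Thm{}~\ref{theorem_gap_between_the_costs_of_the_best_profile_and_last_one} together with $\mathcal{H}_{N} = O(\log N)$, and you spell out the same reasoning with a case split. One small imprecision in your converged case: saying that $p^{t_{\max}}$ is ``a candidate for $p^{t^{*}}$'' does not by itself show that returning it incurs no additional loss, since the minimum could still be achieved at an earlier step; the clean justification is \Lem{}~\ref{lemma:ABRD-converge-approx-NE}, which directly gives $C(p^{t_{\max}}) \leq \frac{\varrho \epsilon_{1}^{2}\lambda}{1 - \varrho \epsilon_{1}^{2}\mu} C^{*}$, already within the bound of \Thm{}~\ref{theorem:approx-ratio-Alg-ABRD}.
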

\LongVersionEnd 

\LongVersion 
\section{PoA of the GND Game: Upper Bound and Lower Bound}
\label{sec:poa-gnd-game}
A byproduct obtained in this paper is a tight bound on the PoA of the GND
games for a class of CSMs.
In \cite{Roughgarden2015IRP}, it is proved that the PoA of a smooth cost
minimization game is
\[
\inf \left\lbrace
\frac{\lambda}{1 - \mu} : \text{ the game is $(\lambda, \mu)$-smooth}
\right\rbrace\, .
\]
Such a PoA is said to be \emph{robust} and can be extended to the mixed Nash
equilibrium, correlated equilibrium, and coarse correlated equilibrium
\cite{Roughgarden2015IRP}.
From Theorem
\ref{theorem_smoothness_of_polynomial_bounded_cost_sharing_mechanisms}, it can
be inferred that:

\begin{theorem}\label{theorem_robust_poa_of_polynomial_bounded_cost_sharing_mechanism}
For any REP-expanded CSM $M$, the induced GND game has a \emph{robust} PoA of $\displaystyle O\Left(\max_{e \in E} \min_{j \in [q]}\Big( \frac{\sigma_{e}}{\xi_{e,\, j}} \Big)^{\frac{1}{\alpha_{j}}}\Right)$.
\end{theorem}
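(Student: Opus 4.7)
The plan is to combine two ingredients that the excerpt has already put in place: the smoothness guarantee for REP-expanded CSMs (\Thm{}~\ref{theorem_smoothness_of_polynomial_bounded_cost_sharing_mechanisms}), and the generic robust-PoA-from-smoothness principle of Roughgarden~\cite{Roughgarden2015IRP} recalled immediately before the statement, which says that for any cost-minimization $(\lambda,\mu)$-smooth game the robust PoA is at most $\lambda/(1-\mu)$, applying uniformly to pure, mixed, correlated, and coarse correlated equilibria.

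First, I would instantiate \Thm{}~\ref{theorem_smoothness_of_polynomial_bounded_cost_sharing_mechanisms} at $\varrho = 1$ (the parameter $\varrho$ in that theorem is a free handle; the approximation oracle plays no role in the equilibrium analysis). This yields that the induced GND game is $(\lambda,\mu)$-smooth with $\lambda = \gamma_{\alpha} + \lambda_{\alpha}$ and $\mu = 1/2$, where $\gamma_{\alpha} = \max_{e} \min_{j} \Left( \tfrac{1}{\alpha_{j}-1} \cdot \tfrac{\sigma_{e}}{\xi_{e,j}} \Right)^{1/\alpha_{j}}$ and $\lambda_{\alpha}$ depends only on $q$ and $\alpha_{1}, \dots, \alpha_{q}$. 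Plugging into the robust-PoA inequality immediately gives a PoA upper bound of $\lambda/(1-\mu) = 2(\gamma_{\alpha} + \lambda_{\alpha})$.

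Second, I would massage this into the advertised form. Because $q$ and the exponents $\alpha_{1},\dots,\alpha_{q}$ are treated as constants throughout the paper, the factor $(1/(\alpha_{j}-1))^{1/\alpha_{j}}$ is a constant, so $\gamma_{\alpha} = O\Left( \max_{e \in E} \min_{j \in [q]} (\sigma_{e}/\xi_{e,j})^{1/\alpha_{j}} \Right)$, and the additive constant $\lambda_{\alpha}$ (also depending only on $q$ and the exponents) is absorbed into the $O(\cdot)$ once we note that the PoA is always at least $1$. This produces exactly the bound claimed in the statement.

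There is essentially no obstacle here: the heavy lifting lives in \Thm{}~\ref{theorem_smoothness_of_polynomial_bounded_cost_sharing_mechanisms} and in Roughgarden's generic theorem. The only tiny design choice is the value of $\varrho$ used to expose the smoothness pair; any constant $\varrho \geq 1$ yields the same asymptotic bound on $\lambda/(1-\mu)$, and $\varrho = 1$ is the cleanest since it minimizes the $\lambda_{\alpha}\varrho^{\max_{j}\alpha_{j}}$ contribution without spoiling $1-\mu = 1/2$. If one wanted to optimize the hidden constant more carefully, one could balance the two terms in $\lambda$ against $1-\mu$ by choosing $\varrho$ as a function of $\gamma_{\alpha}$, but this is irrelevant for the stated $O(\cdot)$ bound.
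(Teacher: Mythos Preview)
Your proposal is correct and matches the paper's own argument essentially verbatim: the paper simply states that the theorem ``can be inferred'' from \Thm{}~\ref{theorem_smoothness_of_polynomial_bounded_cost_sharing_mechanisms} together with Roughgarden's robust PoA bound $\lambda/(1-\mu)$, which is exactly the combination you spell out (instantiating $\varrho=1$, getting $\mu=1/2$, and absorbing the $\alpha$-dependent constants into the $O(\cdot)$).
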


In the following, we prove that the upper bound in Theorem
\ref{theorem_robust_poa_of_polynomial_bounded_cost_sharing_mechanism}
asymptotically matches the lower bound on the PoA of the games induced by a
wide class of CSMs, the \emph{budget-balanced} ones.

\begin{definition}[Budget-balanced cost sharing\cite{Chen2010DNP, Von2013OCS}]\label{definition_of_budget_balanced_cost_sharing_mechanism}
A CSM $M$ is budget-balanced if for any player $i$ and any edge $e$:
\begin{equation}
\sum_{i \in S_{e}} f_{i, e}(p) = F_{e}(l_{e}^{p}) \, . \label{formula_budget_balance}
\end{equation}
\end{definition}

\begin{theorem}\label{theorem_lower_bound_on_poa_of_budget_balanced_cost_sharing_mechanism}
For any budget-balanced CSM, there exists induced GND games with a PoA of $\Omega\Left( \max_{e}\min_{j}\Big( \frac{\sigma_{e}}{\xi_{e,\, j}} \Big)^{\frac{1}{\alpha_{j}}} \Right)$.
\end{theorem}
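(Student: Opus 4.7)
The plan is to exhibit, for an arbitrary budget-balanced (separable, uniform) CSM $M$, a specific routing instance whose induced GND game has PoA $\Omega((\sigma/\xi)^{1/\alpha})$, where $\sigma, \xi, \alpha$ can be chosen so that $\max_e \min_j (\sigma_e/\xi_{e,j})^{1/\alpha_j} = (\sigma/\xi)^{1/\alpha}$ takes an arbitrarily large prescribed value. The guiding principle is that whenever a player is the only user of an edge in a profile $p$, budget-balance alone (i.e., $\sum_i f_{i,e}(p) = F_e(l_e^p)$) forces her cost share on that edge to equal the entire edge cost, thereby making the analysis oblivious to whatever freedom $M$ may have in splitting costs on multi-user edges.

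Concretely, fix any $\alpha = \alpha_j$ and pick $\sigma,\xi>0$; let $N = \lceil (\sigma/\xi)^{1/\alpha}\rceil$. Introduce $N$ players, each a routing request from a distinct source $s_i$ to a distinct destination $t_i$. Add two extra vertices $s_0,t_0$ and a \emph{shared} edge $e=(s_0,t_0)$ with $F_e(l)=\sigma+\xi l^\alpha$. For every $i\in[N]$, add a \emph{personal} edge $e'_i=(s_i,t_i)$ with cost function $F_{e'_i}(l)=(\sigma-\delta)+\xi l^\alpha$ for an infinitesimal $\delta>0$, together with two auxiliary edges $(s_i,s_0)$ and $(t_0,t_i)$ with a negligible cost function $F(l)=\eta l^\alpha$ for $\eta\to 0$. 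Each player $i$ then has exactly two replies: the direct edge $e'_i$, or the three-edge path $(s_i,s_0), e, (t_0,t_i)$.

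The principal verification is that the profile $p^{\mathrm{NE}}$ in which every player uses her personal edge $e'_i$ is a Nash equilibrium under $M$. Here each $e'_i$ carries exactly one player, so budget-balance forces $f_{i,e'_i}(p^{\mathrm{NE}})=F_{e'_i}(1)=\sigma-\delta+\xi$. If player $i$ deviates to the shared path she becomes the sole user of $e$ and of two auxiliary edges, so by the same argument her deviation cost equals $F_e(1)+2\eta=\sigma+\xi+2\eta$, which is strictly larger once $\delta+2\eta>0$. This is the step where avoiding any dependence on the particular CSM is essential, and the entire construction is engineered so that the relevant cost shares are pinned down by budget-balance alone.

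The final step is a straightforward comparison: $C(p^{\mathrm{NE}})=N(\sigma-\delta+\xi)=\Theta(N\sigma)$, whereas the candidate solution routing all players through $e$ has total cost $\sigma+\xi N^{\alpha}+2N\eta=O(\sigma)$ by the choice of $N$, so $C^{*}=O(\sigma)$ and
\[
\mathrm{PoA} \;\geq\; \frac{C(p^{\mathrm{NE}})}{C^{*}} \;=\; \Omega(N) \;=\; \Omega\Left((\sigma/\xi)^{1/\alpha}\Right).
\]
Since the edge $e$ realizes $\min_j(\sigma_e/\xi_{e,j})^{1/\alpha_j}=(\sigma/\xi)^{1/\alpha}$ and the auxiliary edges contribute smaller ratios, the instance also satisfies $\max_e\min_j(\sigma_e/\xi_{e,j})^{1/\alpha_j}=(\sigma/\xi)^{1/\alpha}$, matching the theorem's $\Omega$-expression. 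The main obstacle is the CSM-obliviousness of the NE condition; the ``lone player on every edge of her reply'' design deals with this, and once that observation is in place the rest reduces to the choice of $N$ balancing the startup against the polynomial term.
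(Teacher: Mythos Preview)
Your proposal is correct and follows essentially the same approach as the paper: construct an instance in which every player is the \emph{sole} user of each edge she touches, both in the candidate Nash profile and in any unilateral deviation, so that budget-balance alone pins down all relevant cost shares; then compare the $\Theta(N\sigma)$ cost of the all-private-edges equilibrium against the $O(\sigma)$ cost of routing everyone through a shared edge with $N\approx(\sigma/\xi)^{1/\alpha}$. The paper realizes this with a common source and a hub vertex $t^{*}$, whereas you use disjoint source--target pairs joined by a shared middle edge, but the mechanism and the arithmetic are the same. (One small inaccuracy: in your undirected graph each player has more than two simple $s_i$--$t_i$ paths, not ``exactly two''; however every alternative path still traverses some edge with startup cost $\geq \sigma-\delta$ plus at least two auxiliary edges, so the NE verification goes through unchanged.)
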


\begin{figure}[t]
\center
\includegraphics[scale=0.6]{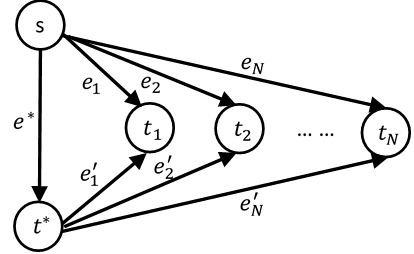}
\caption{\label{figure_directed_simple_network}%
A network for proving the lower bound on the PoA.}
\end{figure}

\begin{proof}
Consider the GND game induced by the GND problem with routing requests and the
budget-balanced CSM.
Suppose that the number of players $N = \Big( \frac{\sigma}{\xi} \Big)^{\frac{1}{\alpha}}$, where $\sigma > 0$ and $\xi > 0$ are properly chosen such that $\frac{\sigma}{\xi}$ is large enough, and $\alpha > 1$. Construct a graph as shown in Fig.~\ref{figure_directed_simple_network}. For every $i \in [N]$, there are an edge $e_{i}$ from $s$ to $t_{i}$, and an edge $e_{i}'$ from $t^{*}$ to $t_{i}$. The nodes $s$ and $t^{*}$ are connected by $e^{*}$. The parameters in the energy cost functions are set as follows:
\begin{itemize}
	\item $\sigma_{e^{*}} = \frac{N}{N + 1} \cdot \sigma$, and $\xi_{e^{*}, 1} = \frac{N}{N + 1} \cdot \xi$.
	\item For every $i \in [N]$, $\sigma_{e_{i}} = \sigma$, and $\xi_{e_{i}, 1} = \xi$.
	\item For every $i \in [N]$, $\sigma_{e_{i}'} = \frac{1}{N + 1} \cdot \sigma$, and $\xi_{e_{i}', 1} = \frac{3}{N + 1} \cdot \xi$.
	\item $\alpha_{1} = \alpha$.
	\item For every $2 \leq j \leq q$, $1 < \alpha_{j} < \alpha_{1}$, $\xi_{e,\, j} < \xi_{e,\, 1} / [q N^{\alpha_{j}}(N + 1)]$.
\end{itemize}
With these settings, we have $\max_{e}\min_{j}\Big(\frac{\sigma_{e}}{\xi_{e,\, j}} \Big)^{\frac{1}{\alpha_{j}}} = \Big(\frac{\sigma}{\xi}\Big)^{\frac{1}{\alpha}}$.

For each player $i$, its source and target are set to $(s, t_{i})$, and its weight $w_{i}(e)$ is set to $1$ for every edge $e$. Then each player $i$ has two alternative paths, $\lbrace e_{i} \rbrace$ and $\lbrace e^{*}, e_{i}' \rbrace$. Consider the profile where every player $i$ chooses the path $\lbrace e_{i} \rbrace$. With any budget-balanced CSM, the cost of each player $i$ must be:
\begin{equation}
f_{i, e_{i}}(p) \; = \; F_{i, e_{i}}(1) \; = \; \sigma_{e_{i}} + \sum_{j = 1}^{q}\xi_{e_{i},\, j} \; < \; \sigma + \xi \cdot \frac{N + 2}{N + 1} \nonumber
\end{equation}
The first transition follows from Definition \ref{definition_of_budget_balanced_cost_sharing_mechanism}, since $S_{e_{i}}$ only contains player $i$ for every $i \in [N]$. The last transition holds since $N^{\alpha_{j}} > 1$ for every $j$. If any player $i$ changes its choice to $\lbrace e^{*}, e_{i}' \rbrace$, with any budget-balanced CSM, its cost would be:
\begin{align}
f_{i, e^{*}}(p) + f_{i, e_{i}'}(p) \; =& \; F_{i, e^{*}}(1) + F_{i, e_{i}'}(1) \nonumber\\
	>&\; \sigma_{e^{*}} + \xi_{e^{*},\, 1} + \sigma_{e_{i}'} + \xi_{e_{i}',\, 1} \nonumber\\ 
	=&\;  \frac{N}{N + 1} \sigma + \frac{N}{N + 1} \xi + \frac{1}{N + 1} \sigma + \frac{3}{N + 1} \xi \nonumber\\ 
	=&\; \sigma + \frac{N + 3}{N + 1}\xi \nonumber
\end{align}
where the first equality still follows Definition \ref{definition_of_budget_balanced_cost_sharing_mechanism}. Therefore, any player $i$ cannot decrease its cost through a unilateral deviation. By definition, this profile is a pure Nash equilibrium.

The total cost incurred by this equilibrium is $N \cdot (\sigma_{e_{i}} + \sum_{j}\xi_{e_{i},\, j}) > N(\sigma + \xi)$. In contrast, if every player chooses the path $\lbrace e^{*}, e_{i}' \rbrace$, the total cost would be:
\begin{align}
\sigma_{e^{*}} + \sum_{j}\xi_{e^{*},\, j} \cdot N^{\alpha_{j}} + N \cdot
(\sigma_{e_{i}'} + \sum_{j}\xi_{e_{i}',\, j})
\, < \, &
\frac{N}{N + 1} c + \frac{N}{N + 1} \xi \cdot N^{\alpha} \\
& +
N(\frac{1}{N + 1}
\sigma + \frac{3}{N + 1} \xi) + \frac{2}{N + 1}\xi \nonumber \\
= \, &
\frac{N}{N + 1} \sigma + \frac{N}{N + 1} \sigma + \frac{N}{N + 1} \sigma +
\frac{3N + 2}{N + 1} \xi \nonumber \\
< \; &
3 (\sigma + \xi) \nonumber
\end{align}
Thus, the price of anarchy would be at least $\frac{N(\sigma + \xi)}{3(\sigma + \xi)} = \frac{N}{3}$. Since $N = (\frac{\sigma}{\xi})^{\frac{1}{\alpha}}$, this theorem follows.
\end{proof}

From Lemma \ref{lemma_fair_cost_sharing_is_polynomial_bounded}, Lemma
\ref{lemma_shapley_cost_sharing_is_polynomial_bounded}, and Theorem
\ref{theorem_robust_poa_of_polynomial_bounded_cost_sharing_mechanism}, it
follows that for both the proportional fair CSM and the Shapley CSM, the
induced GND games have a PoA of $O\Big( \max_{e}\min_{j}(\sigma_{e}/\xi_{e,\,
j})^{1/\alpha_{j}} \Big)$.
Then from Theorem
\ref{theorem_lower_bound_on_poa_of_budget_balanced_cost_sharing_mechanism}, we
know that these two natural CSMs are asymptotically optimal in the class of
budget balanced CSMs, since they trivially follow
Eq.~(\ref{formula_budget_balance}).
\LongVersionEnd 

\LongVersion 
\section{Alternative Approaches}
\label{sec:altern-appr}

\subsection{Learning Based Algorithm for the GND problem with Routing Requests}
\label{subsection_learning_based_algorithm_for_EER}
Up to now, a set of game theoretic results, such as the smoothness parameters, have been established to investigate the performance of \texttt{Alg-ABRD}. Nevertheless, \texttt{Alg-ABRD} is not the only framework that can utilize the smoothness to generate outputs with desired approximation ratio. In this part, we start to introduce a learning-based technique \cite{Roughgarden2015IRP}, which can also guarantee a good approximation for the GND problem with routing requests when the optimal cost $C^{*}$ of the input instance has a constant lower bound.

\begin{definition*}[GND problem with Routing Requests]
In the GND problem with routing requests, the resources are represented by the set $E$ of edges in a given graph $G = (V, E)$, where $V$ is the set of nodes. The feasible reply collection $P_{i}$ of each requirement $i$ is composed of the paths which connect the associated source-target node pair, and contain no repeating edges.
\end{definition*} 

\begin{definition*}[Problem of Online Decision\cite{Kalai2005EAO}]
Consider an online problem where the input consists of a network $G = (V, E)$ and a sequence of $T'$ cost vectors $\Big\{ \tau^{t} = \{ \tau^{t}(e) \}_{e \in E} \Big\}_{t \in [T']}$, where $\tau^{t}(e) \in [0, 1]$. For each $t \in [T']$, this online problem requires a path $r^{t}$ between a given source-target node pair without any knowledge of the cost vectors $\{ \tau^{t}, \tau^{t + 1}, \cdots, \tau^{T'} \}$. The objective is to minimize the $\mathtt{REGRET}$, which is defined as follows: 
\begin{equation*}
\mathtt{REGRET} \; = \;  \sum_{t = 1}^{T'}\sum_{e \in r^{t}} \tau^{t}(e) - \min_{r': r' \text{ connects } \{s, t\}} \sum_{t = 1}^{T'}\sum_{e \in p'} \tau^{t}(e) 
\end{equation*}
\end{definition*}

\begin{lemma}[Follow the Perturbed Leader (FPL) \cite{Kalai2005EAO}]\label{lemma_follow_the_perturbed_leader}
For the problem of online decision, there exists a randomized online learning
algorithm called FPL \cite{Kalai2005EAO} that can compute every $r^{t}$ in
$O(|E| + |V|\log|V|)$-time such that the expected value of \texttt{REGRET} is
no larger than $2|V|\sqrt{|E|T'}$.
\end{lemma}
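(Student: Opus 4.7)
The plan is to invoke the standard \emph{Follow the Perturbed Leader} construction of Kalai and Vempala. At each round $t$, sample a perturbation vector $\pi^{t} \in \mathbb{R}_{\geq 0}^{E}$ whose components are drawn i.i.d.\ from an exponential distribution with a rate parameter $\eta > 0$ (to be chosen later), and output the path
\[
r^{t} \; = \; \arg\min_{r \in P} \sum_{e \in r} \Big( \textstyle\sum_{s < t} \tau^{s}(e) + \pi^{t}(e) \Big) \, ,
\]
where $P$ denotes the collection of $(s,t)$-paths in $G$. Since the argument of the minimization is a non-negative edge weighting of $G$, this path can be produced by a single invocation of Dijkstra's shortest-path algorithm, which, using Fibonacci heaps, runs in $O(|E| + |V| \log |V|)$ time per round, matching the claimed per-round cost.

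To bound the regret, I would proceed via the textbook two-step decomposition. First, consider the hypothetical "be-the-perturbed-leader" rule that at round $t$ replaces $\sum_{s < t} \tau^{s}$ by $\sum_{s \leq t} \tau^{s}$ (using one common perturbation across rounds). A telescoping argument shows that this clairvoyant rule beats the best fixed path by at most the total magnitude of the perturbation, which in expectation is $O(|V| / \eta)$, since every simple path uses at most $|V| - 1$ edges and each exponential coordinate has expectation $1/\eta$. Second, bound the per-round gap between FPL and this clairvoyant rule: using the memoryless property of the exponential distribution, one can couple the two distributions of minimizers so that they agree except on an event whose probability is at most $\eta \sum_{e} \tau^{t}(e) \leq \eta |E|$, and on the bad event the per-round loss is at most $|V|$ (since edge costs lie in $[0,1]$ and paths are simple). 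Summing yields a second error term of order $\eta |V| |E| T'$.

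Balancing the two contributions by setting $\eta = 1/\sqrt{|E| T'}$ produces two matching terms, each of size $|V| \sqrt{|E| T'}$, whose sum is the claimed $2 |V| \sqrt{|E| T'}$ regret bound. The main obstacle I anticipate is the stability step in the second part: one has to argue carefully that a single round's cost vector perturbs the arg-min only with small probability, which is exactly where the memoryless property of the exponential distribution is essential (other perturbations, such as uniform, yield worse constants and require a more delicate coupling). Everything else (the telescoping "be-the-leader" inequality and the expectation computation for the perturbation magnitude) is routine, and the shortest-path oracle fits directly into the model because the reply collection for a routing request is precisely the set of $(s,t)$-paths.
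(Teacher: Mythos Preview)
The paper does not prove this lemma at all; it is simply a citation of the result of Kalai and Vempala, stated without proof. Your sketch is essentially the standard FPL argument from that reference (Dijkstra for the per-round oracle, the ``be-the-leader'' telescoping step, the stability step, and balancing the perturbation scale), so there is no in-paper proof to compare against. One small remark: the original Kalai--Vempala construction uses uniform perturbations over a hypercube rather than exponential noise; the exponential-memoryless variant you describe is a known alternative that yields the same bound, and with the parameters $D = R \leq |V|$ and $A \leq |E|$ for simple $(s,t)$-paths the balancing indeed gives $2|V|\sqrt{|E|T'}$ as claimed.
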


Using FPL as a subroutine, a learning based algorithm, referred to as
\texttt{L-APX}, is constructed as follows for the GND problem with routing
requests.
The first step is to transform the given problem instance $\mathcal{I}$ to an
GND game by employing the proportional fair CSM, and divide every $\sigma_{e}$ and every $\xi_{e, j}$ by a large enough number such that the cost share of any player on any edge is in the interval $[0, 1]$.
Obviously, such a linear scaling on the cost functions $\lbrace F_{e} \rbrace$
does not influence the approximation ratio.
Then, generate $T' = 4N^{2}|V|^{2}|E|$ strategy profiles $\lbrace \bar{p}^{t}
\rbrace_{t \in [T']}$.
For every $t$ and every player $i$, the path $\bar{p}_{i}^{t}$ is obtained by
running FPL with $\tau_{i}^{t'} = \Big\lbrace \lbrace \tau_{i}^{t'}(e) = f_{i, e}(S_{e}^{t'} \cup \lbrace i \rbrace) \rbrace_{e \in E} \Big\rbrace_{t' \in [t - 1]}$ as the input.
Note that with the proportional fair CSM, the \emph{exact}
cost share of each player on each edge can be obtained in constant time.
Finally, choose one strategy profile $t^{*}$ from $[T']$ randomly and uniformly, and output $\bar{p}^{t^{*}}$.

\begin{lemma}\label{lemma_time_complexity_of_online_learning_based_algorithm}
The algorithm \texttt{L-APX} has a time complexity of $O(N^{3}|V|^{2}|E|^{2}\log|V|)$.
\end{lemma}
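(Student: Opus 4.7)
The plan is to analyze the total work of \texttt{L-APX} by splitting it into three phases: the initial linear scaling of the cost parameters $\sigma_e, \xi_{e,j}$, the main loop that generates the $T' = 4N^{2}|V|^{2}|E|$ strategy profiles via FPL, and the final uniform draw of $t^{*}$. The scaling phase trivially runs in $O(|E|)$ time and the final draw is $O(1)$, so the dominant contribution is the main loop, and all of my effort will be devoted to that.

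Inside the loop, step $t$ produces $\bar{p}^{t}$ by running FPL separately for each of the $N$ players, where the input to player $i$'s FPL invocation is the history $\tau_i^{1},\dots,\tau_i^{t-1}$. By the cited FPL lemma, one invocation returns the next path in $O(|E| + |V|\log|V|)$ time, assuming the cumulative cost vector $\sum_{t' < t}\tau_i^{t'}$ has already been formed. I would therefore maintain this cumulative vector incrementally: at the end of step $t-1$, the toll vector $\tau_i^{t-1}$ can be read off directly from the proportional fair CSM via the closed-form expression $f_{i,e}(S_e^{t-1}\cup\{i\}) = \frac{w_i(e)}{l_e^{t-1} + w_i(e)}\cdot F_e(l_e^{t-1} + w_i(e))$, which is $O(1)$ per edge once the current loads $l_e^{t-1}$ are known. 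The loads themselves can be updated in $O(|E|)$ time per (player, step) by subtracting the weights contributed by $\bar{p}_i^{t-1}$ and adding those contributed by $\bar{p}_i^{t}$. Hence the per-player, per-step bookkeeping is $O(|E|)$ and the FPL call itself is $O(|E| + |V|\log|V|)$.

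Summing, one step costs $O\bigl(N(|E| + |V|\log|V|)\bigr)$, and so the entire loop costs
\begin{equation*}
O\bigl(T' \cdot N \cdot (|E| + |V|\log|V|)\bigr) \; = \; O\bigl(N^{3}|V|^{2}|E| \cdot (|E| + |V|\log|V|)\bigr).
\end{equation*}
Since any non-trivial routing instance has at least one edge per terminal pair and a connected underlying subgraph, we may assume $|E| = \Omega(|V|)$, so that $|E| + |V|\log|V| = O(|E|\log|V|)$, which collapses the bound to $O(N^{3}|V|^{2}|E|^{2}\log|V|)$ as claimed.

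The only delicate point I anticipate is verifying that FPL's $O(|E| + |V|\log|V|)$ guarantee is per-call and applies to the cumulative-cost formulation, together with confirming that the incremental load maintenance really allows all $N$ cost-share vectors of a single step to be assembled in $O(N|E|)$ aggregate time rather than, say, $O(N^{2}|E|)$. Both of these are routine sanity checks against the statement of the FPL lemma and the explicit formula for the proportional fair CSM, and neither is expected to change the arithmetic above.
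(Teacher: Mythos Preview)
The paper states this lemma without proof, leaving the time-complexity calculation implicit. Your argument supplies exactly the straightforward accounting the paper omits: $T' = 4N^{2}|V|^{2}|E|$ iterations, each invoking FPL once per player at cost $O(|E|+|V|\log|V|)$ (Lemma~\ref{lemma_follow_the_perturbed_leader}), with the proportional-fair cost shares computed in constant time per edge and the cumulative toll vectors maintained incrementally. The final simplification $|E|+|V|\log|V| = O(|E|\log|V|)$ under the standing assumption $|E|=\Omega(|V|)$ is legitimate. Your proof is correct and matches the evident intent behind the unproved lemma.
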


\begin{theorem}\label{theorem_approximation_ratio_of_online_learning_based_algorithm}
Let $C'^{*}$ be the optimal result with respect to the linearly scaled cost
functions.
If $C'^{*}$ is known to have a constant lower bound $\mathtt{LB}$, then algorithm \texttt{L-APX} guarantees an approximation ratio of $O\bigg( \max_{e \in E}\min_{j \in [q]}\Big( \frac{\sigma_{e}}{\xi_{e, j}} \Big)^{\frac{1}{\alpha_{j}}} \bigg)$ for the GND problem with routing requests.
\end{theorem}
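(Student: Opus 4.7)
The plan is to combine Theorem \ref{theorem_smoothness_of_polynomial_bounded_cost_sharing_mechanisms} with Roughgarden's standard no-regret-to-PoA translation for smooth games. The first step is to observe that, because routing requests admit an exact reply oracle and the proportional fair CSM is REP-expanded by Lemma \ref{lemma_fair_cost_sharing_is_polynomial_bounded}, Theorem \ref{theorem_smoothness_of_polynomial_bounded_cost_sharing_mechanisms} instantiated with $\varrho = 1$ implies that the GND game implicit in \texttt{L-APX} is $(\lambda, \mu)$-smooth with $\lambda = O(1 + \gamma_\alpha)$ and $\mu = 1/2$, where $\gamma_\alpha = \max_{e} \min_{j} (\sigma_e / \xi_{e,j})^{1 / \alpha_j}$ up to multiplicative constants depending only on the $\alpha_j$'s. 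The linear rescaling performed at the outset of \texttt{L-APX} preserves both the smoothness parameters and the ratio $C / C'^{*}$, while also ensuring that every per-round cost vector $\tau_i^{t}$ fed to FPL lies in $[0,1]^E$, so Lemma \ref{lemma_follow_the_perturbed_leader} bounds each player's expected external regret over $T'$ rounds by $2 |V| \sqrt{|E| T'}$.

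The heart of the argument is the smoothness-plus-regret chaining. Let $p^{*}$ denote an optimal profile of the rescaled instance. For each player $i$, FPL's no-regret guarantee applied against the fixed candidate strategy $p_i^{*}$ yields $\mathbb{E}[\sum_{t} C_i(\bar{p}^{t})] \leq \sum_{t} C_i(p_i^{*}, \bar{p}_{-i}^{t}) + 2 |V| \sqrt{|E| T'}$. Summing over $i \in [N]$ and invoking $(\lambda, \mu)$-smoothness round-by-round, so that $\sum_{i} C_i(p_i^{*}, \bar{p}_{-i}^{t}) \leq \lambda C(p^{*}) + \mu C(\bar{p}^{t})$, then moving the $\mu C(\bar{p}^{t})$ terms to the left-hand side and dividing through by $T' (1 - \mu)$, I reach
\begin{equation*}
\frac{1}{T'} \mathbb{E}\Big[ \sum_{t = 1}^{T'} C(\bar{p}^{t}) \Big]
\, \leq \,
\frac{\lambda \cdot C'^{*} + 2 N |V| \sqrt{|E| / T'}}{1 - \mu} \, .
\end{equation*}
The choice $T' = 4 N^2 |V|^2 |E|$ reduces the second numerator term to exactly $1$, and since $\bar{p}^{t^{*}}$ is drawn uniformly at random from $\{ \bar{p}^{t} \}_{t \in [T']}$, the left-hand side coincides with $\mathbb{E}[C(\bar{p}^{t^{*}})]$. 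Therefore $\mathbb{E}[C(\bar{p}^{t^{*}})] \leq 2 \lambda \cdot C'^{*} + 2$.

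The final step invokes the hypothesis $C'^{*} \geq \mathtt{LB} = \Omega(1)$ to convert the additive slack into a multiplicative one: $2 \leq (2 / \mathtt{LB}) \cdot C'^{*} = O(1) \cdot C'^{*}$, so $\mathbb{E}[C(\bar{p}^{t^{*}})] = O(\lambda) \cdot C'^{*} = O(\gamma_\alpha) \cdot C'^{*}$, yielding the claimed approximation ratio. The main obstacle is precisely this additive-versus-multiplicative mismatch: Roughgarden's no-regret argument inevitably produces an $O(R / T')$ additive residue, and without a constant lower bound on $C'^{*}$ one would need $T'$ to grow faster than any polynomial in the problem parameters to push this residue below $C'^{*}$ --- which is exactly why the more elaborate \texttt{Alg-ABRD} is required in the general case. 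A secondary technical point I would verify is that FPL's regret guarantee continues to apply when player $i$'s toll sequence is itself generated by the other players' (possibly correlated) FPL runs; this is standard, since FPL achieves no external regret against arbitrary adaptive cost sequences.
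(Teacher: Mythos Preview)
Your proposal is correct and follows essentially the same route as the paper: invoke the REP-expanded smoothness of the proportional fair CSM (Theorem~\ref{theorem_smoothness_of_polynomial_bounded_cost_sharing_mechanisms} with $\varrho = 1$, giving $\mu = 1/2$), combine it with the FPL regret bound of Lemma~\ref{lemma_follow_the_perturbed_leader} via Roughgarden's no-regret-to-PoA translation, and finally absorb the additive $O(1)$ residue using the hypothesis $C'^{*} \geq \mathtt{LB}$. The only cosmetic difference is that the paper cites \cite{Roughgarden2013NRD} for the averaged-regret inequality directly, whereas you spell out the summation-and-rearrangement derivation; your remark about FPL's validity against adaptive cost sequences is the right justification for why the per-player regret bound still holds when the tolls are generated by the other players' random runs.
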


\begin{proof}
According to \cite{Roughgarden2013NRD}, for a $(\lambda, \mu)$-smooth game, by generating the strategy profiles $\lbrace \bar{p}^{t} \rbrace_{t \in [T']}$ through a randomized algorithm for the problem of online decision, the chosen strategy profile $\bar{p}^{t^{*}}$ guarantees that:
\begin{equation*}
\mathbb{E}_{\bar{p}^{t^{*}}}[C'(\bar{p}^{t^{*}})] \; \leq \; \frac{\lambda}{1 - \mu} C'^{*} + \frac{1}{1 - \mu}\frac{\sum_{i = 1}^{N}\mathbb{E}_{\lbrace \bar{p}_{i}^{t} \rbrace_{t \in [T']}}[\mathtt{REGRET}]}{T'}
\end{equation*}
where $C'(\bar{p}^{t^{*}})$ represents the total cost with respect to the
scaled cost functions.
By Theorem
\ref{theorem_smoothness_of_polynomial_bounded_cost_sharing_mechanisms}, the
values of $\gamma_{\alpha}$ and $\lambda_{\alpha}$ are not influenced by the
linear scaling on $c_{e}$ and $\xi_{e, j}$.
Thus,
\begin{align*}
\mathbb{E}_{\bar{p}^{t^{*}}} [C'(\bar{p}^{t^{*}})]
\, \leq & \,
2 (\gamma_{\alpha} + \lambda_{\alpha}) C'^{*} +
\frac{2 \cdot \sum_{i = 1}^{N}\mathbb{E}_{\lbrace
\bar{p}_{i}^{t} \rbrace_{t \in [T']}}[\mathtt{REGRET}]}{T'} \\
\leq & \,
2 (\gamma_{\alpha} + \lambda_{\alpha}) C'^{*} +
\frac{N \cdot 4 |V| \sqrt{|E| T'}}{T'} \\
\leq & \,
2 \left( \gamma_{\alpha} + \lambda_{\alpha} + \frac{1}{\mathtt{LB}} \right) C'^{*}
\end{align*}
The second line above follows from Lemma \ref{lemma_follow_the_perturbed_leader}. The third line holds because it is assumed that $\mathtt{OPT'} \geq \mathtt{LB}$. Since $\mathtt{LB}$ is a constant, this theorem holds.
\end{proof}

Lemma \ref{lemma_time_complexity_of_online_learning_based_algorithm} and Theorem \ref{theorem_approximation_ratio_of_online_learning_based_algorithm} indicate that, \texttt{L-APX} promises the same upper bound on the approximation ratio as \texttt{Alg-ABRD} for the special input instances with $C^{'*} \geq \mathtt{LB}$, and when the given graph has a small size while the number of requirements is large, \texttt{L-APX} has a better time complexity than \texttt{Alg-ABRD}. However, it remains unknown for us that how to generalize Theorem \ref{theorem_approximation_ratio_of_online_learning_based_algorithm} to the general case where there is no guarantee for the lower bound of the optimal solution. The critical issue here is that even before the linear scaling, the optimal result $C^{*}$ can be arbitrarily small. This problem is left for future research.

\subsection{Convex Programming and Rounding for the GND Problem with Routing Requests}
\label{subsection_convex_programming_and_rounding}
The approach presented in this subsection was suggested to us by an anonymous reviewer for a special case of the GND problem.
Specifically, like the approach presented in
\Sect{}~\ref{subsection_learning_based_algorithm_for_EER}, this approach also
addresses the GND problem with routing requests, but restricts the attention
to the more specific case where the given graph is undirected and the weights
of the requirements are related.
(Recall that related weights means that the weight of every requirement $i$
satisfies $w_{i}(e) = w_{i}$ for every $e \in E$.)
Furthermore, in this part, the cost function $F_{e}$ of each edge $e$ is
assumed to be an energy consumption cost function
(\ref{equation:energy-consumption-cost-function}), a specific (simpler)
form of the REP cost functions (\ref{equation:REP-cost-function}) used in the
other parts of this paper.
We shall refer to this specific GND restriction as \emph{energy efficient
routing (EER)}.

Since this approach is based on convex programming and rounding, we shall
refer to the resulting approximation algorithm as \texttt{CPR}.
In this algorithm, every requirement
$i \in [N]$
is first partitioned into a set
$\mathcal{R}_{i}'$ of $w_{i}$ \emph{sub-requests}, where every sub-request $i_{j}$ is associated with the same source-target node pair as the original requirement $i$, and has the same weight $w_{i_{j}} = 1$.
Such a partition is feasible since the weights are assumed to be related.
Let $\mathcal{R}' = \bigcup_{i}\mathcal{R}_{i}'$ be the set of all the sub-requests, $\mathcal{I}'$ be the instance obtained by replacing the set of requirements in the given EER instance $\mathcal{I}$ with $\mathcal{R}'$, and $\widehat{\mathcal{I}}'$ be a variant instance which replaces the energy consumption cost function $F_{e}$ in $\mathcal{I}'$ with the following variant cost function:
\begin{equation}
\widehat{F}_{e}(p') \; = \;
\begin{cases}
0 &  l_{e}^{p'} = 0 \\
\sigma_{e} + \xi_{e} \cdot \left( (l_{e}^{p'})^{\alpha} + \sum_{i}(w_{i})^{\alpha-1} \cdot l_{e}^{p'}(i) \right)  & l_{e}^{p'} > 0
\end{cases} \, , \label{formula_revised_cost_function}
\end{equation}
where $p'$ is an arbitrary feasible path profile for $\mathcal{R}'$, $l_{e}^{p'}(i)$  is the load incurred by routing sub-requests in $\mathcal{R}_{i}'$ along $e$. 
Let $\widehat{p}^{*}$ be the \emph{fractional} optimal solution of $\widehat{\mathcal{I}}'$. Similar with \cite{AzarE2005}, it can be proved that:

\begin{lemma}\label{lemma_gap_between_optimal_solutions_of_original_instance_and_revised_one}
$\widehat{C}(\widehat{p}^{*}) < 2 \cdot C^{*}$, where $\widehat{C}(\cdot)$ denotes the total cost with respect to Eq.~\eqref{formula_revised_cost_function}.
\end{lemma}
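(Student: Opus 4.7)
The plan is to exhibit an explicit integral feasible solution $p'$ for the sub-request instance $\widehat{\mathcal{I}}'$ whose $\widehat{C}$-cost is at most $2 \cdot C^{*}$; since $\widehat{p}^{*}$ is a fractional optimum, this immediately gives $\widehat{C}(\widehat{p}^{*}) \leq \widehat{C}(p') \leq 2 C^{*}$, with the strict inequality then recovered from slack described below.

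The construction is the natural one: start from the optimal solution $p^{*}$ of $\mathcal{I}$ and route every sub-request $i_{j} \in \mathcal{R}_{i}'$ along the path $p_{i}^{*}$. Under this routing, the total load on each edge $e$ is preserved, i.e.\ $l_{e}^{p'} = l_{e}^{p^{*}}$, and the contribution from requirement $i$'s sub-requests to edge $e$ is $l_{e}^{p'}(i) = w_{i}$ when $e \in p_{i}^{*}$ and $0$ otherwise. Plugging into (\ref{formula_revised_cost_function}), one gets
\[
\widehat{F}_{e}(p') \; = \; \sigma_{e} + \xi_{e} \cdot \Left( (l_{e}^{p^{*}})^{\alpha} + \sum_{i : e \in p_{i}^{*}} w_{i}^{\alpha} \Right) .
\]

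The key step is the superadditivity inequality $\sum_{i : e \in p_{i}^{*}} w_{i}^{\alpha} \leq \Left( \sum_{i : e \in p_{i}^{*}} w_{i} \Right)^{\alpha} = (l_{e}^{p^{*}})^{\alpha}$, which holds for $\alpha > 1$. This gives $\widehat{F}_{e}(p') \leq \sigma_{e} + 2 \xi_{e} (l_{e}^{p^{*}})^{\alpha} \leq 2 F_{e}(l_{e}^{p^{*}})$ for every edge $e$ with $l_{e}^{p^{*}} > 0$, while edges unused by $p^{*}$ contribute zero to both sides. Summing yields $\widehat{C}(p') \leq 2 C^{*}$, and a strict gap of at least $\sum_{e \in p^{*}} \sigma_{e}$ is retained since $2 F_{e}(l_{e}^{p^{*}}) - \widehat{F}_{e}(p') \geq \sigma_{e}$; alternatively, when all startup costs vanish one may use that the fractional optimum $\widehat{p}^{*}$ can split loads strictly better than the integral $p'$. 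I do not anticipate a serious obstacle: the only substantive calculation is the superadditivity inequality, and the linear ``cross term'' $\xi_{e} \cdot \sum_{i} w_{i}^{\alpha - 1} l_{e}^{p'}(i)$ in (\ref{formula_revised_cost_function}) was designed exactly so that, on the natural integral lift of $p^{*}$, it collapses to $\xi_{e} \sum_{i : e \in p_{i}^{*}} w_{i}^{\alpha}$ and is therefore absorbed by $\xi_{e} (l_{e}^{p^{*}})^{\alpha}$ up to factor~$2$.
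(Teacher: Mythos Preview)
Your proposal is correct and follows essentially the same approach as the paper: lift $p^{*}$ to the sub-request instance by routing every $i_{j}$ along $p_{i}^{*}$, observe that the cross term collapses to $\sum_{i:e\in p_{i}^{*}} w_{i}^{\alpha}$, bound it by $(l_{e}^{p^{*}})^{\alpha}$ via superadditivity, and absorb the result into $2F_{e}(l_{e}^{p^{*}})$ using the slack in $\sigma_{e}$. Your treatment of the strict inequality is in fact a touch more careful than the paper's, which simply invokes $\sigma_{e}>0$.
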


\begin{proof}
The path profile $p^{*}$ also induces a feasible solution for $\widehat{\mathcal{I}}'$, which routes $i_{j}$ along the path $p_{i}^{*}$ for every $i \in [N]$. Let the total cost incurred by this feasible solution be $\widehat{C}(p^{*})$, then:
\begin{align}
\widehat{C}(\widehat{p}^{*}) \; \leq \; \widehat{C}(p^{*}) \; =& \; \sum_{e \in p^{*}}\bigg[\sigma_{e} + \xi_{e} \cdot \Big( (l_{e}^{p^{*}})^{\alpha} + \sum_{i}(w_{i})^{\alpha-1} \cdot l_{e}^{p_{i}^{*}} \Big) \bigg] \nonumber\\
=& \; \sum_{e \in p^{*}}\bigg[\sigma_{e} + \xi_{e} \cdot \Big( (l_{e}^{p^{*}})^{\alpha} + \sum_{i \in p^{*}}(w_{i})^{\alpha} \Big) \bigg] \nonumber \\
\leq& \; \sum_{e \in p^{*}}\left[\sigma_{e} + \xi_{e} \cdot \Big( (l_{e}^{p^{*}})^{\alpha} +  (l_{e}^{p^{*}})^{\alpha} \Big) \right] \nonumber \\
<& \; 2 \cdot  \sum_{e \in p^{*}}\Big( \sigma_{e} + \xi_{e} \cdot (l_{e}^{p^{*}})^{\alpha} \Big) \nonumber\\
=& \; 2 \cdot C^{*} \nonumber
\end{align}
The second line holds since $p^{*}$ is an integral solution of $\mathcal{I}$ \cite{AzarE2005}. The third line following the fact $\sum_{i \in p^{*}}w_{i} = l_{e}^{p^{*}}$ and the superadditivity of the power function. The fourth line holds since $\sigma_{e} > 0$.
\end{proof}

The next step of \texttt{CPR} is to utilize the convex programming based technique proposed in \cite{Andrews2012RPM} to generate a solution for the instance $\widehat{\mathcal{I}}'$. In particular, it converts $\widehat{F}_{e}(p')$ to a convex cost function $\bar{F}_{e}(p') = W_{e}(p') + \xi_{e}\sum_{i}(w_{i})^{\alpha-1} \cdot l_{e}^{p'}(i)$, where:
\begin{equation*}
W_{e}(p') \; = \;
\begin{cases}
\zeta_{e} \cdot (l_{e}^{p'}) &  l_{e}^{p'} \in \left[0, \max\Big\lbrace 1, \Big( \frac{\sigma_{e}}{(\alpha - 1)\mu_{e}} \Big)^{\frac{1}{\alpha}} \Big\rbrace\right] \\
\sigma_{e} + \xi_{e} \cdot (l_{e}^{p'})^{\alpha}  & l_{e}^{p'} > \max\Big\lbrace 1, \Big( \frac{\sigma_{e}}{(\alpha - 1)\mu_{e}} \Big)^{\frac{1}{\alpha}} \Big\rbrace
\end{cases}
\, , 
\end{equation*}
and
\begin{equation*}
\zeta_{e} \; = \;
\begin{cases}
\sigma_{e} + \mu_{e} & \Big( \frac{\sigma_{e}}{(\alpha - 1)\mu_{e}} \Big)^{\frac{1}{\alpha}} < 1 \\
\alpha\xi_{e}\Big(\frac{\sigma_{e}}{(\alpha - 1)\xi_{e}}\Big)^{1 - \frac{1}{\alpha}} & \Big( \frac{\sigma_{e}}{(\alpha - 1)\mu_{e}} \Big)^{\frac{1}{\alpha}} < 1 \geq 1
\end{cases}
\, .
\end{equation*}
Following \cite{Andrews2012RPM}, \texttt{CPR} solves the convex programming problem with respect to the convex cost function $\bar{F}_{e}(p')$ to obtain a fractional solution $\widehat{p}^{\circ}$, then rounds it to an integral solution $\widehat{p}^{\sharp}$ in a randomized manner. Using the linearity of expectation, for every edge $e \in E$:
\begin{equation}
\mathbb{E}\left[ \xi_{e}\sum_{i}(w_{i})^{\alpha-1} \cdot l_{e}^{\widehat{p}^{\sharp}}(i) \right] \; = \; \xi_{e}\sum_{i}(w_{i})^{\alpha-1} \cdot \mathbb{E}\left[ l_{e}^{\widehat{p}^{\sharp}}(i) \right] \; = \; \xi_{e}\sum_{i}(w_{i})^{\alpha-1} \cdot l_{e}^{\widehat{p}^{\circ}}(i) \label{formula_linear_part_after_rounding_the_fractional_solution}
\end{equation}
Furthermore, it is proved in \cite{Andrews2012RPM} that the random rounding technique ensures that:
\begin{equation}
\mathbb{E}\left[ W_{e}(\widehat{p}^{\sharp}) \right] \; \leq \; O\left( \Big( \frac{\sigma_{e}}{\xi_{e}} \Big)^{\frac{1}{\alpha}} \right) \cdot W_{e}(\widehat{p}^{\circ}) \label{formula_convex_part_after_rounding_the_fractional_solution}
\end{equation}

Combining Eq.~\eqref{formula_linear_part_after_rounding_the_fractional_solution} with Eq.~\eqref{formula_convex_part_after_rounding_the_fractional_solution}, we have
\begin{equation}
\sum_{e} \mathbb{E}\left[ \bar{F}_{e}(\widehat{p}^{\sharp}) \right] \; \leq \; O\left( \Big( \max_{e} \frac{\sigma_{e}}{\xi_{e}} \Big)^{\frac{1}{\alpha}} \right) \cdot \sum_{e} \bar{F}_{e}(\widehat{p}^{\circ}) \label{formula_gap_between_the_first_rounding_solution_and_the_fractional_optimal_given_by_convex_programming}
\end{equation}

Since $\widehat{p}^{\circ}$ is the optimal fractional solution obtained by the convex programming with respect to the cost function $\bar{F}_{e}(p')$, and $\bar{F}_{e}(p') \leq \widehat{F}_{e}(p')$ for any edge $e$ and any profile $p'$ that is feasible for $\mathcal{R}'$ \cite{Andrews2012RPM}:
\begin{equation}
\sum_{e} \bar{F}_{e}(\widehat{p}^{\circ}) \; \leq \; \sum_{e} \bar{F}_{e}(\widehat{p}^{*}) \; \leq \; \sum_{e} \widehat{F}_{e}(\widehat{p}^{*}) \; = \; \widehat{C}(\widehat{p}^{*}) \label{formula_relationship_between_two_different_optimal_fractional_solution}
\end{equation}

Recall that $\widehat{C}(\widehat{p}^{*})$ is the optimal fractional solution of $\widehat{\mathcal{I}}'$, Eq.~\eqref{formula_gap_between_the_first_rounding_solution_and_the_fractional_optimal_given_by_convex_programming} and Eq.~\eqref{formula_relationship_between_two_different_optimal_fractional_solution} imply that:

\begin{lemma}\label{lemma_approximation_for_the_revised_instance_with_sub_requests}
The solution $\widehat{p}^{\sharp}$ is an $O\left( \Big( \max_{e} \frac{\sigma_{e}}{\xi_{e}} \Big)^{\frac{1}{\alpha}} \right)$-approximation solution of the instance $\widehat{\mathcal{I}}$.
\end{lemma}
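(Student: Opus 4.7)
The plan is to directly chain the two expectation bounds already displayed just above the lemma and then bridge from the convexified cost $\bar{F}_{e}$ back to the target cost $\widehat{F}_{e}$. Chaining gives
\[
\sum_{e \in E} \mathbb{E}\Left[ \bar{F}_{e}(\widehat{p}^{\sharp}) \Right]
\, \leq \,
O\LLeft( \Left(\max_{e} \frac{\sigma_{e}}{\xi_{e}} \Right)^{\frac{1}{\alpha}} \RRight)
\cdot
\widehat{C}(\widehat{p}^{*}) \, ,
\]
by composing Eq.~\eqref{formula_gap_between_the_first_rounding_solution_and_the_fractional_optimal_given_by_convex_programming} with Eq.~\eqref{formula_relationship_between_two_different_optimal_fractional_solution}. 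Since $\widehat{p}^{\sharp}$ is integral, every edge $e$ used by $\widehat{p}^{\sharp}$ has load $l_{e}^{\widehat{p}^{\sharp}} \geq 1$, which is the key property that will let me pass from $\bar{F}_{e}$ to $\widehat{F}_{e}$.

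The second step is to argue that, on integral profiles, $\widehat{F}_{e}(p') \leq O(1) \cdot \bar{F}_{e}(p')$ for every edge $e$. Split into the two pieces of the definition of $W_{e}$: in the regime $l_{e}^{p'} > \max\{1, (\sigma_{e}/((\alpha-1)\mu_{e}))^{1/\alpha}\}$ the two functions coincide exactly, so there is nothing to check. In the remaining regime, which forces $l_{e}^{p'} \in [1, (\sigma_{e}/((\alpha-1)\mu_{e}))^{1/\alpha}]$, the superadditive part $\sigma_{e} + \xi_{e}(l_{e}^{p'})^{\alpha}$ is $O(\sigma_{e})$ by the upper bound on $l_{e}^{p'}$, while $\bar{F}_{e}(p') \geq \zeta_{e} \cdot l_{e}^{p'} = \Omega(\sigma_{e})$ using the value of $\zeta_{e}$ given in the construction (and the fact that $l_{e}^{p'} \geq 1$). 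This is exactly the point of the piecewise replacement carried out in \cite{Andrews2012RPM}, and it is the main obstacle: one has to verify that the linear piece $\zeta_{e} \cdot l_{e}^{p'}$ dominates the startup cost $\sigma_{e}$ for any integral load, so that dropping the superadditive tail on small loads costs only a constant factor.

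Combining the two steps yields
\[
\mathbb{E}\Left[ \widehat{C}(\widehat{p}^{\sharp}) \Right]
\, \leq \,
O(1) \cdot \sum_{e \in E} \mathbb{E}\Left[ \bar{F}_{e}(\widehat{p}^{\sharp}) \Right]
\, \leq \,
O\LLeft( \Left(\max_{e} \frac{\sigma_{e}}{\xi_{e}} \Right)^{\frac{1}{\alpha}} \RRight)
\cdot
\widehat{C}(\widehat{p}^{*}) \, ,
\]
which is the advertised approximation guarantee for the instance $\widehat{\mathcal{I}}'$. If a bound against the original EER instance $\mathcal{I}$ is desired, one additionally invokes Lemma~\ref{lemma_gap_between_optimal_solutions_of_original_instance_and_revised_one} to replace $\widehat{C}(\widehat{p}^{*})$ by $2 \cdot C^{*}$, losing only a factor of $2$ and leaving the asymptotic approximation ratio unchanged. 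I expect the only genuinely delicate step to be the case analysis verifying $\widehat{F}_{e} = O(\bar{F}_{e})$ on integer loads; everything else is a direct composition of inequalities already stated in the excerpt.
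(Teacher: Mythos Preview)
Your overall approach matches the paper's: the paper simply states that the lemma follows by combining Eq.~\eqref{formula_gap_between_the_first_rounding_solution_and_the_fractional_optimal_given_by_convex_programming} with Eq.~\eqref{formula_relationship_between_two_different_optimal_fractional_solution}, and you do exactly this chaining. You also correctly notice what the paper glosses over, namely that the chain only bounds $\sum_{e}\bar{F}_{e}(\widehat{p}^{\sharp})$ whereas the lemma speaks of $\widehat{C}(\widehat{p}^{\sharp})=\sum_{e}\widehat{F}_{e}(\widehat{p}^{\sharp})$.

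The problem is with your bridge. Your key claim in the linear regime is that $\zeta_{e}\cdot l_{e}^{p'}=\Omega(\sigma_{e})$ whenever $l_{e}^{p'}\geq 1$, and hence $\widehat{F}_{e}\leq O(1)\cdot\bar{F}_{e}$ on integral profiles. This is false. In the case $(\sigma_{e}/((\alpha-1)\xi_{e}))^{1/\alpha}\geq 1$, write $L=(\sigma_{e}/((\alpha-1)\xi_{e}))^{1/\alpha}$, so that $\zeta_{e}=\alpha\,\xi_{e}\,L^{\alpha-1}$ and $\sigma_{e}=(\alpha-1)\,\xi_{e}\,L^{\alpha}$. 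For an integral load $l_{e}^{p'}=1$ one gets
\[
\frac{\sigma_{e}}{\zeta_{e}\cdot l_{e}^{p'}}
\;=\;
\frac{(\alpha-1)\,\xi_{e}\,L^{\alpha}}{\alpha\,\xi_{e}\,L^{\alpha-1}}
\;=\;
\frac{\alpha-1}{\alpha}\,L
\;=\;
\Theta\!\left(\Big(\tfrac{\sigma_{e}}{\xi_{e}}\Big)^{1/\alpha}\right),
\]
which is unbounded as $\sigma_{e}/\xi_{e}\to\infty$. So the linear piece $\zeta_{e}\cdot l_{e}^{p'}$ does \emph{not} dominate $\sigma_{e}$ at load $1$, and your $O(1)$ comparison $\widehat{F}_{e}\leq O(1)\cdot\bar{F}_{e}$ fails precisely in the regime the lemma is about.

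The resolution is that the factor $O\big((\sigma_{e}/\xi_{e})^{1/\alpha}\big)$ you are citing from \cite{Andrews2012RPM} is exactly the cost of passing from the convexified surrogate back to the true startup cost on integral loads; it is not an additional rounding loss on top of an $O(1)$ bridge. In other words, the inequality recorded as Eq.~\eqref{formula_convex_part_after_rounding_the_fractional_solution} should be read as bounding the \emph{original} (non-convexified) cost of the rounded integral solution against $W_{e}(\widehat{p}^{\circ})$, so that after adding the linear part and invoking Eq.~\eqref{formula_relationship_between_two_different_optimal_fractional_solution} one obtains $\mathbb{E}[\widehat{C}(\widehat{p}^{\sharp})]\leq O\big((\max_{e}\sigma_{e}/\xi_{e})^{1/\alpha}\big)\cdot\widehat{C}(\widehat{p}^{*})$ directly. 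If instead you insist on bounding $\bar{F}_{e}(\widehat{p}^{\sharp})$ first and bridging afterward, the bridge itself costs $\Theta\big((\sigma_{e}/\xi_{e})^{1/\alpha}\big)$, not $O(1)$; you must then argue that the rounding step is only $O(1)$ in expectation so as not to square the factor. Either way, your claimed constant-factor bridge is the step that fails.
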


The last step of \texttt{CPR} is to convert $\widehat{p}^{\sharp}$ to a integral solution $p^{\sharp}$ that is feasible for the original instance $\mathcal{I}$, still by random rounding. Particularly, to generate $p^{\sharp}$, each traffic request $i$ in instance $\mathcal{I}$ should be routed along the path $\widehat{p}_{i_{j}}^{\sharp} \in \widehat{p}^{\sharp}$ with probability $\displaystyle \frac{1}{w_{i}}$. Then we have
\begin{align}
\mathbb{E}\left[\sum_{e} F_{e}({p}^{\sharp}) \right]  \;\leq& \; \mathbb{E}\left[ \sum_{e \in \widehat{p}^{\sharp}}\left( \sigma_{e} +  \xi_{e}  (l_{e}^{p^{\sharp}})^{\alpha}  \right) \right] \nonumber \\
=& \;  \mathbb{E}\left[ \sum_{e \in \widehat{p}^{\sharp}}\sigma_{e} \right] + \mathbb{E}\left[ \sum_{e \in \widehat{p}^{\sharp}}\xi_{e}  (l_{e}^{p^{\sharp}})^{\alpha}  \right] \nonumber\\
\leq&\; \mathbb{E}\left[ \sum_{e \in \widehat{p}^{\sharp}}\sigma_{e} \right] +
O(\alpha^{\alpha}) \mathbb{E}\left[ \sum_{e \in \widehat{p}^{\sharp}} \xi_{e} (l_{e}^{\widehat{p}^{\sharp}})^{\alpha} + \sum_{i}(w_{i})^{\alpha - 1}l_{e}^{\widehat{p}^{\sharp}}(i) \right]
\nonumber \\
\leq&\; O(\alpha^\alpha)\mathbb{E}\left[\sum_{e \in \widehat{p}^{\sharp}} \widehat{F}_{e}(\widehat{p}^{\sharp}) \right]\label{formula_approximation_factor_after_second_rounding}
\end{align}
The third transition above follows from \cite{Anupam2017OPD}. Combining Lemma \ref{lemma_gap_between_optimal_solutions_of_original_instance_and_revised_one}, Lemma \ref{lemma_approximation_for_the_revised_instance_with_sub_requests}, and Eq.~\eqref{formula_approximation_factor_after_second_rounding} gives the following result:

\begin{theorem}\label{theorem_final_approximation_ratio_of_AREA}
The algorithm \texttt{CPR} has an approximation ratio of $O\left( \Big( \max_{e} \frac{\sigma_{e}}{\xi_{e}} \Big)^{\frac{1}{\alpha}} \right)$.
\end{theorem}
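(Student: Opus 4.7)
The plan is to chain together the three main ingredients that were established in the derivation leading up to the statement. First, by Lemma~\ref{lemma_gap_between_optimal_solutions_of_original_instance_and_revised_one}, the optimal fractional cost $\widehat{C}(\widehat{p}^{*})$ of the auxiliary sub-request instance $\widehat{\mathcal{I}}'$ (whose cost function is $\widehat{F}_e$) is bounded by $2 C^{*}$. Second, by Lemma~\ref{lemma_approximation_for_the_revised_instance_with_sub_requests}, the integral solution $\widehat{p}^{\sharp}$ obtained from solving the convex relaxation with the convexified cost $\bar{F}_e$ and applying the first round of random rounding satisfies
\begin{equation*}
\mathbb{E}\Bigl[\sum_{e} \widehat{F}_e(\widehat{p}^{\sharp})\Bigr]
\; \leq \;
O\Bigl(\bigl(\max_{e} \sigma_e/\xi_e\bigr)^{1/\alpha}\Bigr) \cdot \widehat{C}(\widehat{p}^{*}).
\end{equation*}
Third, the bound established in Eq.~\eqref{formula_approximation_factor_after_second_rounding} says that the second round of random rounding, which turns $\widehat{p}^{\sharp}$ (a profile for the sub-requests) into an integral profile $p^{\sharp}$ feasible for the original instance $\mathcal{I}$, yields $\mathbb{E}[\sum_e F_e(p^{\sharp})] \leq O(\alpha^{\alpha}) \cdot \mathbb{E}[\sum_e \widehat{F}_e(\widehat{p}^{\sharp})]$.

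The proof is then a three-line composition: combining these three inequalities in sequence produces
\begin{equation*}
\mathbb{E}\Bigl[\sum_{e} F_e(p^{\sharp})\Bigr]
\; \leq \;
O(\alpha^{\alpha}) \cdot O\Bigl(\bigl(\max_{e} \sigma_e/\xi_e\bigr)^{1/\alpha}\Bigr) \cdot 2 C^{*}
\; = \;
O\Bigl(\bigl(\max_{e} \sigma_e/\xi_e\bigr)^{1/\alpha}\Bigr) \cdot C^{*},
\end{equation*}
where we used that $\alpha$ is a global constant (as assumed throughout the paper) to absorb $\alpha^{\alpha}$ into the $O(\cdot)$. The approximation ratio holds in expectation; a standard Markov-style argument (or repeating the randomized rounding a polynomial number of times and keeping the best outcome, whose cost is computable deterministically) then yields the claim with high probability.

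The main obstacle I anticipate is not in the chain itself, which is essentially bookkeeping, but rather in verifying that the two random-rounding steps can legitimately be composed in expectation. Concretely, one must check that taking expectation over the second round (conditioned on $\widehat{p}^{\sharp}$) interacts cleanly with the expectation over the first round, so that the outer bound from the second rounding can be applied inside the expectation of the first rounding. Since each rounding step provides a bound that is linear in the random quantity it acts upon (sums of $\widehat{F}_e$ values), the towering of expectations goes through by linearity of expectation and the independence of the two rounding phases, and no additional technical work is required beyond what is already recorded in the preceding lemmas and in Eq.~\eqref{formula_approximation_factor_after_second_rounding}.
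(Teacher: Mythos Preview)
Your proposal is correct and follows exactly the same approach as the paper: the paper's proof is literally the one-line statement ``Combining Lemma~\ref{lemma_gap_between_optimal_solutions_of_original_instance_and_revised_one}, Lemma~\ref{lemma_approximation_for_the_revised_instance_with_sub_requests}, and Eq.~\eqref{formula_approximation_factor_after_second_rounding} gives the following result,'' which is precisely the three-step chain you spelled out. Your additional remarks about towering the two expectations via linearity and absorbing the constant $\alpha^{\alpha}$ are sound elaborations of what the paper leaves implicit.
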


\paragraph{Weight-scaling and Loss in Approximation Ratio}

The algorithm \texttt{CPR} should process every sub-request independently, therefore, its time complexity depends on the numeric value $\sum_{i}w_{i}$, which cannot be bounded by a polynomial of the instance size, $\operatorname{poly}(\mathcal{I})$. For this issue, a native idea is to scale down and round the weights so that $w_{i}$ is bounded by a polynomial of $|\mathcal{I}|$. Although this \emph{weight-scaling} technique works well for some classical optimization problem, such as the Knapsack problem, it is possible to incur a significant loss in the approximation ratio for our problem.

\begin{definition}[Weight-scaling Function]\label{definition_weight_scaling_function}
Given an EER instance $\mathcal{I}$, a weight-scaling function $\mathtt{WSF}$ maps the weight vector $\lbrace w_{i} \rbrace_{i \in [N]}$ of $\mathcal{I}$ to $\lbrace \mathtt{WSF}_{\mathcal{I}}(w_{i}) \rbrace_{i \in [N]} \in (\mathbb{Z}_{> 0})^N$ such that $\max_{i \in [N]} \mathtt{WSF}_{\mathcal{I}}(w_{i})$ is bounded by $\operatorname{poly}(|\mathcal{I}|)$.
\end{definition}

\begin{definition}[Ambiguity]
A weight-scaling function is said to be $\sigma$-ambiguous for a positive number $\sigma > 1$ if there exists an instance $\mathcal{I}$ such that $\mathtt{WSF}_{\mathcal{I}}(w_{i}) = \mathtt{WSF}_{\mathcal{I}}(w_{i'})$ for every $w_{i} = 1$ and $w_{i'} = \sigma^{\frac{1}{\alpha}}$.
\end{definition}

To illustrate the ambiguity, consider the weight scaling function in \cite{Panigrahi2015DAA}, which maps each weight $w_{i}$ to $\displaystyle \left\lceil \frac{w_{i} \cdot N}{\varepsilon \max_{i}w_{i}} \right\rceil \cdot \frac{\varepsilon \max_{i}w_{i}}{N}$ where $\varepsilon$ is a positive constant less than $1$. Such a weight-scaling function is $\sigma$-ambiguous for any $\sigma > 1$ when there exists a request $i_{L}$ with a large enough weight $\displaystyle w_{i_{L}} \geq \frac{N\sigma^{\frac{1}{\alpha}}}{\varepsilon}$. Such kind of high-weight request is ignored in the following part, as it can be placed on an independent subgraph where the edges have sufficiently small parameters $\sigma_{e}$ and $\xi_{e}$ such that the existence of $i_{L}$ does not influence the analysis.

To analyze the loss in approximation ratio caused by a $\sigma$-ambiguous weight-scaling function, consider a graph $G_{\sigma}$ constructed as follows. There are two nodes in $G$, $s$ and $t$, which are connected by more than $\sigma^{\frac{1}{2\alpha - 1}}$ edges. There exists a special edge $e^{*}$ with cost parameters $\sigma_{e^{*}} = 1$ and $\xi_{e^{*}} = 1$, while for any edge $e \in E - \lbrace e^{*} \rbrace$, $\sigma_{e} = \sigma$ and $\xi_{e} = 1$. Following theorem shows that on this graph, a $\sigma$-ambiguous weight-scaling function leads to an approximation ratio that is definitely larger than the ratio $O\Big(\max_{e}\Big(\frac{\sigma_{e}}{\xi_{e}}\Big)^{\frac{1}{\alpha}}\Big) = O\Big( \sigma^{\frac{1}{\alpha}}\Big)$ when $\alpha > \frac{3 + \sqrt{5}}{2}$.

\begin{theorem}\label{theorem_loss_in_approximation_ratio_caused_by_ambiguous_weight_scaling}
After employing a $\sigma$-ambiguous weight-scaling function, the approximation ratio of any deterministic algorithm for the EER instances on the graph $G_{\sigma}$ is $\Omega\left( \sigma^{\frac{\alpha - 1}{2\alpha - 1}} \right)$.
\end{theorem}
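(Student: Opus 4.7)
The plan is to exhibit two EER instances on $G_{\sigma}$ that $\mathtt{WSF}$ cannot distinguish, and to argue that no single deterministic routing can be near-optimal for both. Let $k = \lceil \sigma^{1/(2\alpha-1)} \rceil$, which by the assumption on $G_{\sigma}$ does not exceed the number of parallel $s$-$t$ edges. Following the remark just preceding the theorem, I first attach an auxiliary request $i_{L}$ of very large weight $W$ on an independent subgraph whose startup and speed-scaling parameters are tuned so small that $i_{L}$'s cost contribution is asymptotically dominated by the principal costs in both instances; the sole purpose of $i_{L}$ is to force $\mathtt{WSF}$ to identify the scaled images of weight $1$ and weight $\sigma^{1/\alpha}$, which is possible by $\sigma$-ambiguity (the rounding example immediately following the definition of ambiguity illustrates exactly this phenomenon). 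On top of this common scaffolding I form $\mathcal{I}_{1}$ by adding $k$ extra $s$-$t$ requests of weight $1$ and $\mathcal{I}_{2}$ by adding $k$ extra $s$-$t$ requests of weight $\sigma^{1/\alpha}$. Since $\mathtt{WSF}$ now produces identical scaled weight vectors on $\mathcal{I}_{1}$ and $\mathcal{I}_{2}$, any deterministic algorithm must produce the same routing on both, and by symmetry of the principal requests, this common routing is characterized (for the purposes of cost) by the multiset of per-edge loads $\{n_{e}\}_{e}$ with $\sum_{e} n_{e} = k$.

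Next I would bound the two optima. Routing all $k$ principal requests on $e^{*}$ gives $C^{*}_{1} \le 1 + k^{\alpha} = O(\sigma^{\alpha/(2\alpha-1)})$, while routing the $k$ principal requests of $\mathcal{I}_{2}$ one per edge (using $e^{*}$ together with $k - 1$ of the other parallel edges) gives $C^{*}_{2} \le (1 + 1) + (k - 1)(\sigma + \sigma) = O(k \sigma) = O(\sigma^{2\alpha/(2\alpha - 1)})$.

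The core of the proof is then a short two-case analysis on the common load vector $\{n_{e}\}$. Case I: the algorithm places all $k$ principal requests on $e^{*}$. Then its cost on $\mathcal{I}_{2}$ is at least the load term $\xi_{e^{*}} \cdot (k \sigma^{1/\alpha})^{\alpha} = k^{\alpha} \sigma$, so dividing by $C^{*}_{2} = O(k \sigma)$ yields approximation ratio $\Omega(k^{\alpha - 1}) = \Omega(\sigma^{(\alpha - 1)/(2\alpha - 1)})$. Case II: some non-special edge $e \ne e^{*}$ carries load. Its startup cost $\sigma_{e} = \sigma$ alone contributes $\sigma$ to the cost on $\mathcal{I}_{1}$, so dividing by $C^{*}_{1} = O(k^{\alpha})$ again yields $\Omega(\sigma / k^{\alpha}) = \Omega(\sigma^{(\alpha - 1)/(2\alpha - 1)})$. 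The two cases are exhaustive, so the algorithm's approximation ratio on at least one of $\mathcal{I}_{1}, \mathcal{I}_{2}$ is $\Omega(\sigma^{(\alpha - 1)/(2\alpha - 1)})$, as claimed.

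The main obstacle I anticipate is the careful tuning of $i_{L}$ and its isolated subgraph: $W$ must be large enough that $\mathtt{WSF}_{\mathcal{I}}(1) = \mathtt{WSF}_{\mathcal{I}}(\sigma^{1/\alpha})$, while the isolated subgraph's cost parameters must be small enough that $i_{L}$'s contribution to both the optimum and any algorithmic solution is dominated by the principal costs $\Theta(\sigma^{\alpha/(2\alpha-1)})$ and $\Theta(\sigma^{2\alpha/(2\alpha-1)})$ in $\mathcal{I}_{1}$ and $\mathcal{I}_{2}$ respectively, so that the asymptotic analysis above is not disturbed. Once this decoupling is in place, the two-case argument on the $k$-request subproblem is immediate.
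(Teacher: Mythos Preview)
Your proposal is correct and follows essentially the same route as the paper. The two indistinguishable instances with $k=\sigma^{1/(2\alpha-1)}$ unit-weight versus $\sigma^{1/\alpha}$-weight requests, the optima bounds obtained by routing everything through $e^{*}$ versus spreading one request per edge, and the dichotomy ``all principal requests on $e^{*}$'' versus ``some non-special edge is used'' are exactly the paper's construction and its $x=0$ / $x\ge 1$ case split; the auxiliary high-weight request $i_{L}$ on a negligible independent subgraph is precisely the device the paper invokes (and dismisses) in the paragraph immediately preceding the theorem.
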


\begin{proof}
Consider two input instances, $\mathcal{I}_{1}$ and $\mathcal{I}_{2}$, on the graph $G_{\sigma}$. Each of these two instances contains $N = \sigma^{\frac{1}{2\alpha - 1}}$ requests with the source-target pair $\lbrace s, t\rbrace$. For every $i \in [N]$, the weight of player $i$ is respectively set to $1$ and $\sigma^{\frac{1}{\alpha}}$ in $\mathcal{I}_{1}$ and $\mathcal{I}_{2}$. Let $\mathcal{I}_{1}^{\mathtt{MST}}$ and $\mathcal{I}_{2}^{\mathtt{MST}}$ be the corresponding instances processed by a $\sigma$-ambiguous weight-scaling function. By the definition, no algorithm can distinguish $\mathcal{I}_{1}^{\mathtt{MST}}$ from $\mathcal{I}_{2}^{\mathtt{MST}}$. Hence, the following observation trivially holds:

\begin{claim}\label{observation_same_profile_for_ambiguous_instances_by_deterministic_algorithm}
For any deterministic algorithm for the EER instance, the output generated for $\mathcal{I}_{1}^{\mathtt{MST}}$ is same as the one for $\mathcal{I}_{2}^{\mathtt{MST}}$.
\end{claim} 

Let the output generated by a given deterministic algorithm of the EER problem for $\mathcal{I}_{1}^{\mathtt{MST}}$ be $p$. Denote the optimal solution of $\mathcal{I}_{1}$ (resp.~$\mathcal{I}_{2}$) by $C^{*}(\mathcal{I}_{1})$ (resp.~$C^{*}(\mathcal{I}_{2})$), and the total cost incurred by $p$ for $\mathcal{I}_{1}$ (resp.~$\mathcal{I}_{2}$) be $C(\mathcal{I}_{1}, p)$ (resp.~$C(\mathcal{I}_{2}, p)$).

\sloppy
\begin{claim}\label{observation_approximation_caused_by_obliviousness_to_weights}
The approximation ratio of the given deterministic algorithm is at least $\max\Big\{ \frac{C(\mathcal{I}_{1}, p)}{C^{*}(\mathcal{I}_{1})},  \frac{C(\mathcal{I}_{2}, p)}{C^{*}(\mathcal{I}_{2})} \Big\}$.
\end{claim}
\par\fussy

\begin{claim}\label{observation_balance_between_the_ratios_for_significantly_separated_instances}
For any profile $p$, $\max\Big\{ \frac{C(\mathcal{I}_{1}, p)}{C^{*}(\mathcal{I}_{1})},  \frac{C(\mathcal{I}_{2}, p)}{C^{*}(\mathcal{I}_{2})} \Big\} \geq \frac{1}{2}\sigma^{\frac{\alpha - 1}{2\alpha - 1}}$.
\end{claim}

\begin{subproof}
\qedlabel{observation_balance_between_the_ratios_for_significantly_separated_instances}
Suppose that $x$ edges in $E - \lbrace e^{*} \rbrace$ is used by $p$. Then $C(\mathcal{I}_{1}, p) \geq x \cdot (\sigma + 1)$. Since the cost of routing all requests through $e^{*}$ in the instance $\mathcal{I}$ is $1 + \sigma^{\frac{\alpha}{2\alpha - 1}}$,
\begin{equation*}
\frac{C(\mathcal{I}_{1}, p)}{C^{*}(\mathcal{I}_{1})} \;\geq\; \frac{x(\sigma + 1)}{1 + \sigma^{\frac{1}{\alpha - 1}}} \; \geq \; \frac{x}{2} \cdot \sigma^{\frac{\alpha - 1}{2\alpha - 1}} \;.
\end{equation*}
Noticing that the $x + 1$ edges used by $p$ have the same value of the parameter $\xi_{e}$, we have:
\begin{equation*}
C(\mathcal{I}_{2}, p) \;\geq\; (x + 1) \Big[ \Big( \frac{N}{x + 1} \cdot \sigma^{\frac{1}{\alpha}}\Big)^{\alpha} \Big] \; = \; \sigma^{1 + \frac{\alpha}{2\alpha - 1}} \cdot (x + 1)^{1 - \alpha} \; .
\end{equation*}
By routing each request along a distinct edge, a solution with total cost $2\sigma^{1 + \frac{1}{2\alpha - 1}}$ can be obtained for the instance $\mathcal{I}'$. Therefore,
\begin{equation*}
\frac{C(\mathcal{I}_{2}, p)}{C^{*}(\mathcal{I}_{2})} \; \geq \; \frac{(x + 1)^{1 - \alpha}}{2} \cdot \sigma^{\frac{\alpha - 1}{2\alpha - 1}} \; .
\end{equation*}

Note that $x \in \mathbb{Z}_{\geq 0}$. When $x \geq 1$, $\displaystyle \max\Big\{ \frac{C(\mathcal{I}_{1}, p)}{C^{*}(\mathcal{I}_{1})},  \frac{C(\mathcal{I}_{2}, p)}{C^{*}(\mathcal{I}_{2})} \Big\} \geq \frac{C(\mathcal{I}_{1}, p)}{C^{*}(\mathcal{I}_{1})} \geq \frac{1}{2}\sigma^{\frac{\alpha - 1}{2\alpha -1}}$; while if $x = 0$, $\displaystyle \max\Big\{ \frac{C(\mathcal{I}_{1}, p)}{C^{*}(\mathcal{I}_{1})},  \frac{C(\mathcal{I}_{2}, p)}{C^{*}(\mathcal{I}_{2})} \Big\} \geq \frac{C(\mathcal{I}_{2}, p)}{C^{*}(\mathcal{I}_{2})} \geq \frac{1}{2}\sigma^{\frac{\alpha - 1}{2\alpha -1}}$.
\end{subproof}
This proof is completed by combining the claims above.
\end{proof}

For any random algorithm for the EER instance, it should generate an output for the instance $\mathcal{I}^{1}$ with the same probability distribution over the path profiles as the one for the instance $\mathcal{I}^{2}$. Therefore, it can be verified that:
\begin{corollary}\label{corollary_loss_in_approximation_ratio_of_random_algorithms_with_weight_scaling}
After employing a $\sigma$-ambiguous weight-scaling function, the approximation ratio of any random algorithm for the EER problem instances on the graph $G_{\sigma}$ is $\Omega\left( \sigma^{\frac{\alpha - 1}{2\alpha - 1}} \right)$.
\end{corollary}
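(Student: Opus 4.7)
The plan is to extend Theorem \ref{theorem_loss_in_approximation_ratio_caused_by_ambiguous_weight_scaling} from deterministic to randomized algorithms via a standard Yao-style argument, reusing the same ``hard pair'' of instances $\mathcal{I}_1$ and $\mathcal{I}_2$ constructed on the graph $G_\sigma$ in the proof of that theorem (one with unit weights, the other with weights $\sigma^{1/\alpha}$). The crucial observation is that since the weight-scaling function is $\sigma$-ambiguous, the scaled instances $\mathcal{I}_1^{\mathtt{MST}}$ and $\mathcal{I}_2^{\mathtt{MST}}$ actually presented to the algorithm coincide. Hence any randomized algorithm must induce one and the same probability distribution $\mathcal{D}$ over output path profiles on both instances; the randomness is internal to the algorithm and cannot depend on information the algorithm does not see.

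Given this distributional coupling, I would invoke Claim \ref{observation_balance_between_the_ratios_for_significantly_separated_instances}, which is a \emph{pointwise} statement: for every profile $p$, writing $r_k(p) = C(\mathcal{I}_k, p)/C^{*}(\mathcal{I}_k)$ for $k \in \{1,2\}$, we have $\max\{r_1(p), r_2(p)\} \geq \tfrac{1}{2}\sigma^{(\alpha-1)/(2\alpha-1)}$. Converting this pointwise bound into a bound on expectations over $p \sim \mathcal{D}$ uses only the non-negativity of the $r_k$'s and the chain
\[
\max\bigl\{\mathbb{E}_{p \sim \mathcal{D}}[r_1(p)],\, \mathbb{E}_{p \sim \mathcal{D}}[r_2(p)]\bigr\}
\;\geq\; \tfrac{1}{2}\bigl(\mathbb{E}[r_1(p)] + \mathbb{E}[r_2(p)]\bigr)
\;\geq\; \tfrac{1}{2}\,\mathbb{E}\bigl[\max\{r_1(p), r_2(p)\}\bigr]
\;\geq\; \tfrac{1}{4}\sigma^{(\alpha-1)/(2\alpha-1)},
\]
where the first inequality is ``average $\leq$ max,'' the second uses $a + b \geq \max(a,b)$ for $a,b \geq 0$, and the last is the pointwise bound. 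Since the (randomized) approximation ratio of the algorithm is at least the larger of its expected ratios on $\mathcal{I}_1$ and $\mathcal{I}_2$, this yields the claimed $\Omega\bigl(\sigma^{(\alpha-1)/(2\alpha-1)}\bigr)$ lower bound.

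I do not expect any serious obstacle: all combinatorial work---the design of $G_\sigma$, the computation of $C^{*}(\mathcal{I}_1)$ and $C^{*}(\mathcal{I}_2)$, and the case analysis on the number of edges used---is already encapsulated in Claim \ref{observation_balance_between_the_ratios_for_significantly_separated_instances}. The only subtlety is the factor-of-$2$ loss incurred when passing from $\mathbb{E}[\max\{r_1,r_2\}]$ to $\max\{\mathbb{E}[r_1], \mathbb{E}[r_2]\}$, but this does not affect the asymptotic order. Thus the corollary follows by a short, essentially two-line reduction to the deterministic theorem.
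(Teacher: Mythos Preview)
Your proposal is correct and follows the same approach as the paper, which itself gives only a one-sentence sketch: noting that the algorithm's output distribution must be identical on $\mathcal{I}_1$ and $\mathcal{I}_2$ (since the scaled instances coincide), and then asserting that the bound ``can be verified.'' Your Yao-style averaging argument is exactly the routine verification the paper omits, and the factor-of-two loss you incur is harmless for the asymptotic statement.
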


As mentioned earlier, the approximation ratio $\Omega\left( \sigma^{\frac{\alpha - 1}{2\alpha - 1}} \right)$ is worse than the approximation ratio in our main result when $\alpha > \frac{3 + \sqrt{5}}{2} \approx 2.618$. Furthermore, by Theorem \ref{theorem_final_approximation_ratio_of_AREA}, we cannot expect an approximation ratio better than our main result when $\alpha \leq 2.618$, either, if we combine \texttt{CPR} with a weight-scaling function.
\LongVersionEnd 

\LongVersion
\section*{Acknowledgements}
We would like to thank Tim Roughgarden for very useful discussions.
\LongVersionEnd




\clearpage
\bibliographystyle{alpha}
\bibliography{references}

\end{document}